 \newtheorem{ass}{Assumption}[section]
 \newtheorem{lem}{Lemma}[section]
\title{\LARGE \bf
Regret Analysis of Learning-Based Linear Quadratic Gaussian Control with Additive Exploration*
}
\author{Archith Athrey$^{1}$, Othmane Mazhar$^{2}$, Meichen Guo$^{1}$, \\ Bart De Schutter$^{1}$ and Shengling Shi$^{1}$
\thanks{*This research has received funding from the
European Research Council (ERC) under the European Union’s Horizon
2020 research and innovation programme (Grant agreement No.101018826 - CLariNet).}
\thanks{$^{1}$Archith Athrey, Meichen Guo, Bart De Schutter, and Shengling Shi are with the Delft Center for Systems and Control,
        Delft University of Technology, Mekelweg 5, 2628 CD Delft, The Netherlands
        {\tt\small a.athrey@student.tudelft.nl}, {\tt\small \{m.guo, S.Shi-3, b.deschutter\}@tudelft.nl}}%
\thanks{$^{2}$Othamne Mazhar is with Université Paris Cité - Grands Moulins Campus, 5 Rue Thomas Mann, 75013 Paris, France
        {\tt\small  omazhar@lpsm.paris} }%
}
\begin{document}

\maketitle
\thispagestyle{plain}
\pagestyle{plain}

\begin{abstract}
In this paper, we analyze the regret incurred by a computationally efficient exploration strategy, known as naive exploration, for controlling unknown partially observable systems within the Linear Quadratic Gaussian (LQG) framework. We introduce a two-phase control algorithm called LQG-NAIVE, which involves an initial phase of injecting Gaussian input signals to obtain a system model, followed by a second phase of an interplay between naive exploration and control in an episodic fashion. We show that LQG-NAIVE achieves a regret growth rate of $\tilde{\mathcal{O}}(\sqrt{T})$, i.e., $\mathcal{O}(\sqrt{T})$ up to logarithmic factors after $T$ time steps, and we validate its performance through numerical simulations. Additionally, we propose LQG-IF2E, which extends the exploration signal to a `closed-loop' setting by incorporating the Fisher Information Matrix (FIM). We provide compelling numerical evidence of the competitive performance of LQG-IF2E compared to LQG-NAIVE.
\end{abstract}

\section{INTRODUCTION}
In optimal control, Linear Quadratic (LQ) control has been a benchmark for decades and is extensively used to control complex systems. Moreover, the insights gleaned from studying LQ control problems can be translated into a critical understanding of more complex control problems. In this work, we address adaptive control of \textit{unknown} partially observable linear dynamical systems in the Linear Quadratic Gaussian (LQG) setting. Adaptive control caters to the control of unknown systems, where the controller is updated online from the collected data, to optimize some performance measures \cite{matni2019self}. The LQG control problem is one of the key issues in adaptive control \cite{bertsekas2012dynamic}. The seemingly benign difference of not being able to measure the true states will in fact pose a significant challenge when controlling a system with unknown dynamics \cite{lale2021adaptive}. The errors in the state estimates due to approximate models could potentially accumulate to have a significant impact on the control performance. 

In adaptive control, there exists a conflict between learning and control, which inherently necessitates a trade-off between exploration (learning) and exploitation (control). A metric called regret is a way to quantify an adaptive controller's performance, i.e., how well the adaptive controller balances exploration and exploitation \cite{abbasi2011regret}. The regret quantifies the cumulative performance gap over a finite time horizon between the adaptive controller and the ideal controller having full knowledge of the true system dynamics. For the adaptive control in the LQG setting, several works have contributed to statistical guarantees on both learning and control \cite{mania2019certainty,lale2020logarithmic,lale2021adaptive,kargin2023thompson,ziemann2022regret}. 

The adaptive control algorithm in \cite{lale2021adaptive} uses an exploration approach called optimism in the face of uncertainty, which utilizes model parameter uncertainty to engender optimism in its deployed policy. Although this scheme is shown to guarantee a regret growth of $\tilde{\mathcal{O}}(\sqrt{T})$, i.e., $\mathcal{O}(\sqrt{T})$ after $T$ time steps up to logarithmic factors in $T$, finding optimistic model parameters involves non-convex optimization. In \cite{mania2019certainty}, the performance of the Certainty Equivalence Controller (CEC) is analyzed, but it is analyzed without any exploration or online model updates. A Thompson-sampling-based approach is adopted in \cite{kargin2023thompson}, which exploits parameter uncertainty and promises computational efficiency. This approach is also shown to guarantee a regret growth of $\tilde{\mathcal{O}}(\sqrt{T})$. The results in \cite{ziemann2022regret} show that the best regret upper bound that one can achieve is $\Tilde{\mathcal{O}}(\sqrt{T})$, for the LQ setting. In \cite{lale2020logarithmic}, under an additional assumption that the optimal controller persistently excites the true underlying system, a convex reparametrization of a linear dynamical controller is considered, which guarantees a polylogarithmic regret upper bound. However, this assumption can be restrictive since the optimal controller in the LQ setting typically cannot persistently excite the true underlying system \cite{ziemann2022regret}.

In the Linear Quadratic Regulator (LQR) setting, it has been shown that naive exploration, which involves a simple CEC with an additive excitation signal, whose covariance diminishes at a rate $\mathcal{O}(1/\sqrt{t})$ for intermediate time step $t \leq T$, is sufficient to guarantee a regret growth of $\tilde{\mathcal{O}}(\sqrt{T})$ \cite{mania2019certainty, simchowitz2020naive}. Whereas, naive-exploration-based control with regret guarantee in the LQG setting is still an open problem.

In the present work, we investigate naive exploration in the LQG setting. We propose two adaptive control algorithms, LQG-NAIVE and LQG-IF2E, which operate in an episodic fashion and conduct exploration by injecting additive Gaussian signals. While the covariance of the Gaussian exploration signal in LQG-NAIVE decreases over episodes, the covariance of the exploration signal in LQG-IF2E adjusts adaptively to the data informativity by exploiting the Fisher Information Matrix (FIM). The latter exploration strategy is inspired by the approach in \cite{colin2022regret} designed for the LQR setting.

The structure of the adaptive control algorithms proposed in this work is similar to the one in \cite{lale2021adaptive}; however, the main difference lies in the exploration strategy: the algorithm in \cite{lale2021adaptive} employs optimism in the face of uncertainty and thus requires solving non-convex optimization problems online for exploration, whereas in this work, we consider an additive Gaussian signal for exploration. This additive exploration avoids solving optimization problems online and is thus much more computationally efficient. This difference in the exploration strategy poses a major challenge to the regret analysis of the proposed algorithms. In this work, we establish a regret growth rate of $\tilde{\mathcal{O}}(\sqrt{T})$ for LQG-NAIVE. This is achieved by exploiting and extending the analysis techniques for the naive exploration in the LQR setting \cite{simchowitz2020naive} and the techniques for analyzing the CEC in the LQG setting \cite{mania2019certainty}. Moreover, the performance of LQG-NAIVE and LQG-IF2E is validated in numerical simulations.

The contributions of this work can be summarized as follows:
\begin{itemize}
    \item A novel regret guarantee of $\tilde{\mathcal{O}}(\sqrt{T})$ is established for a naive-exploration-based adaptive control algorithm in the LQG setting.

    \item A novel adaptive control algorithm that exploits the FIM for exploration is proposed for the LQG setting and is validated in simulations.
\end{itemize}

\section{Preliminaries}

\subsection{Notations}
Let $\mathbb{Z}^+$ denote the set of non-negative integers. The Euclidean norm of a vector $x$ is denoted by $||x||$. For a matrix $X \in \mathbb{R}^{n \times m}$, $||X||$ denotes the spectral norm, $\rho(X)$ denotes the spectral radius, $||X||_\mathrm{F}$ denotes the Frobenius norm, $X^\top$ denotes its transpose, $X^\dagger$ denotes the Moore-Penrose inverse, and $\text{Tr}(X)$ denotes the trace. The determinant of a matrix $X$ is denoted by $\text{det}(X)$. The $j^\text{th}$ singular value of a matrix $X$ is denoted by $\sigma_j(X)$, where $\sigma_\text{max}(X) \coloneqq \sigma_1(X) \geq \sigma_2(X) \geq ... \geq \sigma_\text{min}(X) \coloneqq \sigma_{\min(n,m)}(X) > 0$. Similarly, $\lambda_\text{min}(X)$ and $\lambda_\text{max}(X)$ have analogous meanings for the eigenvalues of $X$. The identity matrix with the appropriate dimension is denoted by $I$ and similarly, $0$ is a matrix or a vector of $0$'s with appropriate dimensions. Further, $\mathcal{N}(\mu, \Sigma)$ denotes a multivariate normal distribution with a mean vector $\mu$ and a covariance matrix $\Sigma$. The expectation operator is denoted by $\mathbb{E}$, and $\mathbb{P}$ denotes the probability of an event occurring. The inequality $f \lesssim g$ denotes $f \leq C g$ for a universal constant $C$, and $f \lessapprox g$ denotes informal inequality. The informal inequality is used when it is required to hide some of the terms in $g$. The Kronecker product is denoted by $\otimes$, \verb|vec| denotes the vectorization operator. Further, \verb|D|$_\theta$ denotes Jacobian, \verb|d|$_\theta$ denotes differential, and $\nabla_\theta$ denotes gradient, with respect to $\theta$.

The notation $\text{diag}(\cdot)$ denotes a block diagonal matrix with the arguments as the blocks along the main diagonal. Further, $\mathcal{N}(\mu, \Sigma)$ denotes a multivariate normal distribution with a mean vector $\mu$ and a covariance matrix $\Sigma$. The expectation operator is denoted by $\mathbb{E}$, and $\mathbb{P}$ denotes the probability of an event occurring.  Given two functions $f(\cdot)$ and $g(\cdot)$, whose domain and co-domain are subsets of non-negative real numbers, we write $f(x) = \mathcal{O}(g(x))$ if $\exists$ $c>0$ and $\Tilde{x} \geq 0$ such that $f(x) \leq c g(x) $ for all $ x \geq \Tilde{x}$. We write $f(x) = \Omega(g(x))$ if $\exists$ $c>0$ and $\Tilde{x} \geq 0$ such that $f(x) \geq c g(x)$  for all $x \geq \Tilde{x}$. The notations $\mathcal{\Tilde{O}(\cdot)}$ and $\Tilde{\Omega}(\cdot)$ ignore constants that depend on the true system parameters and poly-logarithmic factors that depend on the number of time steps $T$. The inequality $f(x) \lesssim g(x)$ denotes $f(x) \leq c g(x)$ for a universal constant $c > 0$. The notation $\mathrm{poly}(\cdot)$ denotes a polynomial function. In this paper, $\hat{X}$ is used to denote an estimate of the true quantity $X$. Further, at time step $t$, $\hat{X}_t$ is used to denote an estimate that is available for the true quantity $X$.

\subsection{Problem setting}
A discrete-time Linear Time-Invariant (LTI) system is characterized by the state-space equation
\vspace{0mm}
\begin{equation}
\begin{split}
    x_{t+1} & = Ax_t + Bu_t + w_{t},\hspace{2mm} w_t \sim \mathcal{N}(0, \sigma_w^2I), \\
    y_t & = Cx_t + z_{t},\hspace{2mm} z_t \sim \mathcal{N}(0, \sigma_z^2I),
\end{split}
\label{eq: LQG state space equation}
\end{equation}
for $t = 0,1,2,\dots$, $A \in \mathbb{R}^{n_x\times n_x}$, $B \in \mathbb{R}^{n_x \times n_u}$, and $C \in \mathbb{R}^{n_y \times n_x}$. At time step $t$, $u_t \in \mathbb{R}^{n_u}$ is the input, $x_t \in \mathbb{R}^{n_x}$ is the state, $w_{t} \in \mathbb{R}^{n_x}$ is the process noise, $y_t \in \mathbb{R}^{n_y}$ is the system output, and $z_{t} \in \mathbb{R}^{n_y}$ is the measurement noise. Let the model parameter of the true system be $\Theta = (A, B, C)$. To measure the performance of a controller, the cost $c_t$ incurred at time step $t$ is defined as
\begin{equation*}
    c_t = y_t^\top Qy_t + u_t^\top Ru_t,
    \label{eq: LQG cost at time step t}
\end{equation*}
where $Q \in \mathbb{R}^{n_y \times n_y}$ and $R \in \mathbb{R}^{n_u \times n_u}$ are positive-definite. In this work, the infinite-horizon setting is considered, wherein the goal is to design an input signal such that the long-term average expected cost is minimized. The long-term average expected cost in this setting is given by
\vspace{0mm}
\begin{equation*}
J = \lim_{T \rightarrow \infty} \frac{1}{T} \mathbb{E} \left[ \sum_{t = 0}^{T-1} c_t \right]\text{ s.t. } \eqref{eq: LQG state space equation}.
\label{eq: average expected cost, LQG infinite horizon}
\end{equation*}
Let $J_*$ denote the optimal long-term average expected cost for the true system with parameter $\Theta$. The true system is assumed to be controllable and observable to ensure that $J_*$ exists \cite{bertsekas2012dynamic}. A linear system with parameter $\Theta$ is controllable if the controllability matrix denoted by $\mathbf{C}(A,B,n_x)$, where 
\vspace{0mm}
\begin{equation*}
    \mathbf{C}(A,B,n_x) \coloneqq 
    \begin{bmatrix}
        B & AB& {A}^2B & . & .& . & {A}^{n_x-1}B
    \end{bmatrix},
    \label{eq: controllability matrix}
\end{equation*}
has full row rank. Similarly, a linear system with parameter $\Theta$ is observable if the observability matrix denoted by $\mathbf{O}(A,C,n_x)$, where 
\vspace{0mm}
\begin{equation*}
    \mathbf{O}(A,C,n_x) \coloneqq 
    \begin{bmatrix}
        C \\ CA\\ C{A}^2 \\ .\\.\\. \\ C{A}^{n_x-1}
    \end{bmatrix},
    \label{eq: observability matrix}
\end{equation*}
has full column rank. An optimal feedback control law minimizing $J$ has the following form: 
 \begin{equation}
    u_t = -K\hat{x}_{t|t, \Theta},
    \label{eq: optimal control law for LQG}
\end{equation}
where $\hat{x}_{t|t,\Theta}$ is the state estimate given the true parameter value $\Theta$ and the observations until time step $t$, and
 \begin{equation*}
    K = ({B}^\top P B + R)^{-1}{B}^\top P A,
    \label{eq: optimal control gain}
\end{equation*}
with $P$ solving the following Discrete-Time Algebraic Riccati Equation (DARE) \cite{bertsekas2012dynamic}:
\begin{equation}
    P = C^\top Q C + {A}^\top P A - {A}^\top P B ({B}^\top P B + R )^{-1}{B}^\top P A.
    \label{eq: DARE to compute K}
\end{equation}
The state estimate $\hat{x}_{t|t,\Theta}$ in \eqref{eq: optimal control law for LQG} is obtained from the Kalman filter:
\begin{equation}
    \begin{split}
    \hat{x}_{t|t,\Theta} & = (I - LC)\hat{x}_{t|t-1,\Theta} + Ly_t,\\
    \hat{x}_{t+1|t,\Theta} & = A\hat{x}_{t|t,\Theta} + Bu_t,\\
    L & = \Sigma {C}^\top ( C \Sigma {C}^\top + \sigma_z^2I )^{-1},
    \end{split}
    \label{eq: measurement and time update}
\end{equation}
where $\hat{x}_{0|-1,\Theta} = 0$, $L$ is the Kalman gain, and $\Sigma$ is the solution to the following DARE:
\begin{equation*}
    \Sigma = \sigma_w^2I + A \Sigma A^\top - {A} \Sigma {C}^\top (C \Sigma {C}^\top + \sigma_z^2I )^{-1}{C} \Sigma {A}^\top.
    \label{eq: DARE for Kalman filter}
\end{equation*}
Further, the system is assumed to start at the steady state, i.e., $x_0 \sim \mathcal{N}(0,\Sigma)$. In \eqref{eq: measurement and time update}, the two expressions concerning $\hat{x}_{t|t,\Theta}$ and $\hat{x}_{t+1|t,\Theta}$ can be combined to obtain the innovation form \cite{verhaegen2007filtering}:
\begin{equation}
    \begin{split}
        \hat{x}_{t+1|t,\Theta} & =  A\hat{x}_{t|t-1,\Theta} + Bu_t + Fe_t,\\
        e_t & =  C\left(x_t - \hat{x}_{t|t-1,\Theta}\right) + z_t,\\
        e_t & \sim \mathcal{N}(0, \Sigma_e),
    \end{split}
    \label{eq: Innovation model LQG}
\end{equation}
    where $\Sigma_e = C \Sigma {C}^\top  + \sigma_z^2I$ and $F = AL$ is the Kalman gain in the innovations form. The innovations form \eqref{eq: Innovation model LQG} can also be expanded to obtain the one-step-ahead prediction model \cite{verhaegen2007filtering}:
\vspace{0mm}
\begin{equation}
    \begin{split}
        \hat{x}_{t+1|t,\Theta} & = (A - FC)\hat{x}_{t|t-1,\Theta} + Bu_t + Fy_t,\\
        \hat{y}_{t+1|t,\Theta} & =  C\hat{x}_{t+1|t,\Theta},
    \end{split}
    \label{eq: prediction model LQG}
\end{equation}
where the Kalman gain here ensures that $A-FC$ is asymptotically stable. There exists a closed-form expression for the optimal long-term average expected cost when applying the optimal feedback control law \eqref{eq: optimal control law for LQG} \cite{lale2020regret_prior}:
\vspace{0mm}
\begin{equation}
    \begin{split}
        J_* = \text{Tr}\left(C^\top Q C \Bar{\Sigma}\right) + \sigma^2_z \text{Tr}\left(Q\right) + \text{Tr}\left(P(\Sigma - \Bar{\Sigma}) \right),
    \end{split}
    \label{eq: optimal LT avg exp cost}
\end{equation}
where
\vspace{0mm}
\begin{equation*}
    \Bar{\Sigma} = \Sigma - \Sigma {C}^\top \left( C \Sigma {C}^\top + \sigma^2_zI \right)^{-1} C \Sigma.
    \label{eq: covariance of xt - x_hat_tt}
\end{equation*}

In this work, the true model parameter $\Theta$ is unknown, whereas $Q$ and $R$ are user-defined (known). The main problem considered in this work is to design a controller that computes an input $u_t$ based on the past observations $\mathcal{I}_t$:
\begin{equation}
   \mathcal{I}_t = \{ (y_k, u_k) \mid k=0,1,\dots, t-1\} \cup \{y_t\}.
   \label{eq: observations I_t}
\end{equation}
Moreover, the designed controller should perform optimally with respect to specific metrics. Following the literature \cite{abbasi2011regret}, we consider the notion of regret as our metric. Given a finite time horizon of length $T$, the cumulative regret $ \mathrm{Regret}(T)$ is given by 
\vspace{0mm}
\begin{equation}
     \mathrm{Regret}(T) = \sum_{t = 0}^{T-1} (c_t - J_*),
    \label{eq: cumulative regret}
\end{equation}
where $c_t$ is the cost incurred by the controller at time step $t$. It is desired to have a controller whose $ \mathrm{Regret}(T)$ grows sub-linearly, i.e., $\lim_{T\to \infty} \mathrm{Regret}(T)/T \rightarrow 0$. In this case, the average performance of the adaptive controller converges to the optimal average performance $J_\star$.

As $\Theta$ is unknown, given another parameter value $\Theta'= (A',B',C')$, the notations $K(\Theta')$, $L(\Theta')$, and $P(\Theta')$ denote the control gain in \eqref{eq: optimal control law for LQG}, the Kalman gain in \eqref{eq: measurement and time update}, and the solution of the DARE in \eqref{eq: DARE to compute K} respectively, obtained from the parameter value $\Theta'$. The main assumptions of this work are summarized as follows and are considered to hold throughout the entire paper.
\begin{ass}
$Q$, $R$ are positive-definite, $x_0 \sim \mathcal{N}(0,\Sigma)$, and $\hat{x}_{0|-1,\Theta} = 0$. The state dimension $n_x$ is known. 
\end{ass}

The positive definiteness of $Q$ is required to quantify the sub-optimality in the long-term average cost \cite[Th. 3]{mania2019certainty}. Since the convergence of the Kalman filter gain is exponentially fast, the assumption on $x_0$ is not restrictive \cite{lale2021adaptive}. 

\ass \label{assumption:setS} The unknown model parameter $\Theta$ is an element in a set $\mathcal{S}$ satisfying
    \[
    \mathcal{S} \subseteq \left\{ \Theta'= (A',B',C') \left | 
    \begin{alignedat}{2}
        & \rho(A') < 1, \\
        & (A', B') \text{ is controllable,}\\
        & (A', C') \text{ is observable,}\\
        & (A', F') \text{ is controllable.}\\
    \end{alignedat} \right.
        \right\}. \]
        
Assumption \ref{assumption:setS} is standard in the literature of finite-sample system identification and regret minimization \cite{lale2021adaptive, oymak2019non,mania2019certainty,sarkar2019nonparametric, tsiamis2019finite}. The stability of the open-loop plant is assumed in Assumption~\ref{assumption:setS} to avoid explosive behavior during the initial system identification phase \cite{lale2021adaptive}. 

\defn[\cite{kargin2023thompson}] Given an invertible matrix $\mathbf{T} \in \mathbb{R}^{n_x \times n_x}$ and $\hat{\Theta}_t = (\hat{A}_t, \hat{B}_t, \hat{C}_t)$, the estimated model parameters at time step $t$, we define the following model mismatch pseudo-metric:
\[
 \tau(\hat{\Theta}_t, \Theta) \coloneqq  \min_{\mathbf{T} \in \mathrm{GL}_n} \max \left\{ 
\begin{alignedat}{2}
    & || \hat{A}_t - \mathbf{T}^\top A\mathbf{T}||,\\
    & || \hat{B}_t - \mathbf{T}^\top B||,\\
    &  || \hat{C}_t - C\mathbf{T}||\\
\end{alignedat} 
    \right\}, \]
which is invariant under similarity transformations.

\subsection{Closed-loop system identification}
In this work, we adopt a model-based control approach, in which an estimate of the unknown parameter $\Theta$ is obtained and continuously updated online using a system identification technique. We specifically use the subspace identification \cite{lale2020logarithmic}. We consider the predictor form in \eqref{eq: prediction model LQG}, and for the sake of brevity, we introduce the notation $\Bar{A} = (A - FC)$. At time step $t$, we examine the system's evolution over the last $H$ time steps, with the condition that $t \geq H.$ Then we obtain 
\vspace{0mm}
\begin{equation}
\begin{split}
    y_t & = \mathbf{M}\phi_t + e_t + C\Bar{A}^H\hat{x}_{t-H|t-H-1,\Theta}, 
\end{split}
\label{eq: y_t expansion under predictor form}
\end{equation}
where
\vspace{0mm}
\begin{equation}
    \begin{split}
        \mathbf{M} & \coloneqq 
        \begin{bmatrix}
             M^{(0)} & \cdots & M^{(H-1)} 
        \end{bmatrix} \in \mathbb{R}^{n_y \times (n_y + n_u)H},\\
    \end{split}
    \label{eq: defining M}
\end{equation}
with $M^{(i)} \coloneqq C\Bar{A}^i[F\hspace{1mm}B]$, and $\phi_t \in \mathbb{R}^{(n_y + n_u)H}$ is defined as
\begin{equation}
    \begin{split}
        \phi_t & \coloneqq
        \begin{bmatrix}
            y_{t-1}^\top &\cdots& y_{t-H}^\top & u_{t-1}^\top & \cdots & u_{t-H}^\top
        \end{bmatrix}^\top.
    \end{split}
\end{equation}
Since $\Bar{A}$ is stable, the last term in \eqref{eq: y_t expansion under predictor form} becomes negligible for a large enough $H$, specifically for $H \geq \Bar{H}$ with some $\Bar{H} = \Omega(\log T)$ \cite{lale2020logarithmic}. The exact expression of $\Bar{H}$ can be found in \eqref{eq: value of H for CL}. Now with $\{y_i\}_{i = 0}^{t}$ and $\{u_i\}_{i = 0}^{t-1}$, we have
\vspace{0mm}
\begin{align}
    Y_{t} & = \Phi_{t} {\mathbf{M}}^\top  + E_{t} + N_t \nonumber \\ 
    \implies Y_{t} & \approx \Phi_{t} {\mathbf{M}}^\top  + E_{t},
    \label{eq: input-output trajectory representation}
\end{align}
where
\vspace{0mm}
\begin{equation*}
\begin{split}
    Y_{t} & = 
\begin{bmatrix}
    y_H & y_{H+1} & ... & y_{t}
\end{bmatrix}^\top  \in \mathbb{R}^{N \times n_y},\\
\Phi_{t} & = 
\begin{bmatrix}
    \phi_H & \phi_{H+1} & ... & \phi_{t}
\end{bmatrix}^\top  \in \mathbb{R}^{N \times (n_y + n_u)H},\\
E_{t} & =
\begin{bmatrix}
    e_H & e_{H+1} & ... & e_{t}
\end{bmatrix}^\top  \in \mathbb{R}^{N \times n_y},\\
N_{t} & = 
\begin{bmatrix}
    C\Bar{A}^H\hat{x}_{0|-1,\Theta} & C\Bar{A}^H\hat{x}_{1|0,\Theta} & ... & C\Bar{A}^H\hat{x}_{t -H|t - H -1,\Theta}
\end{bmatrix}^\top  \in \mathbb{R}^{N \times n_y},
\end{split}
\end{equation*}
where $N = t - H + 1$. The approximation in \eqref{eq: input-output trajectory representation} comes from the fact that $N_t$ becomes negligible for a large enough $H$. Therefore, from \eqref{eq: input-output trajectory representation}, the Markov parameters $\mathbf{M}$ of the unknown true system can be estimated using regularized least squares \cite{lale2021adaptive}:
\vspace{0mm}
\begin{equation}
\begin{split}
    \hat{\mathbf{M}}_t^\top  & = (\Phi_{t}^\top \Phi_{t} + \lambda I)^{-1}\Phi_{t}^\top  Y_{t},
\end{split}
\label{eq: RLS to estimate Markov parameters in closed-loop}
\end{equation}
where $\lambda > 0$ is a regularization parameter. Define $V_t = \Phi_{t}^\top \Phi_{t} + \lambda I$. 

Following \cite{lale2020logarithmic}, from $\hat{\mathbf{M}}_t$, a subroutine called SYSID will be deployed in the control algorithms of this work to obtain an estimate of the model parameters $\hat{A}_t, \hat{B}_t, \hat{C}_t, \hat{L}_t$. This subroutine is a variation of the classical Ho-Kalman realization algorithm \cite{ho1966effective}, and details of this identification approach can be found in Algorithm \ref{algorithm: SYSID}, which is presented for the sake of completeness in the Appendix.

\section{Adaptive control with additive exploration}
\subsection{Naive exploration}
This section presents LQG-NAIVE (Algorithm \ref{algorithm: LQG-NAIVE}), which provides a computationally efficient method to address regret minimization in the LQG setting. Overall, the algorithm consists of two phases: the warm-up phase and the adaptive-control phase. 

\textbf{Warm-up phase:} To obtain an initial CEC that can stabilize the unknown true system, an initial model parameter estimate is obtained. This is achieved through pure exploration by injecting Gaussian input signals for $T_\text{w}$ time steps to effectively excite the system and then conducting system identification. The length $T_\text{w}$ of this phase depends on how accurate the initial estimate needs to be \cite{lale2021adaptive}. Moreover, we let $T_\mathrm{w} \geq H$.

\textbf{Adaptive-control phase:} Following the warm-up phase, the algorithm proceeds in an episodic fashion. The number of time steps $l_k$ of the $k^{\text{th}}$ episode satisfies $l_k = 2^k T_\text{w}$, for $k = 0,1,2,\dots$ It holds that the time step at the beginning of the $k^{\text{th}}$ episode equals $l_k$. Since $l_{k+1} = 2l_k$, the total number of episodes $k_\text{fin}$ within a time horizon of length $T$ is $\lfloor \log_2(T/T_{\text{w}})\rfloor$, where $\lfloor \cdot \rfloor$ denotes the floor function. 

At the beginning of the $k^{\text{th}}$ episode, LQG-NAIVE updates the parameter estimate to $\hat{\Theta}_{l_k}$. Then, within this episode, the corresponding CEC with an additive Gaussian excitatory signal is deployed:
\vspace{0mm}
\begin{equation}
    \begin{split}
        u_t & = -K(\hat{\Theta}_{l_k}) \hat{x}_{t|t,\hat{\Theta}_{l_k}} + \eta_t,\\
        \eta_t & \sim \sigma_{\eta_t}\mathcal{N}(0, I),\\
        \sigma^2_{\eta_t} & = \left(\gamma/\sqrt{l_k} \right),
    \end{split}
    \label{eq: LBC policy naive}
\end{equation}
where $\gamma > 0$ is a tuning parameter, and $K(\hat{\Theta}_{l_k})$ is the optimal feedback gain computed from $\hat{\Theta}_{l_k} = ( \hat{A}_{l_k}, \hat{B}_{l_k}, \hat{C}_{l_k})$. 

\begin{algorithm}
\caption{LQG-NAIVE}
\begin{algorithmic}[1]
\State Initialize $Q, R, \gamma > 0$, $H$, $T_\text{w}$, $n_x$, $n_y$, $n_u$, $\sigma_u^2$
\Procedure{Warm-up}{}
\For{$t = 0,1,...,T_\text{w}-1$}
\State Inject $u_t \sim \mathcal{N}(0,\sigma^2_u I)$ 
\EndFor
\State Store $\{(y_t,u_t)\}_{t=0}^{T_\text{w}-1}$
\EndProcedure
\Procedure{Adaptive Control}{}
\For{$k = 0,1,...$}
\State Let $l_k = 2^k T_\text{w}$
\State Calculate $\mathbf{\hat{M}}_{l_k}$ using $ \mathcal{I}_{l_k}$ and \eqref{eq: RLS to estimate Markov parameters in closed-loop}
\State Perform SYSID \cite{lale2020logarithmic} to obtain $\hat{A}_{l_k}, \hat{B}_{l_k}, \hat{C}_{l_k}, \hat{L}_{l_k}$
\For{$t = 2^k T_\text{w},...,2^{k+1} T_\text{w}-1$}
\State Inject $u_t$ as in \eqref{eq: LBC policy naive}  
\EndFor
\EndFor
\EndProcedure
\end{algorithmic}
\label{algorithm: LQG-NAIVE}
\end{algorithm}

\subsection{FIM-based exploration (LQG-IF2E)}
The adaptive control algorithm based on the FIM employs the same structure as in Algorithm 1, with the only difference at step $14$ for exploration. To present the algorithm, called the LQG-IF2E, we first provide an intuition for using the FIM. The recent work \cite{colin2022regret} uses the FIM to explicitly design the exploration signal for the LQR setting. This is motivated by \cite{ziemann2022regret} and \cite{ziemann2021uninformative}, where it is shown that the regret bounds the FIM. This motivates the use of the FIM to influence the rate of regret growth. Moreover, the FIM reflects the informativity of data and thus can be exploited to generate informative data for model re-estimation.
\vspace{0mm}
\defn[\cite{ziemann2022regret}] For a family of parameterized probability densities $\{p_\theta, \theta \in 
\Bar{\mathcal{S} } \}$ of a random variable $x \in \mathbb{R}^n$, where $\Bar{\mathcal{S}} \subseteq  \mathbb{R}^d$, the FIM $\Bar{I}_p(\theta) \in \mathbb{R}^{d \times d}$ is given by
\vspace{0mm}
\begin{equation}
    \Bar{I}_p(\theta) = \int_{\mathbb{R}^n} \nabla_\theta \log p_\theta(x) \left( \nabla_\theta \log p_\theta(x) \right)^\top p_\theta(x) dx,
\end{equation}
whenever the integral exists. 

In the present work, the FIM is constructed based on the approximate model \eqref{eq: input-output trajectory representation} and the output measurements. In this case, the FIM under any policy $\pi$ after collecting the observations $\{y_i\}_{i = 0}^{t}$ and $\{u_i\}_{i = 0}^{t-1}$ for $t \geq H$, is given by 
\vspace{0mm}
\begin{equation}
    I_{H,t} = \sum_{i = H}^{t} \mathbb{E}\left[ \phi_i \phi_i^\top \otimes \Sigma_e^{-1} \right].
    \label{eq: FIM equation}
\end{equation}
The proof of \eqref{eq: FIM equation} is presented in Lemma \ref{Lemma: FIM} in Appendix F for the sake of completeness. The FIM cannot be constructed for the first $H$ time steps since the $\phi_t$ vector is not defined during this period. This is acceptable because after the warm-up phase, sufficient data is collected, i.e., $T_\text{w} \geq H$, to construct the FIM, which is then used in the adaptive control phase. 

There is however a caveat in using the FIM: the FIM requires knowledge of the unknown true parameter $\Theta$, as in \eqref{eq: FIM equation}. To circumvent this issue, we evaluate the FIM instead at $\hat{\Theta}_{l_k}$. Even if $\hat{\Theta}_{l_k}$ can only converge to a similarity transformation of $\Theta$, the eigenvalues of a matrix are preserved under similarity transformation, and thus one can evaluate the FIM with $\hat{\Theta}_{l_k}$. For the simplicity of notations, we use $\hat{\Theta}_t$ here to denote the estimated parameter at time step $t$ to estimate the FIM, and note that in Algorithm~1, $\hat{\Theta}_t = \hat{\Theta}_{l_k}$ when $t \in [l_k, \hspace{1mm}l_{k+1})$. This holds for other estimates as well, e.g., $\mathbf{\hat{M}}_t = \mathbf{\hat{M}}_{l_k}$ when $t \in [l_k, \hspace{1mm}l_{k+1})$. Therefore, we can estimate the `true' FIM as 
\vspace{0mm}
\begin{equation}
    \hat{I}_{H,t} = \sum_{i = H}^{t} \phi_i \phi_i^\top \otimes \hat{\Sigma}_{e,i}^{-1},
    \label{eq: estimated FIM}
\end{equation}
and
\begin{equation*}
    \hat{\Sigma}_{e,i} = \frac{1}{i+1}\sum_{j = 0}^i \left( y_j -  \hat{y}_{j|j-1,\hat{\Theta}_{j-1}}\right) \left( y_j -  \hat{y}_{j|j-1,\hat{\Theta}_{j-1}}\right)^\top,
\end{equation*}
with $\hat{y}_{j|j-1,\hat{\Theta}_{j-1}} = \hat{C}_{j-1}\hat{x}_{j|j-1,\hat{\Theta}_{j-1}}$.

To ensure that the FIM is not ill-conditioned, the exploration strategy in \eqref{eq: LBC policy naive} is used until $\lambda_\text{min}\left(\hat{I}_{H,t}\right)$ is larger than some tolerance value. After achieving this tolerance, the FIM-based exploration strategy is deployed. That is, given $c_\text{tol} >0$, if $\lambda_\text{min}\left(\hat{I}_{H,t}\right) \geq c_\text{tol}$,
\vspace{0mm}
\begin{equation}
\begin{split}
    u_t & = - K(\hat{\Theta}_{l_k}) \hat{x}_{t|t, \hat{\Theta}_{l_k}} + \eta_t,\\
    \eta_t & \sim \left( \alpha/\lambda_\text{min}\left(\hat{I}_{H,t}\right) \right)^{1/2} \mathcal{N}( 0, I ),
\end{split}
\label{eq: LBC policy FIM}
\end{equation}
where $\alpha > 0$ is a tuning parameter. Given that $\hat{I}_{H,t}$ depends on past inputs and outputs through the vector $\phi_t$, the FIM-based exploration strategy is a type of `closed-loop' exploration strategy capable of adaptively changing the magnitude of the exploration signal to the `degree' of informativity.

\section{ Regret guarantee}

To provide a finite-time regret guarantee when deploying LQG-NAIVE, several auxiliary results are required. Since there is an aspect of learning and control in the paradigm of adaptive control, it is imperative to provide corresponding guarantees. From a broader perspective, we need to show that the model parameter estimation error is monotonically decreasing and the input-output signals of the system remain bounded during the adaptive-control phase (Lemma \ref{Lemma: bound on the states and outputs during LBC phase}). Providing a guarantee on learning requires showing that the Markov parameters estimation error is monotonically decreasing (Lemma \ref{Lemma: Markov parameter estimation error during LBC phase}), which requires showing that a `persistence of excitation' condition is satisfied (Lemma \ref{Lemma: PE LBC}). 

Now, since the model parameter estimation error can indeed be shown to decrease monotonically, the corresponding estimation error bound during the adaptive control period can be upper-bounded by the corresponding estimation error bound after the warm-up period (Corollary \ref{Corollary: confidence bound on the system parameters after warm-up}). This analysis approach is extensively exploited to simplify the analyses. The following provides a list of the results involved in providing a  finite-time regret guarantee:

\begin{enumerate}
    \item Bounds on the input and output signals during the adaptive control phase (Lemma \ref{Lemma: bound on the states and outputs during LBC phase}).
    \item Persistence of excitation during the adaptive control phase (Lemma \ref{Lemma: PE LBC}).
    \item Bounds on the  Markov parameter estimation error during the adaptive control phase (Lemma \ref{Lemma: Markov parameter estimation error during LBC phase}).
\end{enumerate}

\subsection{Warm-up phase:} 
The modeling error of the initial parameter estimate after the warm-up phase can be bounded as shown in \cite{lale2021adaptive,lale2020regret}. From \cite[Lemma 3.1]{lale2021adaptive}, the input-output data persistently excites the underlying system during the warm-up period, i.e., $\sigma_\text{min}(V_{T_\text{w}}) = \Omega(T_\text{w})$. Further, from \cite[Th. 3.3]{lale2021adaptive}, it holds that
\begin{equation}
||\mathbf{\hat{M}}_{T_\text{w}} - \mathbf{M}|| \leq \frac{\beta_{T_\text{w}}}{\sqrt{\sigma_\text{min}(V_{T_\text{w}})}} = \tilde{\mathcal{O}}\left( \frac{1}{\sqrt{T_\text{w}}} \right),
\label{eq: Markov param error after warm up}
\end{equation}
with a probability of at least $1 - \delta$, where
\begin{equation*}
    \beta_{T_\text{w}} \coloneqq \sqrt{n_y||\Sigma_e||\log\left( \frac{\text{det}(V_{T_\text{w}})^{1/2}}{\delta \text{det}(\lambda I)^{1/2}} \right)} + ||\mathbf{M}||_\mathrm{F} \sqrt{\lambda} + \frac{\sqrt{H}}{T_\text{w}}
\end{equation*}
for $\delta \in (0,1)$ and $T_\text{w} \geq H \geq \Bar{H}$. 

Then based on \cite[Th.~3.4]{lale2021adaptive}, it holds that if $H \geq \Bar{H}$ and the data is informative, then
\begin{equation}
    \tau(\hat{\Theta}_{t}, \Theta) = \mathcal{O}\left(||\mathbf{\hat{M}}_{t} - \mathbf{M}||\right).
    \label{eq: connecting Markov param error with model param error}
\end{equation}
Combining the above result with \eqref{eq: Markov param error after warm up} shows that $\tau(\hat{\Theta}_{T_\text{w}}, \Theta) = \Tilde{\mathcal{O}}(1/\sqrt{T_\text{w}})$. The results corresponding to the warm-up phase are presented in this work in Appendix D for the sake of completeness. 

\subsection{Adaptive Control Period}

During the adaptive control period, it is imperative to guarantee that the input and output signals remain bounded to ensure the safe operation of the closed-loop system. Such guarantee is provided with LQG-NAIVE, as shown in the following lemma:

\lem \label{Lemma: bound on the states and outputs during LBC phase} For all $t \geq T_\text{w}$ with  $T_\text{w} \geq \Bar{T}_\text{w}$, where $\Bar{T}_\text{w}$ is as defined in \eqref{eq: definition for LB on T_w}, LQG-NAIVE satisfies the following with a probability of at least $1 - \delta$ for $\delta \in (0,1)$:
\vspace{0mm}
\begin{equation}
    \begin{split}
        ||\hat{x}_{t|t,\hat{\Theta}_t}|| \leq \Bar{\mathcal{X}}, \hspace{2mm} ||\hat{x}_{t|t-1,\hat{\Theta}_{t-1}}|| \leq X_\text{est,ac},\\
        ||y_t|| \leq Y_\text{ac}, \hspace{2mm} ||u_t|| \leq U_\text{ac}, \hspace{2mm} ||x_t|| \leq X_\text{ac},
    \end{split}
\end{equation}
for some $\Bar{\mathcal{X}}, X_\text{est,ac}, U_\text{ac}, Y_\text{ac}, X_\text{ac} = \mathcal{O}(\sqrt{\log (T/\delta)})$.

It is important to show that the model parameter estimation error is monotonically decreasing in the adaptive control phase. A critical piece to ensure that lies in guaranteeing the persistence of excitation, which ensures the estimation accuracy of the Markov parameters. Essentially, the persistence of the excitation ensures that the cumulative sum of the covariates $\left( \sum_{i = T_\text{w}}^{t} \phi_i \phi_i^\top \right)$ is positive definite. The guarantee for the persistence of excitation is presented in the following result:
\lem \label{Lemma: PE LBC} We have the following with probability of at least $1 - \delta$ for $\delta \in (0,1)$: for all $t \geq T_\text{ac} + T_\text{w}$, where $T_\text{w} \geq \Bar{T}_{\text{w}}$ for $\Bar{T}_{\text{w}}$ as defined in \eqref{eq: definition for LB on T_w}, and for some constant $\sigma_\text{c} > 0$,
\vspace{0mm}
\begin{equation}
 \sigma_{\text{min}}\left(\sum_{i = T_\text{w}}^{t} \phi_i\phi_i^\top \right) \geq (t - T_\text{w} + 1)\frac{\sigma_\text{c}^2\text{min}\{\sigma_w^2,\sigma_z^2,\sigma_{\eta_{t-1}}^2\}}{8},
 \label{eq: PE LBC}
\end{equation}
where
\vspace{0mm}
\begin{equation}
    T_\text{ac} = \frac{ 512 {{\Upsilon_\text{ac}}}^4 H^2 \text{log}\left(\frac{2H(n_y + n_u)}{\delta}\right)}{\sigma_\text{c}^4\text{min}\{\sigma_w^4,\sigma_z^4,\sigma_{\eta_{t-1}}^4\}},
\end{equation} 
with ${\Upsilon_\text{ac}} = Y_\text{ac} + U_\text{ac}$.

From the persistence of excitation property in \eqref{eq: PE LBC}, we can now provide a bound on the parameter estimation error during the adaptive control phase. 

\lem \label{Lemma: Markov parameter estimation error during LBC phase} For any $t \geq \max \left\{ T_\text{ac} + T_\text{w}, 2T_\text{w}\right\}$, where $T_\text{ac}$ is as defined in Lemma \ref{Lemma: PE LBC} and $T_\text{w} \geq \Bar{T}_{\text{w}}$ for $\Bar{T}_{\text{w}}$ as defined in \eqref{eq: definition for LB on T_w}, the estimate of the Markov parameters, $\mathbf{\hat{M}}_t$, obeys the following bound with a probability of at least $1 - \delta$ for $\delta \in (0,1)$:
\vspace{0mm}
\begin{equation}
    ||\mathbf{\hat{M}}_t - \mathbf{M}|| \leq \frac{\Bar{\beta}_\text{ac}}{\sqrt{\sigma_\text{min}(V_t)}} = \tilde{\mathcal{O}}(1/\sqrt{t}),
\end{equation}
for some $\Bar{\beta}_\text{ac} = \text{poly}(n_y, \Sigma_e, \delta, Y_\text{ac}, U_\text{ac}).$

The proofs of Lemmas \ref{Lemma: bound on the states and outputs during LBC phase} - \ref{Lemma: Markov parameter estimation error during LBC phase} are jointly proven in Appendix A. From \eqref{eq: connecting Markov param error with model param error} and Lemma \ref{Lemma: Markov parameter estimation error during LBC phase}, we have $\tau(\hat{\Theta}_t, \Theta) = \tilde{\mathcal{O}}(1/\sqrt{t})$. Therefore, the model parameter estimation error is monotonically decreasing.

The final piece in establishing the regret upper bound requires bounding the sub-optimality gap $\Delta_{\hat{\Theta}_t} \coloneqq J(\hat{\Theta}_t) - J_*$. This inherently requires a way to represent $J(\hat{\Theta}_t)$. It is a standard procedure to write the long-term average expected cost as a function of the solution to a Lyapunov equation \cite{simchowitz2020naive}, and thus we connect $J(\hat{\Theta}_t)$ to a Lyapunov equation as follows.

Consider the true system \eqref{eq: LQG state space equation}, and another model parameter $\tilde{\Theta} = (\Tilde{A}, \Tilde{B}, \Tilde{C} ) \in \mathcal{S}$. Let $\Tilde{K} = K(\Tilde{\Theta})$ and $\Tilde{L} = L(\Tilde{\Theta})$. Now, define an alternative formulation of the LQG cost function as
\vspace{0mm}
\begin{equation}
\begin{split}
        J_s(\Tilde{\Theta})  =& \lim_{T \rightarrow \infty} \frac{1}{T}\mathbb{E}\left[\sum_{t = 0}^{T-1} x_t^\top Q_c x_t + \hat{x}_{t|t,\Tilde{\Theta}}^\top \Tilde{K}^\top R \Tilde{K} \hat{x}_{t|t,\Tilde{\Theta}} \right],\\
         & \text{ s.t. } \eqref{eq: LQG state space equation},\\
        & \hat{x}_{t|t,\Tilde{\Theta}} = (I - \Tilde{L}\Tilde{C})\hat{x}_{t|t-1,\Tilde{\Theta}} + \Tilde{L}y_t,\\
        & \hat{x}_{t+1|t,\Tilde{\Theta}} = \Tilde{A}\Tilde{x}_{t|t,\Tilde{\Theta}} + \Tilde{B}u_t,\\
        & u_t = -\Tilde{K} \hat{x}_{t|t,\Tilde{\Theta}},\\
\end{split}
\label{eq: LQG control problem with quadratic state}
\end{equation}
where $Q_c = {C}^\top Q C$, $\Tilde{K}$ stabilises the true system, and $\Tilde{A} - \Tilde{F}\Tilde{C}$ is asymptotically stable. This alternative formulation of the quadratic cost shows up when upper bounding the cumulative cost in the regret analysis. Further, consider the following closed-loop state-space equation with extended states:
\vspace{0mm}
 \begin{equation*}
 \begin{split}
        \begin{bmatrix}
            x_t \\ \hat{x}_{t|t,\Tilde{\Theta}}
        \end{bmatrix}
        &
        = \mathbf{\Tilde{G}_{1}}
        \begin{bmatrix}
            x_{t-1} \\ \hat{x}_{t-1|t-1,\Tilde{\Theta}} 
        \end{bmatrix} + \mathbf{\Tilde{G}_{2}}
        \begin{bmatrix}
            w_{t-1} \\ z_t \\ 
        \end{bmatrix},    
 \end{split}
 \end{equation*}
where 
\vspace{0mm}
\begin{equation*}
    \begin{split}
    \mathbf{\Tilde{G}_{1}} = 
            \begin{bmatrix}
            A & -B \Tilde{K}\\
            \Tilde{L} C A & \left(I - \Tilde{L} \Tilde{C} \right) \left( \Tilde{A} - \Tilde{B} \Tilde{K} \right) - \Tilde{L} C B \Tilde{K} 
        \end{bmatrix}, \hspace{4mm}
    \mathbf{\Tilde{G}_{2}} = 
        \begin{bmatrix}
                I & 0\\
                \Tilde{L} C & \Tilde{L} 
        \end{bmatrix}.
    \end{split}
\end{equation*}
Consider the discrete Lyapunov equation $\Tilde{S} = \mathbf{\Tilde{G}_{1}}^\top \Tilde{S} \mathbf{\Tilde{G}_{1}} + \text{diag}(Q_c, \Tilde{K}^\top R \Tilde{K}) $ with $\Tilde{S} \geq 0$ being its solution. Then, we have 
\vspace{0mm}
\begin{equation}
    J_s(\Tilde{\Theta}) = \text{Tr}\left(\mathbf{\Tilde{G}_{2}}^\top \Tilde{S} \mathbf{\Tilde{G}_{2}} \text{diag}(\sigma^2_w I, \sigma^2_z I)\right).
    \label{eq: result on LT avg cost - Lyapunov equation}
\end{equation}
Moreover, it holds that $ J(\Tilde{\Theta})=J_s(\Tilde{\Theta})+ \text{Tr}(Q\sigma_z^2I)$ \cite[Th. 3]{mania2019certainty}. This property can aid in quantifying the sub-optimality gap $\Delta_{\hat{\Theta}_t}$. Now, we are ready to state the regret upper bound.

\thm \label{Theorem: regret LBC} Let $\Bar{T}_\text{w}$ be as defined in \eqref{eq: definition for LB on T_w} and $T_\text{ac}$ be as defined in Lemma \ref{Lemma: PE LBC}. If $T_\text{w} \geq \Bar{T}_\text{w}$, with a probability of at least $1 - \delta$ for $\delta \in (0,1)$, we have for any $T \geq \max\{T_\text{ac} + T_\text{w}, 2T_\text{w}\}$ that the regret of LQG-NAIVE is bounded as
\begin{align}
        &  \mathrm{Regret}(T) \lesssim \sum_{k=0}^{k_\text{fin}-1} l_k \left(J_s(\hat{\Theta}_{l_k}) -  J_*\right) + l_k n_y\sigma^2_z \text{Tr} \left( Q\right) \nonumber \\
        & + l_k \sigma_{\eta_{l_k}}^2 \mathrm{poly} \left( \tau(\hat{\Theta}_{l_k}, \Theta)\right) \nonumber \\
        & + \sqrt{l_k} \mathrm{poly} \left( \tau(\hat{\Theta}_{l_k}, \Theta), X_\text{ac}, \Bar{\mathcal{X}}, ||Q||, ||R|| \right)   \label{eq: approximate regret upper bound} \\
        &  = \Tilde{\mathcal{O}}(\sqrt{T}) \nonumber
\end{align}
where $l_k$ is the number of time steps in the $k^\text{th}$ episode and $k_\text{fin}$ is the total number of episodes.  

The result in Theorem \ref{Theorem: regret LBC} confirms that a naive-exploration-based adaptive control strategy is sufficient to guarantee a $\sqrt{T}$-regret growth, which is the optimal rate of regret growth up to logarithmic terms in the control of unknown partially observable linear systems. The proof is presented in Appendix B. Now, an intuition is provided on how the regret bound is derived. As Algorithm \ref{algorithm: LQG-NAIVE} operates in an episodic fashion, the regret is also analyzed episode-wise. First, an upper bound on the cumulative cost incurred by LQG-NAIVE in any arbitrary episode is obtained. From this, we can obtain an upper bound on the regret for any episode. This episode-wise regret bound is then summed over the number of episodes to obtain the final regret upper bound incurred by LQG-NAIVE during the adaptive control phase as shown in \eqref{eq: approximate regret upper bound}. 

In \eqref{eq: approximate regret upper bound}, the sub-optimality gap $J_s(\hat{\Theta}_{l_k}) -  J_*$ and the exploration cost $\sigma_{\eta_{l_k}}^2 \mathrm{poly} \left( \tau(\hat{\Theta}_{l_k}, \Theta)\right)$ have significant contributions towards the regret, as they are linearly dependent on $l_k$. To provide a bound on the sub-optimality gap, we exploit an earlier result \cite[Th. 4]{mania2019certainty}, which essentially bounds the contribution from the sub-optimality gap by $\tilde{\mathcal{O}}(\log_2(T/T_\text{w}))$. On the other hand, the exploration cost is bounded by $\tilde{\mathcal{O}}(\sqrt{T})$ as $\sigma^2_{\eta_{l_k}} = \gamma/\sqrt{l_k}$. We can further see from the $3^\text{rd}$ and $4^\text{th}$ terms in \eqref{eq: approximate regret upper bound} that the model parameter estimation error along with the established bounds on the state and its estimate, also influence the regret upper bound. From \eqref{eq: connecting Markov param error with model param error} and Lemma \ref{Lemma: Markov parameter estimation error during LBC phase}, we have that the model parameter estimation errors are monotonically decreasing, and as a consequence, $\tau(\hat{\Theta}_t, \Theta) \leq \tau(\hat{\Theta}_{T_\text{w}}, \Theta)$ with a high probability. This result is used in deriving the regret upper bound \eqref{eq: approximate regret upper bound}.

\section{Numerical simulations}

In this section, we validate the performance of LQG-NAIVE and LQG-IF2E through numerical simulations. For the simulation, we consider a linearized version of the web server control problem \cite{aastrom2021feedback}. Different from \cite{aastrom2021feedback}, we consider the partial observability case, i.e., the inclusion of the $C$ matrix and the measurement noise. The true system under consideration is given by
\vspace{0mm}
\begin{equation*}
\begin{split}
    x_{t+1} & = \begin{bmatrix} 0.54 & -0.11\\ -0.026 & 0.63 \end{bmatrix} x_t + \begin{bmatrix} -85 &	4.4\\ -2.5 & 2.8 \end{bmatrix}u_t + w_{t},\\
    y_t & = \begin{bmatrix} 0.2 & 0.3 \\ 0.3 & 0.2  \end{bmatrix}x_t + z_{t},
\end{split}
\label{eq: Web server system}
\end{equation*}
where $w_t, z_t \sim \mathcal{N}(0, 0.01I)$. The cost matrices for the control problem are given by \cite{aastrom2021feedback}:
\vspace{0mm}
\begin{equation*}
            Q = \begin{bmatrix} 5 & 0\\ 0 & 1 \end{bmatrix}, \hspace{4mm}
            R = \begin{bmatrix} \frac{1}{50^2} & 0\\ 0 & \frac{1}{10^6} \end{bmatrix}.
\end{equation*}
The optimal long-term average expected cost calculated from \eqref{eq: optimal LT avg exp cost} is $0.0707$. 

To implement the adaptive control algorithm, the length of the warm-up phase is set to $T_\text{w} = 25$. During the warm-up phase, Gaussian excitatory signals are injected, where $u_t \sim \mathcal{N}(0, 0.1I)$. For the adaptive control phase, the number of episodes is taken to be $k_\text{fin} = 11$. The hyper-parameters for the adaptive control policies \eqref{eq: LBC policy naive} and \eqref{eq: LBC policy FIM} are $\gamma = \frac{\sqrt{T_\text{w}}}{10}$ and $\alpha = 1$ respectively. To avoid ill-conditioned FIM, we select $c_\text{tol} = 1$.  Finally, the length of the input-output data for constructing the $\phi$ vector in system identification is $H = 12$. Each of the algorithms LQG-NAIVE and LQG-IF2E are run 100 times to report the mean and the standard deviation of the regret growth. 

\begin{figure}
\centering
    \includegraphics[width=0.75\textwidth]{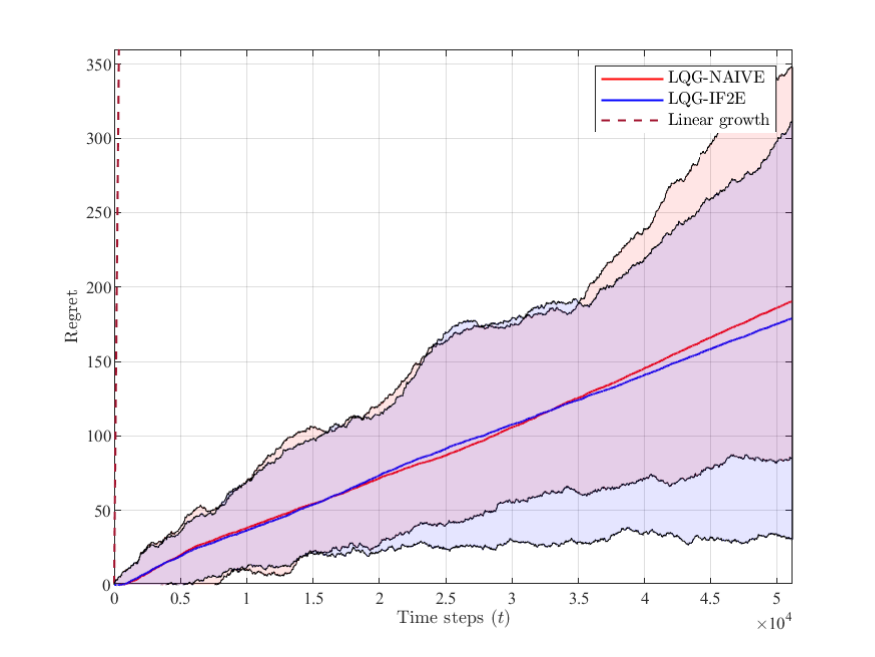}
    \caption{Regret growth of LQG-NAIVE and LQG-IF2E.}
    \label{fig: Regret growth: LQG-NAIVE and LQG-IF2E}
\end{figure}

Fig.~\ref{fig: Regret growth: LQG-NAIVE and LQG-IF2E} shows the regret growth of the $100$ simulations. The bold red line represents the mean regret of LQG-NAIVE, whereas the bold blue line represents the mean regret of LQG-IF2E. LQG-NAIVE incurs a long-term average cost of $0.0744$ and LQG-IF2E incurs a long-term average cost of $0.0742$, averaged over the $100$ simulations. The LQG-IF2E algorithm switches to the FIM-based exploration strategy at the $35^{\text{th}}$ time step, on average. This means that with a delay of approximately one episode, the algorithm is able to deploy the FIM-based exploration strategy. The hyper-parameter $\alpha$ is chosen such that LQG-NAIVE and LQG-IF2E have similar behavior for the regret growth, which is evident from Fig. \ref{fig: Regret growth: LQG-NAIVE and LQG-IF2E}. An intuitive way to understand this similarity in regret growth is by plotting the evolution of the minimum eigenvalue of the FIM.

\begin{figure}[H]
\centering
    \includegraphics[width=0.75\textwidth]{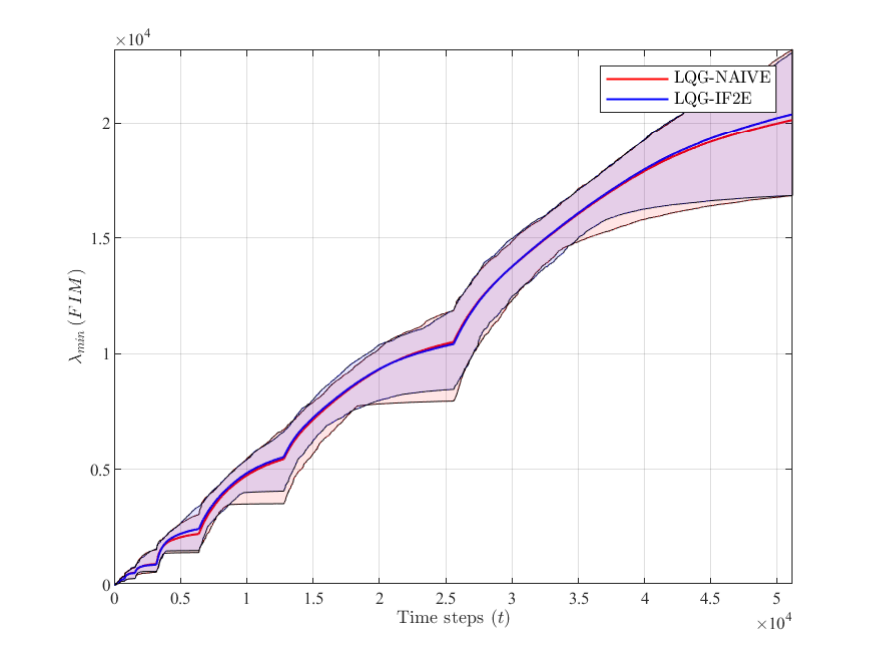}
    \caption{Growth of the minimum eigenvalue of the estimated FIM.}
    \label{fig: min eig FIM: LQG-IF2E and LQG-NAIVE}
\end{figure} 

Fig. \ref{fig: min eig FIM: LQG-IF2E and LQG-NAIVE} shows how the minimum eigenvalue of the FIM varies over the time steps. The bold blue line represents the mean growth of $\lambda_\text{min}\left(\hat{I}_{H,t}\right)$ of LQG-IF2E, whereas the bold red line represents the mean growth of $\lambda_\text{min}\left(\hat{I}_{H,t}\right)$ of LQG-NAIVE. Based on Fig. \ref{fig: min eig FIM: LQG-IF2E and LQG-NAIVE}, the behaviors of the FIMs are also similar between the two algorithms. Since the FIM captures the informativity of the data, which influences the parameter learning rate and hence the growth of the regret, one can expect two algorithms to have similar regret growth if their corresponding FIMs behave similarly. The `bumps' that are observed in Fig. \ref{fig: min eig FIM: LQG-IF2E and LQG-NAIVE} correspond to the time steps where the model parameter estimate $\hat{\Theta}_t$ was updated. The length of the `bumps' corresponds approximately to the length of the episodes.

\section{Conclusions}

We have focused on control of unknown partially observable LTI systems in an LQG setting. We have developed two computationally efficient adaptive control algorithms: LQG-NAIVE and LQG-IF2E. The LQG-NAIVE algorithm, based on naive exploration, is more computationally efficient than optimism-in-the-face-of-uncertainty-based exploration and Thompson-sampling-based exploration. It also has a guaranteed regret growth of $\Tilde{\mathcal{O}}(\sqrt{T})$. 

On the other hand, LQG-IF2E extends the `open-loop' additive excitation signal in LQG-NAIVE to a `closed-loop' additive excitation by incorporating FIM in designing the covariance of the exploration signal. However, providing finite-time regret guarantees for LQG-IF2E is significantly more challenging, as the additive excitation signal is not i.i.d. This is because the FIM depends on all the previous observations. Both algorithms have been validated in numerical simulations and show competitive performance. Deriving finite-time regret guarantees for FIM-based adaptive control strategies such as LQG-IF2E is a topic for future work.





\newpage
\bibliographystyle{IEEEtran} 
\bibliography{ECClqg.bib}

\begin{thebibliography}{10}
\providecommand{\url}[1]{#1}
\csname url@samestyle\endcsname
\providecommand{\newblock}{\relax}
\providecommand{\bibinfo}[2]{#2}
\providecommand{\BIBentrySTDinterwordspacing}{\spaceskip=0pt\relax}
\providecommand{\BIBentryALTinterwordstretchfactor}{4}
\providecommand{\BIBentryALTinterwordspacing}{\spaceskip=\fontdimen2\font plus
\BIBentryALTinterwordstretchfactor\fontdimen3\font minus
  \fontdimen4\font\relax}
\providecommand{\BIBforeignlanguage}[2]{{%
\expandafter\ifx\csname l@#1\endcsname\relax
\typeout{** WARNING: IEEEtran.bst: No hyphenation pattern has been}%
\typeout{** loaded for the language `#1'. Using the pattern for}%
\typeout{** the default language instead.}%
\else
\language=\csname l@#1\endcsname
\fi
#2}}
\providecommand{\BIBdecl}{\relax}
\BIBdecl

\bibitem{matni2019self}
N.~Matni, A.~Proutiere, A.~Rantzer, and S.~Tu, ``From self-tuning regulators to
  reinforcement learning and back again,'' in \emph{IEEE 58th Conference on
  Decision and Control (CDC)}, 2019, pp. 3724--3740.

\bibitem{bertsekas2012dynamic}
D.~Bertsekas, \emph{Dynamic Programming and Optimal Control: Volume I}.\hskip
  1em plus 0.5em minus 0.4em\relax Athena Scientific, 2012, vol.~4.

\bibitem{lale2021adaptive}
S.~Lale, K.~Azizzadenesheli, B.~Hassibi, and A.~Anandkumar, ``Adaptive control
  and regret minimization in {LQG} setting,'' in \emph{American Control
  Conference (ACC)}, 2021, pp. 2517--2522.

\bibitem{abbasi2011regret}
Y.~Abbasi-Yadkori and C.~Szepesv{\'a}ri, ``Regret bounds for the adaptive
  control of {LQ} systems,'' in \emph{Proceedings of the 24th Annual Conference
  on Learning Theory}, 2011, pp. 1--26.

\bibitem{mania2019certainty}
H.~Mania, S.~Tu, and B.~Recht, ``Certainty equivalence is efficient for {LQ}
  control,'' \emph{Conference on Advances in Neural Information Processing
  Systems}, vol.~32, 2019.

\bibitem{lale2020logarithmic}
S.~Lale, K.~Azizzadenesheli, B.~Hassibi, and A.~Anandkumar, ``Logarithmic
  regret bound in partially observable linear dynamical systems,''
  \emph{Conference on Advances in Neural Information Processing Systems},
  vol.~33, pp. 20\,876--20\,888, 2020.

\bibitem{kargin2023thompson}
T.~Kargin, S.~Lale, K.~Azizzadenesheli, A.~Anandkumar, and B.~Hassibi,
  ``Thompson sampling for partially observable {LQ} control,'' in
  \emph{American Control Conference (ACC)}, 2023, pp. 4561--4568.

\bibitem{ziemann2022regret}
I.~Ziemann and H.~Sandberg, ``Regret lower bounds for learning {LQG} systems,''
  \emph{arXiv preprint arXiv:2201.01680}, 2022.

\bibitem{simchowitz2020naive}
M.~Simchowitz and D.~Foster, ``Naive exploration is optimal for online {LQR},''
  in \emph{International Conference on Machine Learning}, 2020, pp. 8937--8948.

\bibitem{colin2022regret}
K.~Colin, M.~Ferizbegovic, and H.~Hjalmarsson, ``Regret minimization for {LQ}
  adaptive controllers using {F}isher feedback exploration,'' \emph{IEEE
  Control Systems Letters}, vol.~6, pp. 2870--2875, 2022.

\bibitem{verhaegen2007filtering}
M.~Verhaegen and V.~Verdult, \emph{Filtering and System Identification: A Least
  Squares Approach}.\hskip 1em plus 0.5em minus 0.4em\relax Cambridge
  University Press, 2007.

\bibitem{lale2020regret_prior}
S.~Lale, K.~Azizzadenesheli, B.~Hassibi, and A.~Anandkumar, ``Regret
  minimization in partially observable {LQ} control,'' \emph{arXiv preprint
  arXiv:2002.00082}, 2020.

\bibitem{oymak2019non}
S.~Oymak and N.~Ozay, ``Revisiting {H}o--{K}alman-based system identification:
  {R}obustness and finite-sample analysis,'' \emph{IEEE Transactions on
  Automatic Control}, vol.~67, no.~4, pp. 1914--1928, 2021.

\bibitem{sarkar2019nonparametric}
T.~Sarkar, A.~Rakhlin, and M.~A. Dahleh, ``Nonparametric finite time {LTI}
  system identification,'' \emph{arXiv preprint arXiv:1902.01848}, 2019.

\bibitem{tsiamis2019finite}
A.~Tsiamis and G.~J. Pappas, ``Finite sample analysis of stochastic system
  identification,'' in \emph{IEEE 58th Conference on Decision and Control
  (CDC)}, 2019, pp. 3648--3654.

\bibitem{ho1966effective}
B.~Ho and R.~E. K{\'a}lm{\'a}n, ``Effective construction of linear
  state-variable models from input/output functions,''
  \emph{at-Automatisierungstechnik}, vol.~14, no. 1-12, pp. 545--548, 1966.

\bibitem{ziemann2021uninformative}
I.~Ziemann and H.~Sandberg, ``On uninformative optimal policies in adaptive
  {LQR} with unknown {B}-matrix,'' in \emph{Learning for Dynamics and Control},
  2021, pp. 213--226.

\bibitem{lale2020regret}
S.~Lale, K.~Azizzadenesheli, B.~Hassibi, and A.~Anandkumar, ``Regret bound of
  adaptive control in {LQG} systems,'' \emph{arXiv preprint 2003.05999}, 2020.

\bibitem{aastrom2021feedback}
K.~J. {\AA}str{\"o}m and R.~M. Murray, \emph{Feedback {S}ystems: An
  Introduction for Scientists and Engineers}.\hskip 1em plus 0.5em minus
  0.4em\relax Princeton University Press, 2021.

\bibitem{shi2023suboptimality}
S.~Shi, A.~Tsiamis, and B.~De~Schutter, ``Suboptimality analysis of receding
  horizon quadratic control with unknown linear systems and its applications in
  learning-based control,'' \emph{arXiv preprint arXiv:2301.07876v1}, 2023.

\bibitem{tropp2012user}
J.~A. Tropp, ``User-friendly tail bounds for sums of random matrices,''
  \emph{Foundations of Computational Mathematics}, vol.~12, pp. 389--434, 2012.

\bibitem{rudelson2013hanson}
M.~Rudelson and R.~Vershynin, ``Hanson-{W}right inequality and sub-{G}aussian
  concentration,'' \emph{Electronic Communications in Probability}, vol.~18,
  pp. 1--9, 2013.

\bibitem{abbasi2011improved}
Y.~Abbasi-Yadkori, D.~P{\'a}l, and C.~Szepesv{\'a}ri, ``Improved algorithms for
  linear stochastic bandits,'' \emph{Advances in Neural Information Processing
  Systems}, vol.~24, 2011.

\end{thebibliography}

\newpage
\section*{APPENDIX}

Firstly, we will provide the overall organization of the Appendix following which, we will provide the lower bound on the warm-up duration, the system identification algorithm SYSID as well as present a critical component in establishing the persistence of excitation condition in Lemma \ref{Lemma: PE LBC}.

\subsection*{Organization of the Appendices:}
The Appendix is divided into three main segments: in the first segment, we have Appendices A and B, where we present the proofs of Lemmas \ref{Lemma: bound on the states and outputs during LBC phase} - \ref{Lemma: Markov parameter estimation error during LBC phase} and Theorem \ref{Theorem: regret LBC}, respectively. In the second segment, we have Appendices C - D, which present results corresponding to the warm-up phase and the construction of the FIM for the FIM-based algorithm LQG-IF2E, respectively. Finally, in the third segment, we have Appendix E, which presents the technical tools that are utilized in the proofs of Lemmas \ref{Lemma: bound on the states and outputs during LBC phase} - \ref{Lemma: Markov parameter estimation error during LBC phase} and Theorem \ref{Theorem: regret LBC}.

\rem This paper focuses on probabilistic guarantees which depend on $\delta \in (0,1)$. Throughout this paper, we re-parameterize $\delta$ to values (e.g., $c\delta, ct\delta$, or $c\delta/t$ for some $c \in \mathbb{R}$) that aid in simplifying the exposition of the proofs. This re-parameterization has only a logarithmic effect on the established probabilistic guarantees and therefore does not affect the overall analysis of the regret guarantee.

\subsection*{Warm-up duration:}

We let $T_\text{w} \geq \Bar{T}_\text{w}$, where
\begin{equation}
    \Bar{T}_\text{w} \coloneqq \max\{H, T_\mathbf{M}, T_A, T_B, T_o, T_K, T_\text{cl}, T_N, T_a, T_f, T_{\mathcal{G}}, T_J, T_\text{ac}(t)\}.
    \label{eq: definition for LB on T_w}
\end{equation}
The time step $T_{\mathbf{M}}$ is defined in Lemma \ref{Lemma: Markov parameter estimation error after warm-up}. The time steps $T_K$, $T_\text{cl}$, $T_N$, $T_a$ and $T_f$ are defined in the proof of Lemma \ref{Lemma: bound on the states and outputs during LBC phase}, i.e., in Appendix B, $T_\text{o}$ is defined in \eqref{eq: To for warm-up}, $T_A$ and $T_B$ are defined in the proof of \cite[Th. 3.4]{lale2021adaptive}, $T_\mathcal{G}$ and $T_\text{ac}(t)$ are defined in the proof of Lemma \ref{Lemma: PE LBC}, i.e., in Appendix A, and finally, $T_J$ is defined in the proof of Theorem \ref{Theorem: regret LBC}, i.e., in Appendix C. 

We also define other lower bounds for $T_\text{w}$:
\begin{equation}
\begin{split}
    \Bar{T}^\text{alt,1}_{\text{w}} & \coloneqq \max\{H, T_{\mathbf{M}}, T_A, T_B, T_\text{o}\},\\
    \Bar{T}^\text{alt,2}_{\text{w}} & \coloneqq \max\{H, T_{\mathbf{M}}, T_A, T_B, T_\text{o}, T_\text{ac}(t), T_{\mathcal{G}}\},
\end{split}
    \label{eq: Alt LB on T_w}
\end{equation}
which are used for the sake of brevity in the proofs of Lemmas \ref{Lemma: bound on the states and outputs during LBC phase} - \ref{Lemma: Markov parameter estimation error during LBC phase}.

\subsection*{System identification algorithm SYSID:}

Before we present the SYSID algorithm, some of the relevant terms will be defined.

At time step $t$, we have 
\vspace{0mm}
\begin{equation*}
    \hat{\mathbf{M}}_t = 
        \begin{bmatrix}
            \hat{C}_t\hat{F}_t & ... & \hat{C}_t\hat{\Bar{A}}_t^{H-1}\hat{F}_t & \hat{C}_t\hat{B}_t & ... & \hat{C}_t\hat{\Bar{A}}_t^{H-1}\hat{B}_t 
        \end{bmatrix},
\end{equation*}
where $\hat{\Bar{A}}_t = \hat{A}_t - \hat{F}_t \hat{C}_t$. Now, define the Hankel matrix $\mathcal{H}_{\hat{\mathbf{F}}_t}$ as
\vspace{0mm}
\begin{equation}
    \mathcal{H}_{\hat{\mathbf{F}}_t} \coloneqq
    \begin{bmatrix}
        \hat{C}_t\hat{F}_t & \hat{C}_t\hat{\Bar{A}}_t\hat{F}_t & ... & \hat{C}_t\hat{\Bar{A}}_t^{d_2}\hat{F}_t\\
        \hat{C}_t\hat{\Bar{A}}_t\hat{F}_t & \hat{C}_t\hat{\Bar{A}}_t^{2}\hat{F}_t & ... & \hat{C}_t\hat{\Bar{A}}_t^{d_2+1}\hat{F}_t\\
        .\\
        .\\
        .\\
        \hat{C}_t\hat{\Bar{A}}_t^{d_1-1}\hat{F}_t & \hat{C}_t\hat{\Bar{A}}_t^{d_1}\hat{F}_t & ...& \hat{C}_t\hat{\Bar{A}}_t^{H-1}\hat{F}_t
    \end{bmatrix} \in \mathbb{R}^{d_1 n_y \times (d_2 + 1)n_y}.
    \label{eq: Hankel matrix definition}
\end{equation}
Analogously, $\mathcal{H}_{\hat{\mathbf{G}}_t}$ has a similar definition as above for $\{ \hat{C}_t\hat{\Bar{A}}_t^i\hat{B}_t \}_{i = 0}^{H-1}$. For the sake of completeness, we present SYSID here.
\begin{algorithm}[H]
\caption{SYSID \cite{lale2020logarithmic}}
\begin{algorithmic}[1]
\State \textbf{Input:} $\mathbf{\hat{M}}_t$, $H$, $n_x$, $n_y$, $n_u$, $d_1 \geq n_x$, $d_2 \geq n_x$ such that $d_1 + d_2 + 1 = H$
\State Construct two Hankel matrices $\mathcal{H}_{\hat{\mathbf{F}}_t} \in \mathbb{R}^{d_1 n_y \times (d_2 + 1)n_y}$ and $\mathcal{H}_{\hat{\mathbf{G}}_t} \in \mathbb{R}^{d_1 n_y \times (d_2 + 1)n_u}$ from $\hat{\mathbf{F}}_t$ and $\hat{\mathbf{G}}_t$ respectively. Let $ \hat{\mathcal{H}}_t = 
        \begin{bmatrix}
            \mathcal{H}_{\hat{\mathbf{F}}_t} & \mathcal{H}_{\hat{\mathbf{G}}_t}
        \end{bmatrix}.$
\State Obtain $\hat{\mathcal{H}}_t^-$ by discarding $(d_2 + 1)^{\text{th}}$ and $(2d_2 + 2)^{\text{th}}$ block columns of $\hat{\mathcal{H}}_t$.
\State Perform SVD on $\hat{\mathcal{H}}_t^-$, and then obtain $\hat{\mathcal{Q}}_t$, the best $n_x$-rank approximation by setting all but the first $n_x$ singular values to zero.
\State Obtain $\mathbf{U}_t, \mathbf{Z}_t, \mathbf{V}_t = \text{SVD}(\hat{\mathcal{Q}}_t)$.
\State Construct $\mathbf{O}(\hat{\Bar{A}}_t, \hat{C}_t, d_1) = \mathbf{U}_t \mathbf{Z}_t^{1/2}$. \Comment{$\hat{\Bar{A}}_t = \hat{A}_t - \hat{F}_t\hat{C}_t$}
\State Construct $[\mathbf{C}(\hat{\Bar{A}}_t, \hat{F}_t, d_2 + 1), \mathbf{C}(\hat{\Bar{A}}_t, \hat{B}_t, d_2 + 1)] = \mathbf{Z}_t^{1/2}\mathbf{V}_t$.
\State Obtain $\hat{C}_t$ from the first $n_y$ rows of $\mathbf{O}(\hat{\Bar{A}}_t, \hat{C}_t, d_1)$.
\State Obtain $\hat{B}_t$ from the first $n_u$ columns of $\mathbf{C}(\hat{\Bar{A}}_t, \hat{B}_t, d_2 + 1)$.
\State Obtain $\hat{F}_t$ from the first $n_y$ columns of $\mathbf{C}(\hat{\Bar{A}}_t, \hat{F}_t, d_2 + 1)$.
\State Obtain $\hat{\mathcal{H}}_t^+$ by discarding $1^\text{st}$ and $(d_2 + 2)^\text{th}$ block columns of $\hat{\mathcal{H}}_t$.
\State Obtain $\hat{\Bar{A}}_t = \mathbf{O}^\dagger(\hat{\Bar{A}}_t, \hat{C}_t, d_1)\hspace{1mm}\hat{\mathcal{H}}_t^+ \hspace{1mm}[\mathbf{C}(\hat{\Bar{A}}_t, \hat{F}_t, d_2 + 1), \mathbf{C}(\hat{\Bar{A}}_t, \hat{B}_t, d_2 + 1)]^\dagger$.
\State Obtain $\hat{A}_t = \hat{\Bar{A}}_t + \hat{F}_t\hat{C}_t$.
\State Obtain $\hat{L}_t$ from the first $n_x \times n_y$ block of $\hat{A}_t^\dagger \mathbf{O}^\dagger(\hat{\Bar{A}}_t, \hat{C}_t, d_1)\hat{\mathcal{H}}_t^-$.
\State \textbf{Return:} $\hat{A}_t$, $\hat{B}_t$, $\hat{C}_t$, and $\hat{L}_t$.
\end{algorithmic}
\label{algorithm: SYSID}
\end{algorithm}

\subsection*{Truncated closed-loop noise evolution parameter:} 

The truncated closed-loop noise evolution parameter, which captures the effect of noises and excitation signals on the vector $\phi_t$, will play an important role in establishing the persistence of excitation during the adaptive control period. Consider a permutation of the vector $\phi_t$, denoted by $\Bar{\phi}_t$, where
\vspace{0mm}
\begin{equation*}
    \Bar{\phi}_t = 
    \begin{bmatrix}
        y_{t-1}^\top & u_{t-1}^\top &. &. &. & y_{t-H}^\top & u_{t-H}^\top
    \end{bmatrix}^\top \in \mathbb{R}^{(n_y + n_u)H}.
\end{equation*}
Following the warm-up period, recall that the adaptive control policy $u_t = -\hat{K}_t \hat{x}_{t|t,\hat{\Theta}_t} + \eta_t$ is deployed, where $\eta_t \sim \sigma_{\eta_t}\mathcal{N}(0, I)$ and $\sigma^2_{\eta_t} = \left(\gamma/\sqrt{l_k} \right)$ with $\gamma > 0$. Here, $l_k$ is the number of time steps in the $k^\text{th}$ episode. Recall the following relation from \eqref{eq: measurement and time update}:
\vspace{0mm}
\begin{equation*}
    \begin{split}
        \hat{x}_{t|t-1, \hat{\Theta}_{t-1}} & = \left( \hat{A}_{t-1} - \hat{B}_{t-1} \hat{K}_{t-1} \right) \hat{x}_{t-1|t-1, \hat{\Theta}_{t-1}} + \hat{B}_{t-1}\eta_{t-1},\\
        \hat{x}_{t|t, \hat{\Theta}_{t}} & = \hat{x}_{t|t-1, \hat{\Theta}_{t-1}} + \hat{L}_t \left( y_t - \hat{C}_t \hat{x}_{t|t-1,\hat{\Theta}_{t-1}} \right)\\
        & = \left( \hat{A}_{t-1} - \hat{B}_{t-1} \hat{K}_{t-1} \right) \hat{x}_{t-1|t-1, \hat{\Theta}_{t-1}} + \hat{B}_{t-1}\eta_{t-1}\\
        & \hspace{4mm} + \hat{L}_t \Bigg(Cx_t + z_t - \hat{C}_t \left(  \left( \hat{A}_{t-1} - \hat{B}_{t-1} \hat{K}_{t-1} \right) \hat{x}_{t-1|t-1, \hat{\Theta}_{t-1}} + \hat{B}_{t-1}\eta_{t-1} \right) \Bigg)\\
    \end{split}
\end{equation*}
\begin{equation*}
    \begin{split}
        & = \left(I - \hat{L}_t \hat{C}_t \right) \Bigg( \left( \hat{A}_{t-1} - \hat{B}_{t-1} \hat{K}_{t-1} \right) \hat{x}_{t-1|t-1, \hat{\Theta}_{t-1}} + \hat{B}_{t-1}\eta_{t-1} \Bigg)\\
        & \hspace{4mm} + \hat{L}_t \Bigg( C \left( Ax_{t-1} - B \hat{K}_{t-1} \hat{x}_{t-1|t-1, \hat{\Theta}_{t-1}} + B\eta_{t-1}  + w_{t-1} \right) + z_t \Bigg).
    \end{split}
\end{equation*}
Now,
\vspace{0mm}
\begin{equation*}
    \begin{split}
        \begin{bmatrix}
            x_t \\ \hat{x}_{t|t, \hat{\Theta}_t} 
        \end{bmatrix}
        &
        =
        \underbrace{
        \begin{bmatrix}
            A & -B \hat{K}_{t-1}\\
            \hat{L}_t C A & \left(I - \hat{L}_t \hat{C}_t \right) \left( \hat{A}_{t-1} - \hat{B}_{t-1} \hat{K}_{t-1} \right) - \hat{L}_t C B \hat{K}_{t-1} 
        \end{bmatrix}}_{\mathbf{\hat{G}_2}^{(t)}}
        \begin{bmatrix}
            x_{t-1} \\ \hat{x}_{t-1|t-1, \hat{\Theta}_{t-1}} 
        \end{bmatrix}
        \\
        & \hspace{4mm}
        +
        \underbrace{
        \begin{bmatrix}
            I & 0 & B \\
            \hat{L}_t C & \hat{L}_t & \left( I - \hat{L}_t \hat{C}_t \right)\hat{B}_{t-1} + \hat{L}_t C B 
        \end{bmatrix}}_{\mathbf{\hat{G}_3}^{(t)}}
        \begin{bmatrix}
            w_{t-1} \\ z_t \\ \eta_{t-1} 
        \end{bmatrix}.
    \end{split}
\end{equation*}
Let $f_t = \begin{bmatrix} y_t \\ u_t \end{bmatrix}$. Now, we can express $f_t$ as:
\vspace{0mm}
\begin{equation*}
    \begin{split}
        f_t & = 
        \begin{bmatrix}
            CA & -CB \hat{K}_{t-1}\\
            -\hat{K}_t\hat{L}_t C A & -\hat{K}_t\left(I - \hat{L}_t \hat{C}_t \right) \left( \hat{A}_{t-1} - \hat{B}_{t-1} \hat{K}_{t-1} \right) + \hat{K}_t\hat{L}_t C B \hat{K}_{t-1} 
        \end{bmatrix}
        \begin{bmatrix}
            x_{t-1} \\ \hat{x}_{t-1|t-1, \hat{\Theta}_{t-1}} 
        \end{bmatrix}\\
        & \hspace{4mm} + 
        \begin{bmatrix}
            C & I & CB & 0\\
            -\hat{K}_t\hat{L}_t C & -\hat{K}_t\hat{L}_t & -\hat{K}_t\left( I - \hat{L}_t \hat{C}_t \right)\hat{B}_{t-1}  -\hat{K}_t\hat{L}_t C B & I 
        \end{bmatrix}
        \begin{bmatrix}
            w_{t-1} \\ z_t \\ \eta_{t-1} \\ \eta_t
        \end{bmatrix}\\
        & = 
        \underbrace{
        \begin{bmatrix}
            C & 0\\
            0 & -\hat{K}_t
        \end{bmatrix}}_{\mathbf{\hat{\Psi}}_t}
        \mathbf{\hat{G}_2}^{(t)}
        \begin{bmatrix}
            x_{t-1} \\ \hat{x}_{t-1|t-1, \hat{\Theta}_{t-1}} 
        \end{bmatrix} 
        +
        \underbrace{
        \begin{bmatrix}
            C & 0\\
            0 & -\hat{K}_t
        \end{bmatrix}}_{\mathbf{\hat{\Psi}}_t}
        \mathbf{\hat{G}_3}^{(t)}
        \begin{bmatrix}
            w_{t-1} \\ z_t \\ \eta_{t-1} 
        \end{bmatrix}
        +
        \begin{bmatrix}
            0 & I & 0 & 0\\
            0 & 0 & 0 & I 
        \end{bmatrix}
        \begin{bmatrix}
            w_{t-1} \\ z_t \\ \eta_{t-1} \\ \eta_t
        \end{bmatrix}\\
        & = \mathbf{\hat{\Psi}}_t \mathbf{\hat{G}_2}^{(t)}
        \begin{bmatrix}
            x_{t-1} \\ \hat{x}_{t-1|t-1, \hat{\Theta}_{t-1}} 
        \end{bmatrix}
        +
        \mathbf{\hat{\Psi}}_t\mathbf{\hat{G}_3}^{(t)}
        \begin{bmatrix}
            w_{t-1} \\ z_t \\ \eta_{t-1} 
        \end{bmatrix}
        + 
        \begin{bmatrix}
            0 & I & 0 & 0\\
            0 & 0 & 0 & I 
        \end{bmatrix}
        \begin{bmatrix}
            w_{t-1} \\ z_t \\ \eta_{t-1} \\ \eta_t
        \end{bmatrix}.\\
    \end{split}
\end{equation*}
For the ease of representation, let us denote $\Bar{x}_t = \begin{bmatrix} x_t^\top & \hat{x}_{t|t, \hat{\Theta}_t}^\top \end{bmatrix}^\top$ and $\eta^e_{t} =  \begin{bmatrix}
        w_{t}^\top & z_{t+1}^\top & \eta_{t}^\top 
    \end{bmatrix}^\top$. Rolling back in time $\Bar{x}_t$ for $H$ time steps, we obtain the following:
\vspace{0mm}
\begin{equation*}
\begin{split}
    \Bar{x}_t = \prod_{i = 0}^H \mathbf{\hat{G}_2}^{(t - i)} \Bar{x}_{t-H-1} + \sum_{i = 1}^{H+1} \left( \prod_{j=2}^i \mathbf{\hat{G}_2}^{(t-j+2)} \right) \mathbf{\hat{G}_3}^{(t-i+1)}\eta^e_{t-i}.
\end{split}
\label{eq: truncated CL noise evolution - eq1}
\end{equation*}
Now, let us expand $\Bar{x}_{t-H-1}$.
\vspace{0mm}
\begin{equation*}
    \begin{split}
        \Bar{x}_{t-H-1} & = \prod_{i = H+1}^{t-1} \mathbf{\hat{G}_2}^{(t - i)} 
        \begin{bmatrix}
        x_0\\ \hat{L}_0 C x_0 + \hat{L}_0 z_0
        \end{bmatrix}
        + \sum_{i = H+2}^{t} \left( \prod_{j=H+3}^i \mathbf{\hat{G}_2}^{(t-j+2)} \right) \mathbf{\hat{G}_3}^{(t-i+1)} \eta^e_{t-i}.\\
    \end{split}
    \label{eq: truncated CL noise evolution - eq2}
\end{equation*}
The equality comes from the assumption that $\hat{x}_{0|-1,\hat{\Theta}_{-1}} = 0$. Further, note that $x_0 \sim \mathcal{N}(0, \Sigma)$. Therefore, $\Bar{x}_{t-H-1}$ represents the effect of $\begin{bmatrix} w_{i-1}^\top & z_i^\top & \eta_{i-1}^\top \end{bmatrix}^\top$ for $0 \leq i \leq t - H -2$, which are independent with respect to the time. Now $f_t$ can be rolled back $H$ time steps backwards, as follows:
\vspace{0mm}
\begin{equation*}
    \begin{split}
        f_t & = \mathbf{\hat{\Psi}}_t\Bar{x}_t + 
        \begin{bmatrix}
            0 & I & 0 & 0\\
            0 & 0 & 0 & I 
        \end{bmatrix}
        \begin{bmatrix}
            w_{t-1} \\ z_t \\ \eta_{t-1} \\ \eta_t
        \end{bmatrix}\\
        & = \mathbf{\hat{\Psi}}_t \left(\prod_{i = 0}^H \mathbf{\hat{G}_2}^{(t - i)} \Bar{x}_{t-H-1} + \sum_{i = 1}^{H+1} \left( \prod_{j=2}^i \mathbf{\hat{G}_2}^{(t-j+2)} \right) \mathbf{\hat{G}_3}^{(t-i+1)}\eta^e_{t-i} \right) + 
        \begin{bmatrix}
            0 & I & 0 & 0\\
            0 & 0 & 0 & I 
        \end{bmatrix}
        \begin{bmatrix}
            w_{t-1} \\ z_t \\ \eta_{t-1} \\ \eta_t
        \end{bmatrix}\\
        & = \mathbf{\hat{\Psi}}_t \sum_{i = 2}^{H+1} \left( \prod_{j=2}^i \mathbf{\hat{G}_2}^{(t-j+2)} \right) \mathbf{\hat{G}_3}^{(t-i+1)}\eta^e_{t-i}\\
        & \hspace{4mm}+ 
        \underbrace{\begin{bmatrix}
            C & I & CB & 0\\
            -\hat{K}_t\hat{L}_t C & -\hat{K}_t\hat{L}_t & -\hat{K}_t\left( I - \hat{L}_t \hat{C}_t \right)\hat{B}_{t-1}  -\hat{K}_t\hat{L}_t C B & I 
        \end{bmatrix}}_{\mathbf{\hat{G}_1}^{(t)}}
        \begin{bmatrix}
            w_{t-1} \\ z_t \\ \eta_{t-1} \\ \eta_t
        \end{bmatrix} + \mathbf{r}_t^\mathbf{c},
    \end{split}
\end{equation*}
where $\mathbf{r}^\mathbf{c}_t$ is the residual vector that represents the effect of $\begin{bmatrix} w_{i-1}^\top & z_i^\top & \eta_{i-1}^\top \end{bmatrix}^\top$ for $0 \leq i \leq t - H -2$,  which are independent with respect to the time. For ease of representation, let $\varrho_{t} = \begin{bmatrix} w_{t}^\top & z_{t+1}^\top & \eta_{t}^\top & \eta_{t+1}^\top \end{bmatrix}^\top$. Now $f_t$ can be compactly represented as such:
\vspace{0mm}
\begin{equation*}
    \begin{split}
        f_t = \mathbf{\Bar{G}}_{t}
        \begin{bmatrix}
            {\varrho_{t-1}}^\top & {\eta^e_{t-2}}^\top & \dots & {\eta^e_{t-H-1}}^\top
        \end{bmatrix}^\top
        + \mathbf{r}^\mathbf{c}_t,
    \end{split}
\end{equation*}
where
\vspace{0mm}
\begin{equation*}
\begin{split}
    \mathbf{\Bar{G}}_{t} & = 
    \begin{bmatrix}
        \mathbf{\hat{G}_1}^{(t)} & \mathbf{\hat{\Psi}}_t \mathbf{\hat{G}_2}^{(t)} \mathbf{\hat{G}_3}^{(t-1)} & \mathbf{\hat{\Psi}}_t \mathbf{\hat{G}_2}^{(t)} \mathbf{\hat{G}_2}^{(t-1)} \mathbf{\hat{G}_3}^{(t-2)} & ... & \mathbf{\hat{\Psi}}_t \mathbf{\hat{G}_2}^{(t)} \mathbf{\hat{G}_2}^{(t-1)} ... \mathbf{\hat{G}_2}^{(t-H+1)}\mathbf{\hat{G}_3}^{(t-H)} 
    \end{bmatrix}\\
    & \in \mathbb{R}^{(n_y + n_u) \times (H+1)(n_x + n_y)}.
\end{split}
\end{equation*}
We can now represent $\Bar{\phi}_t$ as follows:
\vspace{0mm}
\begin{equation} \label{eq:phi_def_shi}
    \begin{split}
        \Bar{\phi}_t & = 
        \begin{bmatrix}
            f_{t-1}\\.\\.\\.\\f_{t-H}
        \end{bmatrix}
        =
        \mathcal{G}^\text{cl}_t 
        \begin{bmatrix}
            \varrho_{t-2}\\ \eta^e_{t-3} \\.\\.\\.\\ \eta^e_{t-2H-1}
        \end{bmatrix}
        +
        \begin{bmatrix}
            \mathbf{r}_{t-1}^\mathbf{c}\\.\\.\\.\\\mathbf{r}_{t-H}^\mathbf{c}
        \end{bmatrix},
    \end{split}
\end{equation}
where 
\begin{equation*}
    \mathcal{G}^\text{cl}_t = 
    \begin{bmatrix}
        \mathbf{\Bar{G}}_{t-1} & 0 & 0 & 0 & ...\\
        0 & \mathbf{\Bar{G}}_{t-2} & 0 & 0 & ...\\
        .& & & &\\
        .& & & &\\
        .& & & &\\
        0 &  0 &  0 & ...& \mathbf{\Bar{G}}_{t-H}
    \end{bmatrix}.
    \label{eq: closed loop noise evolution parameter}
\end{equation*}
If the true model parameter $\Theta$ is known, the optimal control policy can be deployed. Then, $\mathcal{G}^{cl}$ captures the effect of the process and measurement noises as well as excitation signals on $\Bar{\phi}_t$ while using the optimal control policy:
\begin{equation*}
    \mathcal{G}^\text{cl} = 
    \begin{bmatrix}
        \mathbf{\Bar{G}} & 0 & 0 & 0 & ...\\
        0 & \mathbf{\Bar{G}} & 0 & 0 & ...\\
        .& & & &\\
        .& & & &\\
        .& & & &\\
        0 &  0 &  0 & ...& \mathbf{\Bar{G}}
    \end{bmatrix},
    \label{eq: closed loop noise evolution true parameter}
\end{equation*}
where
\begin{equation*}
    \mathbf{\Bar{G}} = 
    \begin{bmatrix}
        \mathbf{G_1} & \mathbf{\Psi} \mathbf{G_2} \mathbf{G_3} & \mathbf{\Psi} \mathbf{G_2}^2 \mathbf{G_3} & ... & \mathbf{\Psi} \mathbf{G_2}^{H}\mathbf{G_3} 
    \end{bmatrix} \in \mathbb{R}^{(n_y + n_u) \times (H+1)(n_x + n_y)}
\end{equation*}
with
\begin{equation*}
\begin{split}
    & \mathbf{G_1} = 
    \begin{bmatrix}
        C & I & CB & 0\\
        -KL C & -KL & -K\left( I - L C \right)B  -K L C B & I 
    \end{bmatrix},
    \hspace{4mm}
    \mathbf{\Psi} = 
    \begin{bmatrix}
        C & 0\\
        0 & -K
    \end{bmatrix},\\
    & \mathbf{G_2} = 
    \begin{bmatrix}
        A & -B K\\
        L C A & \left(I - L C \right) \left( A - B K \right) - L C B K 
    \end{bmatrix}, \hspace{4mm} \mathbf{G_3} = 
    \begin{bmatrix}
        I & 0 & B\\
        L C & L & \left( I - L C \right)B + L C B
    \end{bmatrix}.
\end{split}
\end{equation*}
With $H$ chosen such that $\mathbf{\Bar{G}}$ is full-row rank, $\mathcal{G}^{cl}$ can be shown to have full-row rank with QR decomposition \cite{lale2020regret}. Therefore, $\sigma_\text{min}\left( \mathcal{G}^\text{cl} \right) \geq \sigma_\text{c}$ for some positive constant $\sigma_\text{c}$.

\subsection{Proofs of Lemmas \ref{Lemma: bound on the states and outputs during LBC phase} - \ref{Lemma: Markov parameter estimation error during LBC phase}:} 

Recall that LQG-NAIVE (refer to Algorithm \ref{algorithm: LQG-NAIVE}) proceeds in an episodic fashion. Therefore, the proofs of Lemmas \ref{Lemma: bound on the states and outputs during LBC phase} - \ref{Lemma: Markov parameter estimation error during LBC phase} are also analyzed episode-wise. The proofs of the three lemmas are proven jointly based on an inductive argument, where we will first analyze the $0^\text{th}$ episode, i.e., $t \in [T_\text{w}, 2T_\text{w}-1]$, which consists of $T_\text{w}$ time steps. During the $0^\text{th}$, it must be noted that $\hat{\Theta}_t = \hat{\Theta}_{T_\text{w}}$. Therefore, the event $\mathcal{E}_{\mathbf{M}_{T_\text{w}}}$ holds, where 
\begin{equation} \mathcal{E}_{\mathbf{M}_{T_\text{w}}} \coloneqq \left\{||\hat{\mathbf{M}}_{t} - \mathbf{M}|| \leq ||\hat{\mathbf{M}}_{T_\text{w}} - \mathbf{M}|| \leq 1, \text{for } t\in [T_\text{w}, 2T_\text{w}-1] \text{ and }T_\text{w} \geq \Bar{T}^\text{alt,1}_\text{w} \right\}, \label{eq:event_def_shi}
\end{equation}
which holds with a probability of at least $1 - \delta$ and $\Bar{T}^\text{alt,1}_\text{w}$ is as defined in \eqref{eq: Alt LB on T_w}. This event is a consequence of Lemma \ref{Lemma: Markov parameter estimation error after warm-up} for the warm-up phase. Given this event, we have

\begin{enumerate}
    \item $\Theta \in \mathcal{C}_{A}(t) \times \mathcal{C}_{B}(t) \times \mathcal{C}_{C}(t) \times \mathcal{C}_{L}(t)$ for all $t \in [T_\text{w}, 2T_\text{w}-1]$, where $\mathcal{C}_{A}(t), \mathcal{C}_{B}(t), \mathcal{C}_{C}(t) \text{ and } \mathcal{C}_{L}(t)$ are defined in Lemma \ref{Lemma: confidence bound on the system parameters}.
    \item $|| \hat{C}_t - C||, || \hat{B}_t -  B ||, || \hat{F}_t -  F || \leq  \beta_B(T_\text{w}) = 1$ when $T_\text{w} \geq T_B$.
    \item  $ \left| \left| \hat{A}_t -  A  \right| \right| \leq \beta_A(T_\text{w}) =\sigma_{n_x}(A)/2$ when $T_\text{w} \geq T_A$.
    \item $ \left| \left| \hat{L}_t -  L  \right| \right| \leq \beta_L(T_\text{w})$ for some $\beta_L(T_\text{w}) > 0$.
\end{enumerate}
Since the input-output behavior of a system with parameter $\Theta$ and its similarity transformation are the same, without loss of generality, we assume that the similarity transformation matrix $\mathbf{T} = I$. The above bounds are a consequence of Corollary \ref{Corollary: confidence bound on the system parameters after warm-up}.

Firstly, we will show that the system signals are `well-controlled', i.e., the bound on the norm of the system signals grows at most polynomial with respect to the time step $t$.

\subsection*{Proof of Lemma \ref{Lemma: bound on the states and outputs during LBC phase} during the $0^\text{th}$ episode:}

This proof is an extension of an earlier result \cite[Lemma 4.1]{lale2021adaptive}, where the extension requires accounting for the additive Gaussian excitatory signals. Some of the steps in this proof are identical to the proof of \cite[Lemma 4.1]{lale2021adaptive} and those identical steps are presented here for the sake of completeness with citations. 

Based on \eqref{eq: measurement and time update}, consider the following decomposition of $\hat{x}_{t|t,\hat{\Theta}_t}$:
\begin{equation*}
    \begin{split}
        \hat{x}_{t|t,\hat{\Theta}_t} & = \hat{x}_{t|t-1,\hat{\Theta}_{t-1}} + \hat{L}_t(y_t - \hat{C}_t\hat{x}_{t|t-1,\hat{\Theta}_{t-1}})\\
        & = \hat{A}_{t-1}\hat{x}_{t-1|t-1,\hat{\Theta}_{t-1}} - \hat{B}_{t-1}\hat{K}_{t-1}\hat{x}_{t-1|t-1,\hat{\Theta}_{t-1}}\\
        & \hspace{4mm} + \hat{B}_{t-1}\eta_{t-1} + \hat{L}_t\Bigg(y_t - \hat{C}_t\left(\hat{A}_{t-1}\hat{x}_{t-1|t-1,\hat{\Theta}_{t-1}} - \hat{B}_{t-1}\hat{K}_{t-1}\hat{x}_{t-1|t-1,\hat{\Theta}_{t-1}} + \hat{B}_{t-1}\eta_{t-1}\right)\Bigg)\\
        & = \left(I - \hat{L}_t\hat{C}_t\right)\left( \hat{A}_{t-1} - \hat{B}_{t-1}\hat{K}_{t-1}\right)\hat{x}_{t-1|t-1,\hat{\Theta}_{t-1}} + \left(I - \hat{L}_t\hat{C}_t\right)\hat{B}_{t-1}\eta_{t-1}\\
        & \hspace{4mm} + \hat{L}_t \left( Cx_t + z_t - C\hat{x}_{t|t-1,\hat{\Theta}_{t-1}} + C\hat{x}_{t|t-1,\hat{\Theta}_{t-1}}\right)\\
        & = \left(I - \hat{L}_t\hat{C}_t\right)\left( \hat{A}_{t-1} - \hat{B}_{t-1}\hat{K}_{t-1}\right)\hat{x}_{t-1|t-1,\hat{\Theta}_{t-1}} + \left(I - \hat{L}_t\hat{C}_t\right)\hat{B}_{t-1}\eta_{t-1}\\
        & \hspace{4mm} + \hat{L}_t \left( Cx_t - C\hat{x}_{t|t-1,\hat{\Theta}_{t-1}} + C\left( \hat{A}_{t-1} - \hat{B}_{t-1}\hat{K}_{t-1} \right)\hat{x}_{t-1|t-1,\hat{\Theta}_{t-1}} + C\hat{B}_{t-1}\eta_{t-1} + z_t\right)\\
        & = \Bigg( \hat{A}_{t-1} - \hat{B}_{t-1}\hat{K}_{t-1}  - \hat{L}_t \left( \hat{C}_t\hat{A}_{t-1} - \hat{C}_t\hat{B}_{t-1}\hat{K}_{t-1} - C\hat{A}_{t-1} + C\hat{B}_{t-1}\hat{K}_{t-1} \right)\Bigg)\hat{x}_{t-1|t-1,\hat{\Theta}_{t-1}}\\
        & \hspace{4mm} + \left(I - \hat{L}_t\hat{C}_t\right)\hat{B}_{t-1}\eta_{t-1} + \hat{L}_{t}C\left( x_t - \hat{x}_{t|t-1,\hat{\Theta}_{t-1}} + \hat{x}_{t|t-1,\Theta} - \hat{x}_{t|t-1,\Theta} \right) + \hat{L}_tC\hat{B}_{t-1}\eta_{t-1} + \hat{L}_tz_t\\
    \end{split}
\end{equation*}
\begin{equation}
    \begin{split}
        & = \underbrace{\Bigg( \hat{A}_{t-1} - \hat{B}_{t-1}\hat{K}_{t-1}  - \hat{L}_t \left( \hat{C}_t\hat{A}_{t-1} - \hat{C}_t\hat{B}_{t-1}\hat{K}_{t-1} - C\hat{A}_{t-1} + C\hat{B}_{t-1}\hat{K}_{t-1} \right)\Bigg)}_{\text{Can be thought of as the dynamics}}\hat{x}_{t-1|t-1,\hat{\Theta}_{t-1}}\\
        & \hspace{-4mm} + \underbrace{\hat{L}_{t}C\left( x_t - \hat{x}_{t|t-1,\Theta} \right) + \hat{L}_{t}C\left( \hat{x}_{t|t-1,\Theta} - \hat{x}_{t|t-1,\hat{\Theta}_{t-1}} \right) + \hat{B}_{t-1}\eta_{t-1} + \hat{L}_tz_t + \hat{L}_t\left( C - \hat{C}_t \right)\hat{B}_{t-1}\eta_{t-1}}_{\text{Can be thought of as a process noise}}.\\
    \end{split}
    \label{eq: expanding x_hat_t}
\end{equation}
We will upper bound $||\hat{x}_{t|t,\hat{\Theta}_t}||$ by upper bounding the norm of each of the above terms in the decomposition. Under the event $\mathcal{E}_{\mathbf{M}_{T_\text{w}}}$, we have the following bound, which holds for all $t \in [T_\text{w}, 2T_\text{w}-1]$:
\vspace{0mm}
\begin{equation}
    \begin{split}
        ||\hat{L}_t|| & \leq ||\hat{L}_t - L|| + ||L||\\
        & \leq \beta_L(T_\text{w}) + ||L|| \coloneqq \zeta.
    \end{split}
    \label{eq: bound on L}
\end{equation}
From Theorem \ref{Theorem: perturbation bound on K}, we have that there exists a time step $T_K > H$ such that for $T_\text{w} \geq T_K$, $\max\{||\hat{A}_t - A||, ||\hat{B}_t - B||\} \leq \epsilon_K$ for all $t \in [T_\text{w}, 2T_\text{w}-1]$. Then, $||\hat{K}_t - K|| \leq \beta_K(T_\text{w})$ for all $t \in [T_\text{w}, 2T_\text{w}-1]$, where $\beta_K(T_\text{w})$ is a constant dependent on the solution of DARE \eqref{eq: DARE to compute K}. Therefore, we have the following bound, which holds for all $t \in [T_\text{w}, 2T_\text{w}-1]$ with $T_\text{w} \geq T_K$ under the event $\mathcal{E}_{\mathbf{M}_{T_\text{w}}}$:
\begin{equation}
    \begin{split}
        ||\hat{K}_t|| & \leq ||\hat{K}_t - K|| + ||K||\\
        & \leq \beta_K(T_\text{w}) + ||K|| \coloneqq \Gamma.
    \end{split}
    \label{eq: bound on K}
\end{equation}

\subsection*{Bounding the norm of the `dynamics' term:}
Let
\vspace{0mm}
\begin{equation*}
    \begin{split}
        N_t = \hat{A}_{t-1} - \hat{B}_{t-1}\hat{K}_{t-1} - \hat{L}_t \left( \hat{C}_t - C \right) \left( \hat{A}_{t-1} - \hat{B}_{t-1}\hat{K}_{t-1} \right).
    \end{split}
\end{equation*}
From Lemma \ref{Lemma: decay of closed loop matrix}, there exists a time step $T_\text{cl} > H$ such that for $T_\text{w} \geq T_\text{cl}$, $\max\{||\hat{A}_t - A||, ||\hat{B}_t - B||\} \leq \epsilon_N$ for all $t \in [T_\text{w}, 2T_\text{w}-1]$. Then, under the event $\mathcal{E}_{\mathbf{M}_{T_\text{w}}}$, $||(\hat{A}_t - \hat{B}_t \hat{K}_t)^\tau|| \leq \kappa_\text{cl} \varphi^\tau_\text{cl}$ for any $\tau \in \mathbb{Z}^+$ and for all $t \in [T_\text{w}, 2T_\text{w}-1]$ and $T_\text{w} \geq T_\text{cl}$, where $\kappa_\text{cl} \geq 1$ and $\varphi_\text{cl} \in (0,1)$ with $\varphi_\text{cl} \geq \rho(\hat{A}_t - \hat{B}_t \hat{K}_t)$. Now from Lemma \ref{Lemma: general matrix perturbation bound}, we have 
\begin{equation}
    \begin{split}
        \left| \left| N_t^\tau \right| \right| & \leq \kappa_\text{cl} \left( \kappa_\text{cl} \left| \left| \hat{L}_t \left( \hat{C}_t - C \right) \left( \hat{A}_{t-1} - \hat{B}_{t-1}\hat{K}_{t-1} \right) \right| \right| + \varphi_\text{cl} \right)^\tau\\
        & \leq \kappa_\text{cl} \varphi_N^\tau,
    \end{split}
    \label{eq: bound on the dynamics term of x_hat_t}
\end{equation}
where $\varphi_N \coloneqq \kappa_\text{cl}^2 \zeta \beta_C(T_\text{w}) \varphi_\text{cl} + \varphi_\text{cl}$ and $\tau \in \mathbb{Z}^+$. There exists a time step $T_N > H$ such that, for all $t \in [T_\text{w}, 2T_\text{w}-1]$ and $T_\text{w} \geq T_N$, $\varphi_N \in (0,1)$.

\subsection*{Bounding the norm of the `process noise' term:}

The term $\hat{L}_{t}C\left( x_t - \hat{x}_{t|t-1,\Theta} \right) + \hat{L}_tz_t$ is a $\zeta \left( ||C|| ||\Sigma||^{1/2} + \sigma_z \right)$ - sub-Gaussian random variable \cite[Lemma 4.1]{lale2020regret}. Therefore, from Lemma \ref{Lemma B.1}, we get:
\vspace{0mm}
\begin{equation}
    \begin{split}
        \left| \left| \hat{L}_{t}C\left( x_t - \hat{x}_{t|t-1,\Theta} \right) + \hat{L}_tz_t \right| \right| \leq \zeta \left( ||C|| ||\Sigma||^{1/2} + \sigma_z \right) \sqrt{2n_x \log \left(\frac{2n_xt}{\delta}\right)},
    \end{split}
    \label{eq: bound on process noise term of  x_hat_t - 1}
\end{equation}
which holds with a probability of at least $1- \delta/t$. The constant $\zeta$ comes from \eqref{eq: bound on L}. Re-parameterise $\delta/t \rightarrow \delta$. Now, under $\mathcal{E}_{\mathbf{M}_{T_\text{w}}}$, we have:
\vspace{0mm}
\begin{equation*}
    \begin{split}
        \left| \left| \hat{B}_{t-1}\eta_{t-1} + \hat{L}_t\left( C - \hat{C}_t \right)\hat{B}_{t-1}\eta_{t-1} \right| \right| & \leq \left( \left| \left| B \right| \right| + \left| \left| \hat{B}_{t-1} - B \right| \right| \right) ||\eta_{t-1}|| \left( 1 + \left| \left| \hat{L}_t (\hat{C}_t - C)  \right| \right| \right)\\
        & \leq \left(  \left| \left| B \right| \right| + 1 \right) ||\eta_{t-1}|| \left( 1 + \zeta \right).
     \end{split}
\end{equation*}
Recall from \eqref{eq: LBC policy naive} that $\eta_t \sim \sigma_{\eta_t}\mathcal{N}\left(0, I \right)$, where $\sigma^2_{\eta_t} = \frac{\gamma}{\sqrt{l_k}}$ with $\gamma > 0$. Therefore, from Lemma \ref{Lemma B.1}, we have
\vspace{0mm}
\begin{equation}
    \begin{split}
        \left| \left| \hat{B}_{t-1}\eta_{t-1} + \hat{L}_t\left( C - \hat{C}_t \right)\hat{B}_{t-1}\eta_{t-1} \right| \right| & \leq \sigma_{\eta_t} \left(  \left| \left| B \right| \right| + 1 \right) \left( 1 + \zeta \right) \sqrt{2n_u\log \left( \frac{2n_ut}{\delta}\right)}\\
        & \leq \sqrt{\gamma} \left(  \left| \left| B \right| \right| + 1 \right) \left( 1 + \zeta \right) \sqrt{2n_u\log \left( \frac{2n_ut}{\delta}\right)},\\
     \end{split}
     \label{eq: bound on process noise term of  x_hat_t - 2}
\end{equation}
which holds with a probability of at least $1 - \delta/t$ under the event $\mathcal{E}_{\mathbf{M}_{T_\text{w}}}$. Re-parameterise $\delta/t \rightarrow \delta$.  Finally, we bound the spectral norm of the term $\Delta_t = \left( \hat{x}_{t|t-1,\Theta} - \hat{x}_{t|t-1,\hat{\Theta}_{t-1}} \right)$.

\subsection*{Bounding $||\Delta_t||$:}

From \eqref{eq: measurement and time update}, we have the following decompositions:
\begin{equation*}
    \begin{split}
        \hat{x}_{t+1|t, \Theta} & = A \hat{x}_{t|t, \Theta} + B u_t\\
        & = A \hat{x}_{t|t, \Theta} - B \hat{K}_t \hat{x}_{t|t, \hat{\Theta}_t} + B \eta_t\\
        & = A \hat{x}_{t|t, \Theta} - B \hat{K}_t \hat{x}_{t|t, \Theta} + B \hat{K}_t \hat{x}_{t|t, \Theta} - B \hat{K}_t \hat{x}_{t|t, \hat{\Theta}_t} + B \eta_t\\
        & = \left( A - B \hat{K}_t\right)\hat{x}_{t|t, \Theta} - B \hat{K}_t \left(\hat{x}_{t|t, \hat{\Theta}_t} - \hat{x}_{t|t, \Theta} \right) + B \eta_t.\\
        \hat{x}_{t+1|t, \hat{\Theta}_t} & = \hat{A}_t \hat{x}_{t|t, \hat{\Theta}_t} + \hat{B}_t u_t\\
        & = \hat{A}_t \hat{x}_{t|t, \hat{\Theta}_t} - \hat{B}_t \hat{K}_t \hat{x}_{t|t, \hat{\Theta}_t} + \hat{B}_t \eta_t\\
        & = \left( \hat{A}_t + A - A \right) \hat{x}_{t|t, \hat{\Theta}_t} + \left( - \hat{B}_t \hat{K}_t + B \hat{K}_t - B \hat{K}_t \right)\hat{x}_{t|t, \hat{\Theta}_t} + \hat{B}_t \eta_t\\
        & = \left( \hat{A}_t + A - A - \hat{B}_t \hat{K}_t + B \hat{K}_t - B \hat{K}_t \right) \hat{x}_{t|t, \hat{\Theta}_t} + \hat{B}_t \eta_t\\
        & = \left( \hat{A}_t + A - A - \hat{B}_t \hat{K}_t + B \hat{K}_t - B \hat{K}_t \right) \left( \hat{x}_{t|t, \hat{\Theta}_t} -\hat{x}_{t|t, \Theta} + \hat{x}_{t|t, \Theta} \right) + \hat{B}_t \eta_t\\
        & = \underbrace{\left( \hat{A}_t - A - \hat{B}_t \hat{K}_t + B \hat{K}_t \right)}_{\delta_{\hat{\Theta}_t}}\hat{x}_{t|t, \Theta} + \left(A - B\hat{K}_t \right) \hat{x}_{t|t, \Theta}\\
        & \hspace{4mm} +  \left( \hat{A}_t - A - \hat{B}_t \hat{K}_t + B \hat{K}_t \right) \left(\hat{x}_{t|t, \hat{\Theta}_t} - \hat{x}_{t|t, \Theta} \right) + \left(A - B\hat{K}_t \right)\left(\hat{x}_{t|t, \hat{\Theta}_t} - \hat{x}_{t|t, \Theta} \right)  + \hat{B}_t \eta_t.\\
    \end{split}
\end{equation*}
Thus $\Delta_{t+1}$ is,
\vspace{0mm}
\begin{equation*}
    \begin{split}
        & \Delta_{t+1} = \hat{x}_{t+1|t, \Theta} - \hat{x}_{t+1|t, \hat{\Theta}_t}\\
        & = \left( A - B \hat{K}_t\right)\hat{x}_{t|t, \Theta} - B \hat{K}_t \left(\hat{x}_{t|t, \hat{\Theta}_t} - \hat{x}_{t|t, \Theta} \right) + B \eta_t - \delta_{\hat{\Theta}_t}\hat{x}_{t|t, \Theta} - \delta_{\hat{\Theta}_t}\left(\hat{x}_{t|t, \hat{\Theta}_t} - \hat{x}_{t|t, \Theta} \right)\\
        & \hspace{4mm} -\left(A - B\hat{K}_t \right)\left(\hat{x}_{t|t, \hat{\Theta}_t} - \hat{x}_{t|t, \Theta} \right) - \left(A - B\hat{K}_t \right) \hat{x}_{t|t, \Theta} - \hat{B}_t \eta_t\\
        & = A\left( \hat{x}_{t|t, \Theta} - \hat{x}_{t|t, \hat{\Theta}_t} \right)  - \delta_{\hat{\Theta}_t}\hat{x}_{t|t, \Theta} + \delta_{\hat{\Theta}_t}\left( \hat{x}_{t|t, \Theta} -\hat{x}_{t|t, \hat{\Theta}_t} \right) + B \eta_t - \hat{B}_t \eta_t\\
        & = \left( A +  \delta_{\hat{\Theta}_t}\right)\left( \hat{x}_{t|t, \Theta} - \hat{x}_{t|t, \hat{\Theta}_t} \right) - \delta_{\hat{\Theta}_t}\hat{x}_{t|t, \Theta} + \left(B - \hat{B}_t\right) \eta_t.\\
    \end{split}
\end{equation*}
We will now decompose the term $\left( \hat{x}_{t|t, \Theta} - \hat{x}_{t|t, \hat{\Theta}_t} \right)$. From \eqref{eq: measurement and time update}, we have:
\vspace{0mm}
\begin{equation*}
    \begin{split}
        \hat{x}_{t|t, \Theta} - \hat{x}_{t|t, \hat{\Theta}_t} & = \left(I - \hat{L}_t \hat{C}_t \right) \Delta_t + \left( L - \hat{L}_t \right)e_t + \hat{L}_t \left( \hat{C}_t - C \right)\hat{x}_{t|t-1, \Theta}.
    \end{split}
\end{equation*}
\vspace{0mm}
Now, substituting the above expansion into $\Delta_{t+1}$, we get
\begin{equation*}
    \begin{split}
        & \Delta_{t+1}\\
        & = \left( A +  \delta_{\hat{\Theta}_t}\right) \Bigg( \left(I - \hat{L}_t \hat{C}_t \right) \Delta_t + \left( L - \hat{L}_t \right)e_t + \hat{L}_t \left( \hat{C}_t - C \right)\hat{x}_{t|t-1, \Theta} \Bigg) - \delta_{\hat{\Theta}_t}\hat{x}_{t|t, \Theta} + \left(B - \hat{B}_t\right) \eta_t\\
        & = \left( A +  \delta_{\hat{\Theta}_t}\right) \left(I - \hat{L}_t \hat{C}_t \right) \Delta_t + \left( A +  \delta_{\hat{\Theta}_t}\right)\left( L - \hat{L}_t \right)e_t + \left( A +  \delta_{\hat{\Theta}_t}\right) \hat{L}_t \left( \hat{C}_t - C \right)\hat{x}_{t|t-1, \Theta}\\
        &\hspace{4mm}  - \delta_{\hat{\Theta}_t}\hat{x}_{t|t, \Theta} + \left(B - \hat{B}_t\right) \eta_t\\
        & = \left( A +  \delta_{\hat{\Theta}_t}\right) \left(I - \hat{L}_t \hat{C}_t \right) \Delta_t + \left( A +  \delta_{\hat{\Theta}_t}\right)\left( L - \hat{L}_t \right)e_t + \left( A +  \delta_{\hat{\Theta}_t}\right) \hat{L}_t \left( \hat{C}_t - C \right)\hat{x}_{t|t-1, \Theta}\\
        &\hspace{4mm} - \delta_{\hat{\Theta}_t}\hat{x}_{t|t-1, \Theta} - \delta_{\hat{\Theta}_t}Le_t + \left(B - \hat{B}_t\right) \eta_t\\
    \end{split}
\end{equation*}
\vspace{0mm}
\begin{equation}
    \begin{split}
        & = \left( A +  \delta_{\hat{\Theta}_t}\right) \left(I - \hat{L}_t \hat{C}_t \right) \Delta_t + \left(AL - A\hat{L}_t + \delta_{\hat{\Theta}_t}L - \delta_{\hat{\Theta}_t} \hat{L}_t - \delta_{\hat{\Theta}_t}L \right)e_t\\
        & \hspace{4mm} + \Bigg( \left( A +  \delta_{\hat{\Theta}_t}\right) \hat{L}_t \left( \hat{C}_t - C \right) - \delta_{\hat{\Theta}_t} \Bigg)\hat{x}_{t|t-1, \Theta} + \left(B - \hat{B}_t\right) \eta_t\\
         & = \left( A +  \delta_{\hat{\Theta}_t}\right) \left(I - \hat{L}_t \hat{C}_t \right) \Delta_t + \Bigg(A \left( L - \hat{L}_t \right) - \delta_{\hat{\Theta}_t}\hat{L}_t \Bigg)e_t + \Bigg( \left( A +  \delta_{\hat{\Theta}_t}\right) \hat{L}_t \left( \hat{C}_t - C \right) - \delta_{\hat{\Theta}_t} \Bigg)\hat{x}_{t|t-1, \Theta}\\
         & \hspace{4mm} + \left(B - \hat{B}_t\right) \eta_t\\
        & = \sum_{i = 0}^t \prod_{j = 0}^{t-i-1} \Bigg( \left( A +  \delta_{\hat{\Theta}_{t-j}}\right) \left(I - \hat{L}_{t-j} \hat{C}_{t-j} \right) \Bigg) \Bigg(A \left( L - \hat{L}_i \right) - \delta_{\hat{\Theta}_i}\hat{L}_i \Bigg)e_i\\
        & \hspace{4mm} + \sum_{i = 1}^t \prod_{j = 0}^{t-i-1} \Bigg( \left( A +  \delta_{\hat{\Theta}_{t-j}}\right) \left(I - \hat{L}_{t-j} \hat{C}_{t-j} \right) \Bigg) \Bigg( \left( A +  \delta_{\hat{\Theta}_i}\right) \hat{L}_i \left( \hat{C}_i - C \right) - \delta_{\hat{\Theta}_i} \Bigg)\hat{x}_{i|i-1, \Theta}\\
        & \hspace{4mm} + \sum_{i = 0}^t \prod_{j = 0}^{t-i-1} \Bigg( \left( A +  \delta_{\hat{\Theta}_{t-j}}\right) \left(I - \hat{L}_{t-j} \hat{C}_{t-j} \right) \Bigg) \left(B - \hat{B}_i\right) \eta_i. \\
    \end{split}
    \label{eq: state estimation and output bound - 1}
\end{equation}
The last equality in \eqref{eq: state estimation and output bound - 1} comes from the assumption that $\hat{x}_{0|-1,\Theta} = \hat{x}_{0|-1,\hat{\Theta}_{-1}} = 0$. Let us now decompose $\hat{x}_{i|i-1, \Theta}$:
\vspace{0mm}
\begin{equation*}
    \begin{split}
        \hat{x}_{i|i-1, \Theta} & = A\hat{x}_{i-1|i-1, \Theta} - B\hat{K}_{i-1}\hat{x}_{i-1|i-1, \hat{\Theta}_{i-1}}  + B\eta_{i-1}\\
        & = A\hat{x}_{i-1|i-1, \Theta} - B\hat{K}_{i-1}\hat{x}_{i-1|i-1, \Theta} - B\hat{K}_{i-1} \left( \hat{x}_{i-1|i-1, \hat{\Theta}_{i-1}} - \hat{x}_{i-1|i-1, \Theta} \right) + B\eta_{i-1}\\
        & = \left( A - B\hat{K}_{i-1}\right)\hat{x}_{i-1|i-1, \Theta} + B\hat{K}_{i-1} \left( \hat{x}_{i-1|i-1, \Theta} - \hat{x}_{i-1|i-1, \hat{\Theta}_{i-1}}\right) + B\eta_{i-1}\\
        & = \left( A - B\hat{K}_{i-1}\right) \left(\hat{x}_{i-1|i-2, \Theta} + Le_{i-1}\right)\\
        & \hspace{-4mm} + B\hat{K}_{i-1} \Bigg( \left(I - \hat{L}_{i-1} \hat{C}_{i-1} \right) \Delta_{i-1} + \left( L - \hat{L}_{i-1} \right)e_{i-1} + \hat{L}_{i-1} \left( \hat{C}_{i-1} - C \right)\hat{x}_{i-1|i-2, \Theta} \Bigg) + B\eta_{i-1}\\
    \end{split}    
\end{equation*}
\begin{equation}
    \begin{split}
        & = \Bigg( A - B\hat{K}_{i-1}\left(I - \hat{L}_{i-1} \left( \hat{C}_{i-1} - C \right) \right) \Bigg) \hat{x}_{i-1|i-2, \Theta} + B\hat{K}_{i-1} \left(I - \hat{L}_{i-1} \hat{C}_{i-1} \right) \Delta_{i-1}\\
        & \hspace{4mm} + \Bigg( \left(  A - B\hat{K}_{i-1} \right) L +  B\hat{K}_{i-1} \left( L - \hat{L}_{i-1} \right) \Bigg)e_{i-1} + B\eta_{i-1}\\
        & = \sum_{j = 0}^{i-1} \prod_{k = 0}^{i- 2 - j} \Bigg( A - B\hat{K}_{i-1-k} + B\hat{K}_{i-1-k}\hat{L}_{i-1-k} \left( \hat{C}_{i-1-k} - C \right) \Bigg)\\
        & \hspace{4mm} \Bigg( B\hat{K}_{j} \left(I - \hat{L}_{j} \hat{C}_{j} \right) \Delta_{j} + \left(\left( A - B\hat{K}_{j} \right) L +  B\hat{K}_{j} \left( L - \hat{L}_{j} \right)\right) e_j +  B\eta_{j}\Bigg).
    \end{split}
    \label{eq: state estimation and output bound - 2}
\end{equation}
For ease of representation, we can define the following terms:
\vspace{0mm}
\begin{equation}
    \begin{split}
        & a_i =  \left( A +  \delta_{\hat{\Theta}_{i}}\right) \left(I - \hat{L}_{i} \hat{C}_{i} \right), \hspace{2mm} b_i = \left(A \left( L - \hat{L}_i \right) - \delta_{\hat{\Theta}_i}\hat{L}_i \right),\hspace{2mm} c_i = \left( A +  \delta_{\hat{\Theta}_i}\right) \hat{L}_i \left( \hat{C}_i - C \right) - \delta_{\hat{\Theta}_i},\\
        & d_i =  \left(B - \hat{B}_i\right),\hspace{2mm} f_i =  A - B\hat{K}_{i} + B\hat{K}_{i}\hat{L}_{i} \left( \hat{C}_{i} - C \right),\hspace{2mm} g_i = B\hat{K}_{i} \left(I - \hat{L}_{i} \hat{C}_{i} \right),\\
        & h_i = \left( A - B\hat{K}_{i} \right) L +  B\hat{K}_{i} \left( L - \hat{L}_{i} \right).
    \end{split}
    \label{eq: state estimation and output bound - brevity terms}
\end{equation}
Finally, using \eqref{eq: state estimation and output bound - 1} with the equality given in \eqref{eq: state estimation and output bound - 2}, and with the definitions in \eqref{eq: state estimation and output bound - brevity terms}, we get
\vspace{0mm}
\begin{equation}
    \begin{split}
        \Delta_{t+1} & = \sum_{i = 0}^t \prod_{j = 0}^{t-i-1} a_{t-j} b_i e_i +  \sum_{i = 1}^t \prod_{j = 0}^{t-i-1} a_{t-j} c_i \left( \sum_{j = 0}^{i-1} \prod_{k = 0}^{i- 2 - j} f_{i-1-k} g_j \Delta_j \right)\\
        & \hspace{4mm} + \sum_{i = 1}^t \prod_{j = 0}^{t-i-1} a_{t-j} c_i \left(  \sum_{j = 0}^{i-1} \prod_{k = 0}^{i- 2 - j} f_{i-1-k} h_j e_j \right) + \sum_{i = 1}^t \prod_{j = 0}^{t-i-1} a_{t-j} c_i\left( \sum_{j = 0}^{i-1} \prod_{k = 0}^{i- 2 - j} f_{i-1-k} B \eta_j \right)\\
        & \hspace{4mm} + \sum_{i = 0}^t \prod_{j = 0}^{t-i-1} a_{t-j} d_i \eta_i.
    \end{split}
    \label{eq: expansion of Delta t plus 1}
\end{equation}
The above expression is similar to the one in the proof of \cite[Lemma 4.1]{lale2020regret}. The main difference lies in accounting for the additive excitation signal, i.e., the last two terms in the above expression. By following a similar analysis in the proof of \cite[Lemma 4.1]{lale2020regret}, we can upper bound the norm of each of the above terms in the decomposition by the corresponding bound at the end of the warm-up under the event $\mathcal{E}_{\mathbf{M}_{T_\text{w}}}$. However, in this paper, we have relaxed the (restrictive) assumption in \cite[Assumption 2.3]{lale2020regret}, which leads to a more involved analysis to provide a bound for the first four terms in \eqref{eq: expansion of Delta t plus 1}. 

\subsection*{Bounding the norm of $\sum_{i = 1}^t \prod_{j = 0}^{t-i-1} a_{t-j} c_i\left( \sum_{j = 0}^{i-1} \prod_{k = 0}^{i- 2 - j} f_{i-1-k} B \eta_j \right)$:}
Expanding the term, we see that
\begin{equation}
    \begin{split}
        & \sum_{i = 1}^t \prod_{j = 0}^{t-i-1} a_{t-j} c_i \left(  \sum_{j = 0}^{i-1} \prod_{k = 0}^{i- 2 - j} f_{i-1-k} B \eta_j \right) \\
        & = \left(a_t a_{t-1} ... a_2 c_1 \right)\left(B\eta_0\right)  + \left( a_t a_{t-1} ... a_3 c_2 \right)\left(f_{1}B\eta_0 + B\eta_1 \right) + \left( a_t a_{t-1} ... a_4 c_3 \right)\left(f_2 f_{1}B\eta_0 + f_2B\eta_1 + B\eta_2 \right) \\
        & \hspace{4mm} + \left( a_t a_{t-1} ... a_5 c_4 \right)\left(f_3 f_2 f_{1}B\eta_0 + f_2 f_1 B\eta_1 + f_1 B\eta_2 + B\eta_3\right) + ... \\
        & \hspace{4mm} + c_t\left(f_{t-1} f_{t-2} ... f_1B\eta_0 + f_{t-1} f_{t-2} ...f_2 B\eta_1 + ... + B\eta_{t-1}\right)\\  
        & = \left( a_t a_{t-1} ... a_2 c_1 + a_t a_{t-1} ... a_3 c_2 f_1 + a_t a_{t-1} ... a_4 c_3 f_2 f_1 + ... + c_t f_{t-1} f_{t-2} ... f_1 \right)B\eta_0\\
        & \hspace{4mm} + \left( a_t a_{t-1} ... a_3 c_2 + a_t a_{t-1} ... a_4 c_3 f_2 + a_t a_{t-1} ... a_5 c_4 f_2 f_1 + ... + c_t f_{t-1} f_{t-2} ... f_2\right) B\eta_1\\
        & \hspace{4mm} + ... + c_t B\eta_{t-1}.\\
    \end{split}
    \label{eq: expanding the 4th term in Delta_t}
\end{equation}
We will first start with bounding the norm of $a_t$. Recalling \eqref{eq: state estimation and output bound - brevity terms}, we have
\vspace{0mm}
\begin{equation*}
    \begin{split}
    a_t & = \left( A +  \delta_{\hat{\Theta}_{t}}\right) \left(I - \hat{L}_{t} \hat{C}_{t} \right)\\
    & = \left(A + \hat{A}_t - A - \hat{B}_t \hat{K}_t + B \hat{K}_t  \right) \left(I - \hat{L}_{t} \hat{C}_{t} \right)\\
    & = \left( \hat{A}_t - \left(\hat{B}_t - B \right)\hat{K}_t \right)\left(I - \hat{L}_{t} \hat{C}_{t} \right)\\
    & = \left(\hat{A}_t - \hat{A}_t \hat{L}_t \hat{C}_t \right) - \left(\hat{B}_t - B \right)\hat{K}_t\left(I - \hat{L}_{t} \hat{C}_{t} \right).\\
    \end{split}
\end{equation*}
With a similar argument to the bound of $||N_t^\tau||$ in \eqref{eq: bound on the dynamics term of x_hat_t}, where $\tau \in \mathbb{Z}^+$, there exists some time step $T_a > H$ such that $||a_t^\tau|| \leq \kappa_a \varphi_a^\tau$ under the event $\mathcal{E}_{\mathbf{M}_{T_\text{w}}}$ for all $t \in [T_\text{w}, 2T_\text{w}-1]$ with $T_\text{w} \geq T_a$, where $\varphi_a \in (0,1)$. Recalling from \eqref{eq: state estimation and output bound - brevity terms}, we have
\vspace{0mm}
\begin{equation*}
    \begin{split}
        f_t & = A - B\hat{K}_{t} + B\hat{K}_{t}\hat{L}_{t} \left( \hat{C}_{t} - C \right).
    \end{split}
\end{equation*}
From Theorem \ref{Theorem: perturbation bound on K}, we have $\left| \left| A - B\hat{K}_{t} - \left( A - BK \right) \right| \right| \leq \left| \left| B \left(K - \hat{K}_t \right) \right| \right| < \frac{1}{5||P||^{3/2}}$ \cite{simchowitz2020naive}. Furthermore, there exists an upper bound on $||A - BK||$ \cite{shi2023suboptimality}, which can be used to upper bound $|| A - B\hat{K}_t||$. Therefore, with a similar argument to the bound of $||N_t^\tau||$ in \eqref{eq: bound on the dynamics term of x_hat_t}, where $\tau \in \mathbb{Z}^+$, there exists some time step $T_f > H$ such that $||f_t^\tau|| \leq \kappa_f \varphi_f^\tau$ under the event $\mathcal{E}_{\mathbf{M}_{T_\text{w}}}$ for all $t \in [T_\text{w}, 2T_\text{w}-1]$ and $T_\text{w} \geq T_f$, where $\varphi_f \in (0,1)$. Let $\kappa_{(a,f)} \coloneqq \max\{\kappa_a, \kappa_f\}$ and $\varphi_{(a,f)} \coloneqq \max\{\varphi_a, \varphi_f\}$. Now, we proceed to provide a bound for the norm of $c_t$. Under the event $\mathcal{E}_{\mathbf{M}_{T_\text{w}}}$, the following bound exists \cite[Lemma 4.1]{lale2020regret}:
\vspace{0mm}
\begin{equation}
    \begin{split}
        ||c_t|| & \leq 2\Bigg( \Xi(A) + \beta_A(T_\text{w}) + \Gamma \beta_B(T_\text{w}) \Bigg) \zeta \beta_C(T_\text{w}) + 2\Bigg( \beta_A(T_\text{w}) + \Gamma \beta_B(T_\text{w}) \Bigg) \coloneqq \Bar{c}.\\
    \end{split}
    \label{eq: bound on c_t}
\end{equation}
Therefore, from \eqref{eq: expanding the 4th term in Delta_t} and using the above bounds, we have the following under the event $\mathcal{E}_{\mathbf{M}_{T_\text{w}}}$ for all $t \in [T_\text{w}, 2T_\text{w}-1]$:
\begin{equation*}
    \begin{split}
        &\left| \left| \sum_{i = 1}^t \prod_{j = 0}^{t-i-1} a_{t-j} c_i \left(  \sum_{j = 0}^{i-1} \prod_{k = 0}^{i- 2 - j} f_{i-1-k} B \eta_j \right) \right| \right| \\
        & \leq \left(t\kappa_{(a,f)} \varphi_{(a,f)}^{t-1} + (t-1)\kappa_{(a,f)} \varphi_{(a,f)}^{t-2} + \dots + 2\kappa_{(a,f)} \varphi_{(a,f)} + 1 \right)\Bar{c} ||B|| \sqrt{\gamma} \sqrt{2n_u \log \left( \frac{2n_ut}{\delta}\right)}\\
        & \leq \left( \left(1 + \kappa_{(a,f)} \varphi_{(a,f)} + \dots +  \kappa_{(a,f)} \varphi_{(a,f)}^{t-1}\right) + \left( \kappa_{(a,f)} \varphi_{(a,f)} + \dots +  \kappa_{(a,f)} \varphi_{(a,f)}^{t-1} \right) + \dots + \kappa_{(a,f)} \varphi_{(a,f)}^{t-1} \right)\\
        & \hspace{4mm} \Bar{c} ||B|| \sqrt{\gamma} \sqrt{2n_u \log \left( \frac{2n_ut}{\delta}\right)}\\
    \end{split}
\end{equation*}
\vspace{0mm}
\begin{equation}
    \begin{split}
        & \leq \Bar{c} ||B|| \sqrt{\gamma}\kappa_{(a,f)} \left(  \left( \varphi_{(a,f)}^{0} + \dots + \varphi_{(a,f)}^{t-1}\right) +  \left( \varphi_{(a,f)} + \dots + \varphi_{(a,f)}^{t-1}\right) + \dots + \varphi_{(a,f)}^{t-1} \right) \sqrt{2n_u \log \left( \frac{2n_ut}{\delta}\right)}\\
        & \leq \Bar{c} ||B|| \sqrt{\gamma}\kappa_{(a,f)} \left( \frac{1}{1 - \varphi_{(a,f)}} + \frac{\varphi_{(a,f)}}{1 - \varphi_{(a,f)}} + \dots \right)  \sqrt{2n_u \log \left( \frac{2n_ut}{\delta}\right)}\\
        & \leq \frac{ \Bar{c} ||B|| \sqrt{\gamma}\kappa_{(a,f)}}{(1 - \varphi_{(a,f)})^2}\sqrt{2n_u \log \left( \frac{2n_ut}{\delta}\right)},\\
    \end{split}
    \label{eq: bound on the 4th term in Delta_t}
\end{equation}
which holds with a probability of at least $1 - \delta$ after union bounding.

\subsection*{Bounding the norm of $\sum_{i = 0}^t \prod_{j = 0}^{t-i-1} a_{t-j} d_i \eta_i$:}

The bound on this term follows a similar procedure to that of the bound in \eqref{eq: bound on the 4th term in Delta_t}. It is straightforward to see that $||d_t|| \leq 1$ under the event $\mathcal{E}_{\mathbf{M}_{T_\text{w}}}$. Therefore, we have the following under the event $\mathcal{E}_{\mathbf{M}_{T_\text{w}}}$ for all $t \in [T_\text{w}, 2T_\text{w}-1]$:
\vspace{0mm}
\begin{equation}
    \left| \left|\sum_{i = 0}^t \prod_{j = 0}^{t-i-1} a_{t-j} d_i \eta_i \right| \right| \leq \frac{\sqrt{\gamma}\kappa_a}{1 - \varphi_a}\sqrt{2n_u \log \left( \frac{2n_ut}{\delta}\right)},
    \label{eq: bound on the 5th term in Delta_t}
\end{equation}
which holds with a probability of at least $1 - \delta$ after union bounding.

\subsection*{Bounding the norm of $\sum_{i = 0}^t \prod_{j = 0}^{t-i-1} a_{t-j} b_i e_i$:}
Similar to the other terms, there exists a bound on the norm of $b_t$ under the event $\mathcal{E}_{\mathbf{M}_{T_\text{w}}}$ for all $t \in [T_\text{w}, 2T_\text{w}-1]$ \cite[Lemma 4.1]{lale2020regret}:
\vspace{0mm}
\begin{equation*}
        ||b_t|| \leq 2\Xi(A) \beta_L(T_\text{w}) + 2\beta_A(T_\text{w}) \zeta + 2\beta_B(T_\text{w}) \Gamma \zeta \coloneqq \Bar{b}.
\end{equation*}
Observe that $e_t$ is a $\left(||C|| ||\Sigma||^{1/2} + \sigma_z \right)$-sub-Gaussian random variable \cite[Lemma 4.1]{lale2020regret}. Therefore, by using Lemma \ref{Lemma B.1}, with a probability of at least $1-\delta$, we have the following under event $\mathcal{E}_{\mathbf{M}_{T_\text{w}}}$ for all $t \in [T_\text{w}, 2T_\text{w}-1]$ after union bounding:
\vspace{0mm}
\begin{equation}
\begin{split}
    \left| \left| \sum_{i = 0}^t \prod_{j = 0}^{t-i-1} a_{t-j} b_i e_i \right| \right| & \leq \frac{\Bar{b} \kappa_a }{1 - \varphi_a} \left( ||C|| ||\Sigma||^{1/2} + \sigma_z \right) \sqrt{2n_y \log \left( \frac{2 n_y t}{\delta}\right)}.\\
\end{split}
\label{eq: bound on the 1st term in Delta_t}
\end{equation}

\subsection*{Bounding the norm of $\sum_{i = 1}^t \prod_{j = 0}^{t-i-1} a_{t-j} c_i \left(  \sum_{j = 0}^{i-1} \prod_{k = 0}^{i- 2 - j} f_{i-1-k} h_j e_j \right)$:}
Under the event $\mathcal{E}_{\mathbf{M}_{T_\text{w}}}$, we have for all $t \in [T_\text{w}, 2T_\text{w}-1]$ \cite[Lemma 4.1]{lale2020regret}:
\vspace{0mm}
\begin{equation*}
    \begin{split}
        ||h_t|| & \leq \Bigg(2\beta_A(T_\text{w}) + \kappa_\text{cl} \varphi_\text{cl} + 2\beta_B(T_\text{w}) \Gamma \Bigg) \zeta + 2||B|| \Gamma \beta_L(T_\text{w}) \coloneqq \Bar{h}.\\
    \end{split}
\end{equation*}
Similar to the bound in \eqref{eq: bound on the 4th term in Delta_t}, we have the following under event $\mathcal{E}_{\mathbf{M}_{T_\text{w}}}$ for all $t \in [T_\text{w}, 2T_\text{w}-1]$:
\vspace{0mm}
\begin{equation}
    \begin{split}
        &\left| \left| \sum_{i = 1}^t \prod_{j = 0}^{t-i-1} a_{t-j} c_i \left(  \sum_{j = 0}^{i-1} \prod_{k = 0}^{i- 2 - j} f_{i-1-k} h_j e_j \right) \right| \right| \leq \frac{\Bar{c} \Bar{h} \kappa_{(a,f)}}{(1 - \varphi_{(a,f)})^2} \left( ||C|| ||\Sigma||^{1/2} + \sigma_z \right) \sqrt{2n_y \log \left( \frac{2n_yt}{\delta}\right)},\\
    \end{split}
    \label{eq: bound on the 3rd term in Delta_t}
\end{equation}
which holds with a probability of at least $1 - \delta$ after union bounding.

\subsection*{Bounding the norm of $ \sum_{i = 1}^t \prod_{j = 0}^{t-i-1} a_{t-j} c_i \left( \sum_{j = 0}^{i-1} \prod_{k = 0}^{i- 2 - j} f_{i-1-k} g_j \Delta_j \right)$:}
It can be shown under the event $\mathcal{E}_{\mathbf{M}_{T_\text{w}}}$ that, for all $t \in [T_\text{w}, 2T_\text{w}-1]$, $||g_t|| \leq \Bar{g}$ exists \cite[Lemma 4.1]{lale2020regret}. With a similar argument as in the proof of \cite[Lemma 4.1]{lale2020regret}, it can be shown through the method of induction, that under the event $\mathcal{E}_{\mathbf{M}_{T_\text{w}}}$, the following holds with a probability of at least $1 - \delta$ for all $t \in [T_\text{w}, 2T_\text{w}-1]$:
\vspace{0mm}
\begin{equation}
\begin{split}
    & \left| \left| \hat{x}_{t|t-1, \Theta} - \hat{x}_{t|t-1, \hat{\Theta}_{t-1}} \right| \right| = \left| \left| \Delta_t \right| \right| \leq \Bar{\Delta}(t),\\
    \Bar{\Delta}(t) & = 10 \left( \frac{\Bar{b} \kappa_a}{1 - \varphi_a} + \frac{\Bar{c} \Bar{h} \kappa_{(a,f)}}{(1 - \varphi_{(a,f)})^2} \right)\left( ||C|| ||\Sigma||^{1/2} + \sigma_z \right)\sqrt{2n_y \log \left( \frac{2n_yt}{\delta}\right)}\\
    & \hspace{4mm} + 10 \left( \frac{\sqrt{\gamma}\kappa_a}{1 - \varphi_a} + \frac{\Bar{c} ||B|| \sqrt{\gamma} \kappa_{(a,f)}}{(1 - \varphi_{(a,f)})^2} \right)\sqrt{2n_u \log \left( \frac{2n_ut}{\delta}\right)}.
\end{split}
\label{eq: final bound on Delta_t}
\end{equation}

\subsection*{Putting things together:}
To recall from \eqref{eq: expanding x_hat_t}, we have
\vspace{0mm}
\begin{equation*}
    \begin{split}
        \hat{x}_{t|t,\hat{\Theta}_t} & = \sum_{i = 1}^t \prod_{j = 0}^{t-i-1} N_{t-j} \left( \hat{L}_{i}C\left( x_i - \hat{x}_{i|i-1,\Theta} \right) + \hat{L}_{i}C\left( \hat{x}_{i|i-1,\Theta} - \hat{x}_{i|i-1,\hat{\Theta}_{i-1}} \right) + \hat{B}_{i-1}\eta_{i-1} \right. \\
        & \left. \hspace{4mm} + \hat{L}_iz_i + \hat{L}_i\left( C - \hat{C}_i \right)\hat{B}_{i-1}\eta_{i-1}\right).
    \end{split}
\end{equation*}
From \eqref{eq: bound on the dynamics term of x_hat_t}, \eqref{eq: bound on process noise term of  x_hat_t - 1}, \eqref{eq: bound on process noise term of  x_hat_t - 2}, and \eqref{eq: final bound on Delta_t}, we have the following under the event $\mathcal{E}_{\mathbf{M}_{T_\text{w}}}$:
\begin{equation*}
    \begin{split}
        \left| \left| \hat{x}_{t|t,\hat{\Theta}_t} \right| \right| & \leq \left| \left| \sum_{i = 1}^t \prod_{j = 0}^{t-i-1} N_{t-j} \left( \hat{L}_{i}C\left( x_i - \hat{x}_{i|i-1,\Theta} \right) + \hat{L}_{i}C\left( \hat{x}_{i|i-1,\Theta} - \hat{x}_{i|i-1,\hat{\Theta}_{i-1}} \right) + \hat{B}_{i-1}\eta_{i-1} \right. \right. \right. \\
        & \left. \left. \left. \hspace{4mm} + \hat{L}_iz_i + \hat{L}_i\left( C - \hat{C}_i \right)\hat{B}_{i-1}\eta_{i-1}\right) \right| \right|\\
        & \leq \max_{1 \leq i \leq t} \left| \left| \left( \hat{L}_{i}C\left( x_i - \hat{x}_{i|i-1,\Theta} \right) + \hat{L}_{i}C\left( \hat{x}_{i|i-1,\Theta} - \hat{x}_{i|i-1,\hat{\Theta}_{i-1}} \right) + \hat{B}_{i-1}\eta_{i-1} \right. \right. \right. \\
        & \left. \left. \left. \hspace{4mm} + \hat{L}_iz_i + \hat{L}_i\left( C - \hat{C}_i \right)\hat{B}_{i-1}\eta_{i-1}\right) \right| \right| \left| \left| \sum_{i = 1}^t \prod_{j = 0}^{t-i-1} N_{t-j} \right| \right|\\
        & \leq \frac{\kappa_\text{cl}}{1 - \varphi_N}\max_{1 \leq i \leq t} \left| \left| \Bigg( \hat{L}_{i}C\left( x_i - \hat{x}_{i|i-1,\Theta} \right) + \hat{L}_{i}C\left( \hat{x}_{i|i-1,\Theta} - \hat{x}_{i|i-1,\hat{\Theta}_{i-1}} \right) + \hat{B}_{i-1}\eta_{i-1} \right. \right. \\
        &  \left. \left. \hspace{4mm} + \hat{L}_iz_i + \hat{L}_i\left( C - \hat{C}_i \right)\hat{B}_{i-1}\eta_{i-1}\Bigg) \right| \right| \leq \Bar{\mathcal{X}}(t),\\
    \end{split}
\end{equation*}
which holds with a probability of at least $1 - 3\delta$, where
\vspace{0mm}
\small
\begin{equation}
    \Bar{\mathcal{X}}(t) \coloneqq \frac{{\kappa_\text{cl}\Bigg( \zeta \left( ||C|| ||\Sigma||^{1/2} + \sigma_z \right) \sqrt{2n_x \log \left(\frac{2n_xt}{\delta}\right)} + \zeta||C||\Bar{\Delta}(t)\\ + \sqrt{\gamma}\left(\left| \left| B \right| \right| + 1 \right) \left( 1 + \zeta \right) \sqrt{2n_u\log \left( \frac{2n_ut}{\delta}\right)} \Bigg)}}{ 1 - \varphi_N }.
    \label{eq: bound on x_hat_t}
\end{equation}
\normalsize
Now, using \eqref{eq: bound on x_hat_t}, we can derive a bound on the norm of $\hat{x}_{t|t-1,\hat{\Theta}_{t-1}}$ as follows. From \eqref{eq: measurement and time update}, we have
\begin{equation*}
\begin{split}
    \hat{x}_{t|t-1,\hat{\Theta}_{t-1}} = \left(\hat{A}_{t-1} - \hat{B}_{t-1}\hat{K}_{t-1} \right) \hat{x}_{t-1|t-1, \hat{\Theta}_{t-1}} + \hat{B}_{t-1} \eta_{t-1},
\end{split}
\end{equation*}
then
\begin{equation}
    \begin{split}
        ||  \hat{x}_{t|t-1,\hat{\Theta}_{t-1}} || & \leq \left| \left| \hat{A}_{t-1} - \hat{B}_{t-1}\hat{K}_{t-1} \right| \right| \left| \left| \hat{x}_{t-1|t-1, \hat{\Theta}_{t-1}} \right| \right| + \left| \left| \hat{B}_{t-1} \right| \right| \left| \left| \eta_{t-1} \right| \right|\\
        & \leq \kappa_\text{cl} \varphi_\text{cl} \Bar{\mathcal{X}}(t) + \sqrt{\gamma}\left( ||B|| + 1 \right) \sqrt{2n_u \log \left( \frac{2n_ut}{\delta}\right)} \coloneqq X_\text{est,ac}(t),
    \end{split}
    \label{eq: bound on x_hat_t_t_1}
\end{equation}
which holds with a probability of at least $1 - 3\delta$ under the event $\mathcal{E}_{\mathbf{M}_{T_\text{w}}}$. Now to bound the norm of $u_t$, we recall the following from \eqref{eq: LBC policy naive}:
\vspace{0mm}
\begin{equation*}
    u_t = -\hat{K}_t \hat{x}_{t|t, \hat{\Theta}_t} + \eta_t,
\end{equation*}
then
\vspace{0mm}
\begin{equation}
        ||u_t|| \leq \Gamma \Bar{\mathcal{X}}(t) + \sqrt{\gamma}\sqrt{2n_u \log \left( \frac{2n_ut}{\delta}\right)} \coloneqq U_\text{ac}(t),
        \label{eq: bound on u_t}
\end{equation}
which holds with a probability of at least $1 - 3\delta$ under the event $\mathcal{E}_{\mathbf{M}_{T_\text{w}}}$. To derive a bound on the norm of $x_t$, we do the following:
\vspace{0mm}
\begin{equation*}
    \begin{split}
        x_t & = x_t - \hat{x}_{t|t-1,\Theta} + \hat{x}_{t|t-1,\Theta} - \hat{x}_{t|t-1,\hat{\Theta}_{t-1}} + \hat{x}_{t|t-1,\hat{\Theta}_{t-1}}\\
        & = x_t - \hat{x}_{t|t-1,\Theta} + \hat{x}_{t|t-1,\Theta} - \hat{x}_{t|t-1,\hat{\Theta}_{t-1}} + \left(\hat{A}_{t-1} - \hat{B}_{t-1}\hat{K}_{t-1} \right) \hat{x}_{t-1|t-1, \hat{\Theta}_{t-1}} + \hat{B}_{t-1} \eta_{t-1}.
    \end{split}
\end{equation*}
Then,
\vspace{0mm}
\begin{equation}
    \begin{split}
        ||x_t|| & \leq \left| \left| x_t - \hat{x}_{t|t-1,\Theta} \right| \right| + \left| \left| \hat{x}_{t|t-1,\Theta} - \hat{x}_{t|t-1,\hat{\Theta}_{t-1}} \right| \right| + \left| \left| \left(\hat{A}_{t-1} - \hat{B}_{t-1}\hat{K}_{t-1} \right) \right| \right| \left| \left| \hat{x}_{t-1|t-1, \hat{\Theta}_{t-1}} \right| \right|\\
        & \hspace{4mm} + \left| \left|  \hat{B}_{t-1} \right| \right| \left| \left|  \eta_{t-1} \right| \right|\\
        & \leq \left| \left| x_t - \hat{x}_{t|t-1,\Theta} \right| \right| + \left| \left| \hat{x}_{t|t-1,\Theta} - \hat{x}_{t|t-1,\hat{\Theta}_{t-1}} \right| \right| + \left| \left| \left(\hat{A}_{t-1} - \hat{B}_{t-1}\hat{K}_{t-1} \right) \right| \right| \left| \left| \hat{x}_{t-1|t-1, \hat{\Theta}_{t-1}} \right| \right|\\
        & \hspace{4mm} + \left( \left| \left| B \right| \right| + \left| \left| B - \hat{B}_{t-1} \right| \right| \right) \left| \left|  \eta_{t-1} \right| \right|\\
        & \leq ||\Sigma||^{1/2}\sqrt{2n_x \log \left(\frac{2n_x t}{\delta}\right)} + \Bar{\Delta}(t) + \kappa_\text{cl} \varphi_\text{cl} \Bar{\mathcal{X}}(t) + \sqrt{\gamma}\left( ||B|| + 1 \right) \sqrt{2n_u \log \left( \frac{2n_ut}{\delta}\right)} \\ & \coloneqq X_{ac}(t),
    \end{split}
    \label{eq: bound on x_t}
\end{equation}
which holds with a probability of at least $1 - 3\delta$ under the event $\mathcal{E}_{\mathbf{M}_{T_\text{w}}}$. Now for $y_t$, we have
\vspace{0mm}
\begin{equation*}
    \begin{split}
        y_t & = C \hat{x}_{t|t-1, \hat{\Theta}_{t-1}} + C \left( x_t -  \hat{x}_{t|t-1, \hat{\Theta}_{t-1}} \right) + z_t\\
        & = C \hat{x}_{t|t-1, \hat{\Theta}_{t-1}} + C \left( x_t -  \hat{x}_{t|t-1, \Theta} \right) + C \left( \hat{x}_{t|t-1, \Theta} -  \hat{x}_{t|t-1, \hat{\Theta}_{t-1}}\right) + z_t\\
        & = C \left(\hat{A}_{t-1} - \hat{B}_{t-1}\hat{K}_{t-1} \right)\hat{x}_{t-1|t-1, \hat{\Theta}_{t-1}} + C\hat{B}_{t-1}\eta_{t-1} + C \left( x_t -  \hat{x}_{t|t-1, \Theta} \right)\\
        & \hspace{4mm} + C \left( \hat{x}_{t|t-1, \Theta} -  \hat{x}_{t|t-1, \hat{\Theta}_{t-1}}\right) + z_t.\\
    \end{split}
\end{equation*}
Using similar analysis of $x_t$, we get the following bound for $y_t$:
\vspace{0mm}
\begin{equation}
\begin{split}
    ||y_t|| & \leq \kappa_\text{cl} \varphi_\text{cl} ||C|| \Bar{\mathcal{X}}(t) + \sqrt{\gamma}||C|| \left( 1 + ||B|| \right)\sqrt{2n_u \log \left( \frac{2n_ut}{\delta}\right)}\\
    & \hspace{4mm} + \left( ||C|| ||\Sigma||^{1/2} + \sigma_z \right) \sqrt{2n_x \log \left(\frac{2n_xt}{\delta}\right)} + ||C|| \Bar{\Delta}(t) \coloneqq Y_\text{ac}(t),
\end{split}
\label{eq: bound on y_t}
\end{equation}
which holds with a probability of at least $1 - 3\delta$ under the event $\mathcal{E}_{\mathbf{M}_{T_\text{w}}}$. By re-parameterizing $3\delta \rightarrow \delta$, the above bounds can be guaranteed with a probability of at least $1 - \delta$. Furthermore, it must be noted that $\Bar{\mathcal{X}}(t), X_\text{est,ac}(t), U_\text{ac}(t), X_\text{ac}(t), Y_\text{ac}(t) = \mathcal{O}(\sqrt{\log(t/\delta)})$ for the $0^\text{th}$ episode. Using this result, we now show that the condition on persistence of excitation will be satisfied for the $0^\text{th}$ episode.

\subsection*{Proof of Lemma \ref{Lemma: PE LBC} for the $0^\text{th}$ episode:}

The proof of Lemma \ref{Lemma: PE LBC} requires the truncated closed-loop noise evolution parameter $\mathcal{G}^\text{cl}$, which was defined in the beginning of the Appendix. This parameter captures how the noise and excitation signal sequences influence the input-output data in the vector $\phi$. This proof is an extension of the result in \cite[Lemma 3.1]{lale2021adaptive}, and the extension lies in accounting for the additive excitation signal. Conceptually, this proof follows similar steps as the one in \cite[Lemma 3.1]{lale2021adaptive} but, the terms involved are different, which makes analyzing the covariates $\sum_{i = T_\text{w}}^{t} \phi_i\phi_i^\top$ more challenging. 

From \eqref{eq:phi_def_shi}, we have
\vspace{0mm}
\begin{equation*}
    \mathbb{E}\left[\Bar{\phi}_t \Bar{\phi}_t^\top \right] = \mathcal{G}^\text{cl}_t \text{diag}(\sigma_w^2I,\sigma_z^2I, \sigma^2_{\eta_{t-2}}I, \sigma^2_{\eta_{t-1}}I,...,\sigma_w^2I,\sigma_z^2I, \sigma^2_{\eta_{t-2H-1}}I) {\mathcal{G}^\text{cl}_t}^\top  + 
        \begin{bmatrix}
            \mathbf{r}_{t-1}^\mathbf{c}\\.\\.\\.\\\mathbf{r}_{t-H}^\mathbf{c}
        \end{bmatrix}
        \begin{bmatrix}
            \mathbf{r}_{t-1}^\mathbf{c}\\.\\.\\.\\\mathbf{r}_{t-H}^\mathbf{c}
        \end{bmatrix}^\top.
\end{equation*}
This implies $\sigma_\text{min}\left(\mathbb{E}\left[\Bar{\phi}_t \Bar{\phi}_t^\top \right]\right) \geq \sigma_{\text{min}}^2(\mathcal{G}^\text{cl}_t)\text{min}\{\sigma_w^2,\sigma_z^2,\sigma_{\eta_{t-1}}^2\}$. To proceed, we require a lower bound on $\sigma_{\text{min}}(\mathcal{G}^\text{cl}_t)$. To obtain such a lower bound, the following perturbation bound exists:
\vspace{0mm}
\begin{equation*}
    || \mathcal{G}^\text{cl}_t - \mathcal{G}^\text{cl} || \leq \frac{\sigma_\text{c}}{2},
\end{equation*}
if $T_\text{w} \geq T_\mathcal{G}$ for some $T_\mathcal{G} > H$ and $t \in [T_\text{w}, 2T_\text{w}-1]$. Now, $T_\mathcal{G}$ can indeed be shown to exist. Given the structure of $\mathcal{G}^\text{cl}$, it holds that
\begin{equation*}
\begin{split}
    || \mathcal{G}^\text{cl}_t - \mathcal{G}^\text{cl} || & \leq \sum_{i = 1}^H || \Bar{\mathbf{G}}_{t-i} - \Bar{\mathbf{G}} ||\\
    & \leq \sum_{i = 1}^H || \mathbf{\hat{G}_1}^{(t-i)} - \mathbf{G_1} || + || \mathbf{\hat{\Psi}}_{t-i} \mathbf{\hat{G}_2}^{({t-i})} \mathbf{\hat{G}_3}^{({t-i-1})} - \mathbf{\Psi} \mathbf{G_2} \mathbf{G_3} || + \dots\\
    & \hspace{4mm} + ||\mathbf{\hat{\Psi}}_{t-i} \mathbf{\hat{G}_2}^{({t-i})} \mathbf{\hat{G}_2}^{({t-i}-1)} ... \mathbf{\hat{G}_2}^{({t-i}-H+1)}\mathbf{\hat{G}_3}^{({t-i}-H)} - \mathbf{\Psi} \mathbf{G_2}^{H}\mathbf{G_3} ||.
\end{split}
\end{equation*}
It must be noted that $\mathbf{\hat{\Psi}}_t$, $\mathbf{G_1}^{(t)}$, $\mathbf{G_2}^{(t)}$ and $\mathbf{G_3}^{(t)}$ are functions of $\hat{\Theta}_t$. Furthermore, in the $0^\text{th}$ episode, $\hat{\Theta}_t = \hat{\Theta}_{T_\text{w}}$. Therefore, from Lemma  \ref{Lemma: Markov parameter estimation error after warm-up} and Lemma \ref{Lemma: confidence bound on the system parameters}, a bound on $||\hat{\Theta}_t - \Theta||$ exists. As a consequence, it must be noted that $||\mathbf{\Psi}||$, $||\mathbf{G_1}||$, $||\mathbf{G_2}||$, $||\mathbf{G_3}||$, $||\mathbf{\hat{\Psi}}_t||$, $||\mathbf{G_1}^{(t)}||$, $||\mathbf{G_2}^{(t)}||$ and $||\mathbf{G_3}^{(t)}||$ can be upper bounded by a constant in the $0^\text{th}$ episode. Essentially, the basic problem that we are facing is the comparison of products of matrices. Therefore from Lemma \ref{Lemma B.13}, the time step $T_{\mathcal{G}}$ exists. 

One of the fundamental results of Weyl's inequalities on singular values is as follows:
\vspace{0mm}
\begin{equation}
    \sigma_j(X) + \sigma_j(Y) \leq \sigma_1(X + Y), \hspace{2mm} j = 1,2,...,\min\{m,n\},
    \label{eq: Weyl's inequality on singular values}
\end{equation}
which holds for any two matrices $X,Y \in \mathbb{R}^{m \times n}$. Taking $j = \min\{m,n\}$ and replacing $Y$ with $-Y$, we have
\vspace{0mm}
\begin{equation*}
    \begin{split}
        & \sigma_\text{min}(X) - \sigma_\text{min}(Y) \leq \sigma_\text{max}(X-Y)\\   
        \implies & \sigma_\text{min}(X) - \sigma_\text{max}(X - Y) \leq \sigma_\text{min}(Y).  
    \end{split}
\end{equation*}
Now taking $X = \mathcal{G}^\text{cl}$ and $Y = \mathcal{G}^\text{cl}_t$, we have
\vspace{0mm}
\begin{equation*}
    \begin{split}
        & \sigma_\text{min}(\mathcal{G}^\text{cl}) - \sigma_\text{max}(\mathcal{G}^\text{cl} - \mathcal{G}^\text{cl}_t) \leq \sigma_\text{min}(\mathcal{G}^\text{cl}_t)\\
        \implies & \sigma_\text{min}(\mathcal{G}^\text{cl}_t) \geq \sigma_\text{min}(\mathcal{G}^\text{cl}) - \sigma_\text{max}(\mathcal{G}^\text{cl}_t - \mathcal{G}^\text{cl})\\
        \implies & \sigma_\text{min}(\mathcal{G}^\text{cl}_t) \geq \frac{\sigma_\text{c}}{2}. 
    \end{split}
\end{equation*}
With the above result, we finally have 
\vspace{0mm}
\begin{equation}
    \begin{split}
        \sigma_\text{min}\left(\mathbb{E}\left[\Bar{\phi}_t \Bar{\phi}_t^\top \right]\right) & \geq \sigma_{\text{min}}^2(\mathcal{G}^\text{cl}_t)\text{min}\{\sigma_w^2,\sigma_z^2,\sigma_{\eta_{t-1}}^2\}\\
        & \geq \frac{\sigma_\text{c}^2}{4}\text{min}\{\sigma_w^2,\sigma_z^2,\sigma_{\eta_{t-1}}^2\}.\\
    \end{split}
    \label{eq: bound on min sigma of expected covariance in CL}
\end{equation}
Since singular values do not change under permutation, $\sigma_\text{min}\left(\mathbb{E}\left[\Bar{\phi}_t \Bar{\phi}_t^\top \right]\right) = \sigma_\text{min}\left(\mathbb{E}\left[\phi_t {\phi_t}^\top \right]\right)$. To recall, we need to derive a lower bound for $\sigma_\text{min}\left(\sum_{i = T_\text{w}}^{t} \phi_i\phi_i^\top \right)$. Firstly,  we will derive a bound on $\mid \mid \phi_t\mid\mid$. From \eqref{eq: bound on u_t} and \eqref{eq: bound on y_t}, we have
\vspace{0mm}
\begin{equation}
\begin{split}
    \mid\mid \phi_t \mid\mid & = \sqrt{\sum_{i = 1}^H ||y_{t-i}||^2 + ||u_{t-i}||^2}\\
    & \leq \sqrt{H \max_{1\leq i \leq H} (||y_{t-i}||^2 + ||u_{t-i}||^2)}\\
    & \leq \left(\sqrt{\max_{1\leq i \leq H}||y_{t-i}||^2} + \sqrt{\max_{1\leq i \leq H}||u_{t-i}||^2}\right)\sqrt{H}\\
    & \leq \underbrace{(Y_\text{ac}(t) + U_\text{ac}(t))}_{{\Upsilon_\text{ac}(t)}}\sqrt{H},
\end{split}
\label{eq: bounding 2-norm of phi_t CL}
\end{equation}
which holds with a probability of at least $1-\delta/2$ under the event $\mathcal{E}_{\mathbf{M}_{T_\text{w}}}$ after re-parameterising $\delta \rightarrow \delta/2$. The reason for considering $\delta/2$ will become apparent shortly. We will now apply Lemma \ref{Lemma B.2}. In the statement of Lemma \ref{Lemma B.2}, it becomes evident that the notations $\mathbf{X}_k = \mathbf{A}_k = \phi_i\phi_i^\top $, and the notation
\vspace{0mm}
\begin{equation*}
    \begin{split}
        \sigma^2 & = ||\sum_{i=T_\text{w}}^{t} (\phi_i\phi_i^\top )^2 ||\\
        & \leq (t - T_\text{w} + 1) \max_{T_\text{w}\leq i \leq t} \mid\mid \phi_i \mid\mid^2 \hspace{1mm}\mid\mid \phi_i^\top  \mid\mid^2\\
        & =  (t - T_\text{w} + 1) \max_{T_\text{w}\leq i \leq t} \mid\mid \phi_i \mid\mid^4.\\
    \end{split}
\end{equation*}
From Lemma \ref{Lemma B.2}, we can set 
\vspace{0mm}
\begin{equation*}
    \begin{split}
        & \frac{\delta}{2} = H(n_y + n_u) \hspace{1mm} \exp\left( \frac{-t^2}{8\sigma^2} \right)\\
        \implies & -\text{log}\left(\frac{\delta}{2H(n_y + n_u)}\right) = \frac{t^2}{8\sigma^2}\\
        \implies & \text{log}\left(\frac{2H(n_y + n_u)}{\delta}\right) = \frac{t^2}{8\sigma^2}\\
        \implies & t = 2\sqrt{2}\sigma\sqrt{\text{log}\left(\frac{2H(n_y + n_u)}{\delta}\right)}\\
        \implies & t \leq 2\sqrt{2(t - T_\text{w} + 1)}\max_{T_\text{w}\leq i \leq t} ||\phi_i||^2\sqrt{\text{log}\left(\frac{2H(n_y + n_u)}{\delta}\right)}.\\
    \end{split}
\end{equation*}
Finally by using Lemma \ref{Lemma B.2}, we have 
\vspace{0mm}
\begin{equation*}
    \lambda_\text{max}\left( \sum_{i=T_\text{w}}^{t} \phi_i\phi_i^\top  - \mathbb{E}\left[ \phi_i\phi_i^\top  \right]\right) \leq 2\sqrt{2(t - T_\text{w} + 1)}{\Upsilon_\text{ac}(t)}^2 H\sqrt{\text{log}\left(\frac{2H(n_y + n_u)}{\delta}\right)},
    \label{eq: Matrix Azuma inequality CL}
\end{equation*}
which holds with a probability of at least $1 - \delta$ under the event $\mathcal{E}_{\mathbf{M}_{T_\text{w}}}$. Since $\phi_i\phi_i^\top $ is a symmetric matrix, its singular values are the absolute values of its eigenvalues. Now using Weyl's inequality as described in \eqref{eq: Weyl's inequality on singular values}, we get
\vspace{0mm}
\begin{equation}
\begin{split}
    \sigma_{\text{min}}\left(\sum_{i = T_\text{w}}^{t} \phi_i\phi_i^\top \right) & \geq \sigma_{\text{min}}\left(\sum_{i = T_\text{w}}^{t} \mathbb{E}[\phi_i\phi_i^\top ]\right) - \left|\lambda_{\text{max}}\left(\sum_{i = T_\text{w}}^{t} \phi_i\phi_i^\top  - \mathbb{E}[\phi_i\phi_i^\top ]\right)\right|\\
    \implies \sigma_{\text{min}}\left(\sum_{i = T_\text{w}}^{t} \phi_i\phi_i^\top \right) & \geq (t - T_\text{w} + 1) \frac{\sigma_\text{c}^2}{4}\text{min}\{\sigma_w^2,\sigma_z^2,\sigma_{\eta_{t-1}}^2\}\\
    & \hspace{4mm} -  2\sqrt{2(t - T_\text{w} + 1)}{{\Upsilon_\text{ac}(t)}}^2 H\sqrt{\text{log}\left(\frac{2H(n_y + n_u)}{\delta}\right)},\\
\end{split}
\label{eq: bound on the persistency of excitation CL}
\end{equation}
which holds with a probability of at least $1 - \delta$ under the event $\mathcal{E}_{\mathbf{M}_{T_\text{w}}}$. Now we need to determine the minimum number of time steps to ensure $\sigma_{\text{min}}\left(\sum_{i = T_\text{w}}^{t} \phi_i\phi_i^\top \right) > 0$. Equating the RHS of \eqref{eq: bound on the persistency of excitation CL} to $0$ we get,
\vspace{0mm}
\begin{equation*}
    \begin{split}
         & 2\sqrt{2(t - T_\text{w} + 1)}{\Upsilon_\text{ac}(t)}^2 H\sqrt{\text{log}\left(\frac{2H(n_y + n_u)}{\delta}\right)} = (t - T_\text{w} + 1) \frac{\sigma_\text{c}^2}{4}\text{min}\{\sigma_w^2,\sigma_z^2,\sigma_{\eta_{t-1}}^2\} \\
         \implies & 8(t - T_\text{w} + 1) {\Upsilon_\text{ac}(t)}^4 H^2 \text{log}\left(\frac{2H(n_y + n_u)}{\delta}\right) = \frac{\sigma_\text{c}^4}{16}(t - T_\text{w} + 1)^2\text{min}\{\sigma_w^4,\sigma_z^4,\sigma_{\eta_{t-1}}^4\}\\
         \implies & (t - T_\text{w} + 1) = \frac{ 128 {\Upsilon_\text{ac}(t)}^4 H^2 \text{log}\left(\frac{2H(n_y + n_u)}{\delta}\right)}{\sigma_\text{c}^4\text{min}\{\sigma_w^4,\sigma_z^4,\sigma_{\eta_{t-1}}^4\}}.
    \end{split}
\end{equation*}
Therefore, for all $(t - T_\text{w} + 1) \geq T_\text{ac}(t)$ and $T_\text{w} \geq \Bar{T}^\text{alt,2}_\text{w}$ for $\Bar{T}^\text{alt,2}_\text{w}$ as defined in \eqref{eq: Alt LB on T_w}, where
\vspace{0mm}
\begin{equation}
    T_\text{ac}(t) = \frac{ 512 {\Upsilon_\text{ac}(t)}^4 H^2 \text{log}\left(\frac{2H(n_y + n_u)}{\delta}\right)}{\sigma_\text{c}^4\text{min}\{\sigma_w^4,\sigma_z^4,\sigma_{\eta_{t-1}}^4\}},
    \label{eq: Tac for PE CL}
\end{equation}
we have
\vspace{0mm}
\begin{equation}
\begin{split}
    & \sigma_{\text{min}}\left(\sum_{i = T_\text{w}}^{t} \phi_i\phi_i^\top \right) \geq \frac{ 128 {\Upsilon_\text{ac}(t)}^4 H^2 \text{log}\left(\frac{2H(n_y + n_u)}{\delta}\right)}{\sigma_\text{c}^2\text{min}\{\sigma_w^2,\sigma_z^2,\sigma_{\eta_{t-1}}^2\}} - \frac{ 64 {\Upsilon_\text{ac}(t)}^4 H^2 \text{log}\left(\frac{2H(n_y + n_u)}{\delta}\right)}{\sigma_\text{c}^2\text{min}\{\sigma_w^2,\sigma_z^2,\sigma_{\eta_{t-1}}^2\}}\\
    \implies & \sigma_{\text{min}}\left(\sum_{i = T_\text{w}}^{t} \phi_i\phi_i^\top \right) \geq (t - T_\text{w} + 1)\frac{\sigma_\text{c}^2\text{min}\{\sigma_w^2,\sigma_z^2,\sigma_{\eta_{t-1}}^2\}}{8},\\
\end{split}
\label{eq: final bound on the persistency of excitation CL}
\end{equation}
which holds with a probability of at least $1 - \delta$ under the event $\mathcal{E}_{\mathbf{M}_{T_\text{w}}}$. Recall from the previous subsection that $\Upsilon_\text{ac}(t)= \mathcal{O}(\sqrt{\log(t/\delta)})$ for the $0^\text{th}$ episode.

\subsection*{Proof of Lemma \ref{Lemma: Markov parameter estimation error during LBC phase} for the $0^\text{th}$ episode:}

Now with the above result, we address the Markov parameter estimation error bound after the $0^\text{th}$ episode. The proof of Lemma \ref{Lemma: Markov parameter estimation error during LBC phase} is an extension of an earlier result \cite[Th. 3.3]{lale2021adaptive}, where the extension requires accounting for the additive Gaussian excitatory signals. The structure of this proof is identical to the proof of \cite[Th. 3.3]{lale2021adaptive}, the only difference lies in the persistence of excitation condition \eqref{eq: final bound on the persistency of excitation CL}, which needs to be incorporated to derive a bound on the Markov parameter estimation error. This change is incremental and is presented here for the sake of completeness. For a detailed treatment of all the steps involved in this proof, refer to the proof of \cite[Th. 3.3]{lale2021adaptive}.

Recalling from \eqref{eq: input-output trajectory representation}, we have for a single input-output trajectory $\{y_t,u_t\}_{t = 0}^{t}$:
\vspace{0mm}
\begin{equation*}
    Y_t = \Phi_{t} {\mathbf{M}}^\top  + E_{t} + N_t,
\end{equation*}
where
\vspace{0mm}
\begin{equation*}
\begin{split}
\mathbf{M} & = 
\begin{bmatrix}
    CF & C\Bar{A}F & ... & C\Bar{A}^{H-1}F & CB & C\Bar{A}B & ... & C\Bar{A}^{H-1}B 
\end{bmatrix} \in \mathbb{R}^{n_y \times (n_y + n_u)H},\\
    Y_{t} & = 
\begin{bmatrix}
    y_H & y_{H+1} & ... & y_{t}
\end{bmatrix}^\top  \in \mathbb{R}^{(t-H+1) \times n_y},\\
\Phi_{t} & = 
\begin{bmatrix}
    \phi_H & \phi_{H+1} & ... & \phi_{t}
\end{bmatrix}^\top  \in \mathbb{R}^{(t-H+1) \times (n_y + n_u)H},\\
E_{t} & =
\begin{bmatrix}
    e_H & e_{H+1} & ... & e_{t}
\end{bmatrix}^\top  \in \mathbb{R}^{(t-H+1) \times n_y},\\
N_{t} & = 
\begin{bmatrix}
    C\Bar{A}^H\hat{x}_{0|-1,\Theta} & C\Bar{A}^H\hat{x}_{1|0,\Theta} & ... & C\Bar{A}^H\hat{x}_{t-H|t - H -1,\Theta}
\end{bmatrix}^\top  \in \mathbb{R}^{(t-H+1) \times n_y},
\end{split}
\end{equation*}
with $\Bar{A} = A - FC$. Further, recall from \eqref{eq: RLS to estimate Markov parameters in closed-loop} that 
\vspace{0mm}
\begin{equation*}
    \hat{\mathbf{M}}_t^\top  = ({\Phi_t}^\top \Phi_t + \lambda I)^{-1}{\Phi_t}^\top  Y_t.
\end{equation*}
This implies
\vspace{0mm}
\begin{equation*}
\begin{split}
       \hat{\mathbf{M}}_t & = \left[ ({\Phi_t}^\top \Phi_t + \lambda I)^{-1}{\Phi_t}^\top E_t + ({\Phi_t}^\top \Phi_t + \lambda I)^{-1}{\Phi_t}^\top N_t + {\mathbf{M}}^\top  -  \lambda({\Phi_t}^\top \Phi_t + \lambda I)^{-1}{\mathbf{M}}^\top  \right]^\top .
\end{split}
\end{equation*}
Now consider the following:
\vspace{0mm}
\begin{equation*}
    \begin{split}
        |\text{Tr}(X(\hat{\mathbf{M}}_t - \mathbf{M})^\top )|  & \leq |\text{Tr}(X({\Phi_t}^\top \Phi_t + \lambda I)^{-1}{\Phi_t}^\top E_t)| + |\text{Tr}(X({\Phi_t}^\top \Phi_t + \lambda I)^{-1}{\Phi_t}^\top N_t)|\\
        &\hspace{4mm} + \lambda |\text{Tr}(X({\Phi_t}^\top \Phi_t + \lambda I)^{-1}{\mathbf{M}}^\top )|,\
    \end{split}
\end{equation*}
where $X$ is some matrix. Let $M_1, M_2, M_3$ be three matrices. Using the property $|\text{Tr}(M_1 M_2 M_3^\top )| \leq \sqrt{\text{Tr}(M_1 M_2 M_1^\top )\text{Tr}(M_3 M_2 M_3^\top )}$ for a positive definite $M_2$ \cite{lale2021adaptive}, we have
\vspace{0mm}
\begin{equation*}
    \begin{split}
        |\text{Tr}(X(\hat{\mathbf{M}}_t - \mathbf{M})^\top )| & \leq \sqrt{\text{Tr}(X({\Phi_t}^\top \Phi_t + \lambda I)^{-1}X^\top )} \times\\
        & \hspace{4mm} \Bigg[ \sqrt{\text{Tr}(E_t^\top \Phi_t({\Phi_t}^\top \Phi_t + \lambda I)^{-1}{\Phi_t}^\top E_t)} + \sqrt{\text{Tr}(N_t^\top \Phi_t({\Phi_t}^\top \Phi_t + \lambda I)^{-1}{\Phi_t}^\top N_t)}\\
        & \hspace{4mm} + \lambda \sqrt{\text{Tr}(\mathbf{M}({\Phi_t}^\top \Phi_t + \lambda I)^{-1}{\mathbf{M}}^\top )} \Bigg].\\
    \end{split}
\end{equation*}
Substituting $X = (\hat{\mathbf{M}}_t - \mathbf{M})({\Phi_t}^\top \Phi_t + \lambda I)$ and $V_t = ({\Phi_t}^\top \Phi_t + \lambda I)$, we get
\vspace{0mm}
\begin{equation*}
\begin{split}
    |\text{Tr}((\hat{\mathbf{M}}_t - \mathbf{M})V_t(\hat{\mathbf{M}}_t - \mathbf{M})^\top )| & \leq \sqrt{\text{Tr}((\hat{\mathbf{M}}_t - \mathbf{M})V_t^\top (\hat{\mathbf{M}}_t - \mathbf{M})^\top )} \times\\
    & \hspace{4mm} \Bigg[ \sqrt{\text{Tr}(E_t^\top \Phi_tV_t^{-1}{\Phi_t}^\top E_t)} + \sqrt{\text{Tr}(N_t^\top \Phi_tV_t^{-1}{\Phi_t}^\top N_t)}\\
    & \hspace{4mm} +  \lambda \sqrt{\text{Tr}(\mathbf{M}V_t^{-1}{\mathbf{M}}^\top )} \Bigg].\\
\end{split}
\end{equation*}
Since $V_t$ is a symmetric positive definite matrix, the above expression reduces to
\vspace{0mm}
\begin{equation*}
\begin{split}
    \sqrt{\text{Tr}((\hat{\mathbf{M}}_t - \mathbf{M})V_t(\hat{\mathbf{M}}_t - {\mathbf{M}})^\top )} & \leq \Bigg[ \sqrt{\text{Tr}(E_t^\top \Phi_tV_t^{-1}{\Phi_t}^\top E_t)} + \sqrt{\text{Tr}(N_t^\top \Phi_tV_t^{-1}{\Phi_t}^\top N_t)}\\
    & \hspace{4mm} + \sqrt{\lambda}  \Bar{\mathbf{m}} \Bigg],\\
\end{split}
\end{equation*}
where $||\mathbf{M}||_\mathrm{F} \leq \Bar{\mathbf{m}}$. Now we provide bounds for each of the terms in the above expression.

\subsection*{Bounding $\sqrt{\textup{Tr}(E_t^\top \Phi_tV_t^{-1}{\Phi_t}^\top E_t)}$ \textup{\cite{lale2021adaptive}}:}

Since  $e_t$ is $||C\Sigma{C}^\top  + \sigma_z^2 I||$-sub-Gaussian vector, from Theorem \ref{Theorem B.1} we have:
\vspace{0mm}
\begin{equation}
    \sqrt{\text{Tr}(E_t^\top \Phi_tV_t^{-1}{\Phi_t}^\top E_t)} \leq \sqrt{n_y ||C\Sigma{C}^\top  + \sigma_z^2 I|| \log \left( \frac{\text{det}(V_t)^{1/2}\text{det}(V)^{-1/2}}{\delta}\right)},
    \label{eq: bound on the first term in Markov parameter bound}
\end{equation}
which holds with a probability of at least $1- \delta$. Here, $V = \lambda I$. For the sake of convenience, define the event $\mathcal{E}_{E_t}$:
\vspace{0mm}
\begin{equation*}
    \mathcal{E}_{E_t} \coloneqq \left\{ \sqrt{\text{Tr}(E_t^\top \Phi_tV_t^{-1}{\Phi_t}^\top E_t)} \leq \sqrt{n_y ||C\Sigma{C}^\top  + \sigma_z^2 I|| \log \left( \frac{\text{det}(V_t)^{1/2}\text{det}(V)^{-1/2}}{\delta}\right)} \right\},
\end{equation*}
which holds with a probability of at least $1- \delta$.

\subsection*{Bounding $\sqrt{\textup{Tr}(N_t^\top \Phi_tV_t^{-1}{\Phi_t}^\top N_t)}$\textup{\cite{lale2021adaptive}}:}
\vspace{0mm}
\begin{equation*}
\begin{split}
    \sqrt{\text{Tr}(N_t^\top \Phi_tV_t^{-1}{\Phi_t}^\top N_t)} & \leq \frac{1}{\sqrt{\lambda}} || N_t^\top \Phi_t||_\mathrm{F}\\
    & \leq \sqrt{\frac{n_y}{\lambda}} \left|\left| \sum_{i = H}^t \phi_i (C\bar{A}^H\hat{x}_{i-H|i-H-1,\Theta})^\top  \right|\right|.\\
\end{split}
\end{equation*}
The last inequality comes from the property, $\sigma_{\text{max}}(X) \leq ||X||_\mathrm{F} \leq \sqrt{\min\{m,n\}}\sigma_{\text{max}}(X)$, for any matrix $X \in \mathbb{R}^{m \times n}$. Now,
\vspace{0mm}
\begin{equation*}
\begin{split}
    \sqrt{\text{Tr}(N_t^\top \Phi_tV_t^{-1}{\Phi_t}^\top N_t)} & \leq \sqrt{\frac{n_y}{\lambda}} \left|\left| \sum_{i = H}^t \phi_i (C\bar{A}^H\hat{x}_{i-H|i-H-1,\Theta})^\top  \right|\right|\\
    & \leq  (t - H + 1) \sqrt{\frac{n_y}{\lambda}} \max_{H \leq i \leq t}\left|\left| \phi_i (C\bar{A}^H\hat{x}_{i-H|i-H-1,\Theta})^\top  \right|\right|\\
    & \leq t \sqrt{\frac{n_y}{\lambda}} ||C||c_{\nu}\nu^H \max_{H \leq i \leq t} ||\phi_i||\hspace{1mm} ||\hat{x}_{i-H|i-H-1,\Theta}||,\\
\end{split}
\end{equation*}
where $||(A - FC)^H|| \leq c_{\nu}\nu^H$ for some $\nu \in (0,1)$ and $c_{\nu} \geq 1$.  During the warm-up phase, i.e., $t \leq T_\text{w}-1$, we can bound $||\hat{x}_{t|t-1,\Theta}|| \leq X_\text{w}$, where $X_\text{w}$ is as defined in \eqref{eq: bound on state and input during warm-up}. During the warm-up phase, recall from Lemma \ref{Lemma: bound on state and input during warm-up} that $||\phi_t|| \leq \Upsilon_\text{w}\sqrt{H}$ with a probability of at least $1 - \delta/2$. Therefore, during the warm-up phase, we have
\vspace{0mm}
\begin{equation*}
    \max_{H \leq i \leq T_\text{w}-1} ||\phi_i||\hspace{1mm} ||\hat{x}_{i-H|i-H-1,\Theta}|| \leq \Upsilon_\text{w} X_\text{w} \sqrt{H},
\end{equation*}
which holds with a probability of at least $1 - \delta/2$. For the ease of comprehension, define an event $\mathcal{E}_{\phi,\text{warm}}$, where
\begin{equation*}
    \mathcal{E}_{\phi,\text{warm}} \coloneqq \left\{ ||\phi_t|| \leq \Upsilon_\text{w}\sqrt{H} \right\},
\end{equation*}
which holds with a probability of at least $1 - \delta/2$. Now, during the adaptive control phase, we have that $||\phi_t|| \leq {\Upsilon_\text{ac}(t)}\sqrt{H}$, with a probability of at least $1 - \delta/2$ under the event $\mathcal{E}_{\mathbf{M}_{T_\text{w}}}$ after re-parameterizing $\delta \rightarrow \delta/2$. Therefore, during the adaptive control phase, we have the following:
\vspace{0mm}
\begin{equation*}
    \max_{T_\text{w} \leq i \leq t} ||\phi_i||\hspace{1mm} ||\hat{x}_{i-H|i-H-1,\Theta}|| \leq {\Upsilon_\text{ac}(t)} X_\text{est,ac}(t) \sqrt{H},
\end{equation*}
which holds with a probability of at least $1 - \delta/2$, where $X_\text{est,ac}(t)$ is as defined in \eqref{eq: bound on x_hat_t_t_1}. Define an event $\mathcal{E}_{\phi,\text{ac}}$ given the event $\mathcal{E}_{\mathbf{M}_{T_\text{w}}}$:
\begin{equation*}
    \mathcal{E}_{\phi,\text{ac}} \coloneqq \left\{ ||\phi_t|| \leq {\Upsilon_\text{ac}(t)}\sqrt{H}  \right\},
\end{equation*}
which holds with a probability of at least $1 - \delta/2$. By setting $H \geq \Bar{H}$, where
\vspace{0mm}
\begin{equation}
\begin{split}
      \Bar{H} & \coloneqq \frac{\log\left(\sqrt{n_y/\lambda}||C||c_{\nu}\max \{\Upsilon_\text{w}X_\text{w}, {\Upsilon_\text{ac}(t)X_\text{est,ac}(t)} \} T^2\right)}{\log(1 / \nu)},
\end{split}
  \label{eq: value of H for CL}
\end{equation}
we have 
\vspace{0mm}
\begin{equation}
\begin{split}
    \sqrt{\text{Tr}(N_t^\top \Phi_tV_t^{-1}{\Phi_t}^\top N_t)} & \leq \frac{t}{T^2} \sqrt{H},
\end{split}
\label{eq: upper bound on bias term during LBC phase}
\end{equation}
which holds under the event $\mathcal{E}_{\phi,\text{warm}}\cap \mathcal{E}_{\phi,\text{ac}}$.

\subsection*{Putting things together:}

Now, combining \eqref{eq: bound on the first term in Markov parameter bound} and \eqref{eq: upper bound on bias term during LBC phase}, we obtain the following under the event  $\mathcal{E}_{\phi,\text{warm}}\cap \mathcal{E}_{\phi,\text{ac}} \cap \mathcal{E}_{E_t}$:
\vspace{0mm}
\begin{equation*}
\begin{split}
    & \sqrt{\text{Tr}((\hat{\mathbf{M}}_t - \mathbf{M})V_t(\hat{\mathbf{M}}_t - \mathbf{M})^\top )} \leq \sqrt{n_y ||\Sigma_e|| \log \left( \frac{\text{det}(V_t)^{1/2}\text{det}(V)^{-1/2}}{\delta}\right)} + \frac{t}{T^2} \sqrt{H} + \sqrt{\lambda} \Bar{\mathbf{m}}, \\
\end{split}
\end{equation*}
where $\Sigma_e = C\Sigma{C}^\top  + \sigma_z^2 I$. Now, 
\vspace{0mm}
\begin{equation*}
    \begin{split}
        & \text{Tr}((\hat{\mathbf{M}}_t - \mathbf{M})V_t(\hat{\mathbf{M}}_t - \mathbf{M})^\top ) \geq \sigma_{\text{min}}(V_t)||\hat{\mathbf{M}}_t - \mathbf{M}||^2_\mathrm{F}\\
        \implies & \sigma_{\text{min}}(V_t)||\hat{\mathbf{M}}_t - \mathbf{M}||^2_\mathrm{F} \leq \left(\sqrt{n_y ||\Sigma_e|| \log \left( \frac{\text{det}(V_t)^{1/2}\text{det}(V)^{-1/2}}{\delta}\right)} + \frac{t}{T^2} \sqrt{H} + \sqrt{\lambda}  \Bar{\mathbf{m}}\right)^2.\\
    \end{split}
\end{equation*}

Now we introduce the events corresponding to the persistence of excitation conditions for the warm-up phase and the adaptive control phase. For the warm-up phase, define an event $\mathcal{E}_{\text{PE, warm}}$ given the event $\mathcal{E}_{\phi,\text{warm}}$, where
\begin{equation*}
    \mathcal{E}_{\text{PE, warm}} \coloneqq \left\{  \sigma_{\text{min}}\left(\sum_{i = H}^{T_\text{w} - 1} \phi_i\phi_i^\top \right) \geq (T_\text{w} - H) \frac{\sigma_\text{o}^2\text{min}\{\sigma_w^2, \sigma_z^2, \sigma_u^2\}}{2} \right\},
\end{equation*}
which holds with a probability of at least $1 - \delta/2$ if $T_\text{w} \geq \max\{T_\text{o}, H\}$ (refer to proof of \cite[Lemma 3.1]{lale2021adaptive}). This event is a consequence of the result in Lemma \ref{Lemma: PE warm-up}.

For the adaptive control phase, we can use \eqref{eq: final bound on the persistency of excitation CL} to define an event $\mathcal{E}_{\text{PE, ac}}$ given the event $\mathcal{E}_{\phi,\text{ac}}$, where
\begin{equation*}
    \mathcal{E}_{\text{PE, ac}} \coloneqq \left\{ \sigma_{\text{min}}\left(\sum_{i = T_\text{w}}^{t} \phi_i\phi_i^\top \right) \geq (t - T_\text{w} + 1)\frac{\sigma_\text{c}^2\text{min}\{\sigma_w^2,\sigma_z^2,\sigma_{\eta_{t-1}}^2\}}{8} \right\},
\end{equation*}
which holds with a probability of at least $1 - \delta/2$ if $t - T_\text{w} + 1 \geq T_\text{ac}(t)$, where $T_\text{ac}(t)$ is as defined in \eqref{eq: Tac for PE CL}.

Therefore, under the event $\mathcal{E}_{\text{PE, warm}} \cap \mathcal{E}_{\text{PE, ac}}$, we have
\vspace{0mm}
\begin{equation*}
    \begin{split}
        \sigma_\text{min}(V_t) & = \sigma_\text{min}\left(\sum_{t = H}^{t} \phi_i {\phi_i}^\top + \lambda I\right)\\
        & \geq \sigma_\text{min}\left(\sum_{t = H}^{t} \phi_i {\phi_i}^\top \right)\\
        & \geq (T_\text{w} - H) \frac{\sigma_\text{o}^2\text{min}\{\sigma_w^2, \sigma_z^2, \sigma_u^2\}}{2} + (t - T_\text{w} + 1)\frac{\sigma_\text{c}^2\text{min}\{\sigma_w^2,\sigma_z^2,\sigma_{\eta_{t-1}}^2\}}{8}.
    \end{split}
\end{equation*}
Finally, from Lemma \ref{Lemma B.4}, we have for $t \in [T_\text{w}, 2T_\text{w}-1]$:
\vspace{0mm}
\begin{equation}
    \begin{split}
        & ||\hat{\mathbf{M}}_t - \mathbf{M}||_\mathrm{F} \leq \frac{\sqrt{n_y ||\Sigma_e|| \log \left( \frac{\text{det}(V_t)^{1/2}\text{det}(V)^{-1/2}}{\delta}\right)} + \frac{t}{T^2} \sqrt{H} + \sqrt{\lambda}\Bar{\mathbf{m}}}{\sqrt{ (T_\text{w} - H) \frac{\sigma_\text{o}^2\text{min}\{\sigma_w^2, \sigma_z^2, \sigma_u^2\}}{2} + (t - T_\text{w} + 1)\frac{\sigma_\text{c}^2\text{min}\{\sigma_w^2,\sigma_z^2,\sigma_{\eta_{t-1}}^2\}}{8}}}\\
        & \leq \frac{\sqrt{n_y ||\Sigma_e|| \left( \log(1/\delta) + \frac{H(n_u + n_y)}{2} \log \left( \frac{\lambda (n_u+n_y)H + (t - H + 1)\max\{\Upsilon_\text{w}^2, {\Upsilon_\text{ac}(t)}^2\}}{\lambda (n_u+n_y)H} \right) \right)} + \frac{t}{T^2} \sqrt{H} + \sqrt{\lambda} \Bar{\mathbf{m}}}{\sqrt{(T_\text{w} - H) \frac{\sigma_\text{o}^2\text{min}\{\sigma_w^2, \sigma_z^2, \sigma_u^2\}}{2} + (t - T_\text{w} + 1)\frac{\sigma_\text{c}^2\text{min}\{\sigma_w^2,\sigma_z^2,\sigma_{\eta_{t-1}}^2\}}{8}}}\\
        & \leq \frac{\sqrt{n_y ||\Sigma_e|| \left( \log(1/\delta) + \frac{H(n_u + n_y)}{2} \log \left( \frac{\lambda (n_u+n_y)H + T \max\{\Upsilon_\text{w}^2, {\Upsilon_\text{ac}(t)}^2\}}{\lambda (n_u+n_y)H} \right) \right)} + \frac{\sqrt{H}}{T} + \sqrt{\lambda} \Bar{\mathbf{m}}}{\sqrt{t - H + 1}\sqrt{ \min \left\{ \frac{\sigma_\text{o}^2\text{min}\{\sigma_w^2, \sigma_z^2, \sigma_u^2\}}{2}, \frac{\sigma_\text{c}^2\text{min}\{\sigma_w^2,\sigma_z^2,\sigma_{\eta_{t-1}}^2\}}{8}\right\}}},\\
    \end{split}
    \label{eq: Markov parameter estimation error during adaptive control phase}
\end{equation}
which holds under the event $\mathcal{E}_{\phi,\text{warm}}\cap \mathcal{E}_{\phi,\text{ac}} \cap \mathcal{E}_\text{PE, warm} \cap \mathcal{E}_\text{PE, ac} \cap \mathcal{E}_{E_t}$ that holds with a probability of at least $1 - 3\delta$. Now, $\delta$ can be re-parameterized to $\delta \rightarrow \delta/3$ to ensure that the above inequality holds with a  probability of at least $1 - \delta$. 

\subsection*{Proof for the rest of the episodes:}

From the above guarantee, we see that the parameter estimation error decreases after the $0^\text{th}$ episode since the denominator in \eqref{eq: Markov parameter estimation error during adaptive control phase} grows faster than the numerator. This implies that given the event $\mathcal{E}_{\mathbf{M}_{T_\text{w}}} $ in \eqref{eq:event_def_shi}, the following event
\begin{equation*}
    \mathcal{E}_{\mathbf{M}_{2T_\text{w}}} \coloneqq \left\{ || \mathbf{\hat{M}}_t - \mathbf{M} || \leq || \mathbf{\hat{M}}_{T_\text{w}} - \mathbf{M} || \leq 1, \text{ for } t \in [2T_\text{w}, 4T_\text{w} - 1] \text{ and }T_\text{w} \geq \Bar{T}_\text{w} \right\},
\end{equation*}
for the $1^\text{st}$ episode holds with a probability of at least $1 - \delta$. The time step $\Bar{T}_\text{w}$ is as defined in \eqref{eq: definition for LB on T_w}. Now we continue with the induction argument. We will assume that for the $k^\text{th}$ episode, 
\begin{equation*}
    \mathcal{E}_{\mathbf{M}_{l_k}} \coloneqq \left\{ || \mathbf{\hat{M}}_{t} - \mathbf{M} || \leq || \mathbf{\hat{M}}_{T_\text{w}} - \mathbf{M} || \leq 1, \text{ for } t \in [l_{k}, l_{k+1}) \text{ and }T_\text{w} \geq \Bar{T}_\text{w} \right\},
\end{equation*}
holds with a probability of at least $1 - \delta$. Now, for the ${k+1}^\text{th}$ episode, the above proofs for the bound on the norm of the system signals, then the persistence of excitation condition and finally, the bound on the norm of the Markov parameter estimation error after the $k^\text{th}$ episode, follows analogously. Therefore, we establish that for the ${k+1}^\text{th}$ episode, 
\begin{equation*}
    \mathcal{E}_{\mathbf{M}_{l_{k+1}}} \coloneqq \left\{ || \mathbf{\hat{M}}_{t} - \mathbf{M} || \leq || \mathbf{\hat{M}}_{T_\text{w}} - \mathbf{M} || \leq 1, \text{ for } t \in [l_{k+1}, l_{k+2}) \text{ and }T_\text{w} \geq \Bar{T}_\text{w} \right\},
\end{equation*}
holds with a probability of at least $1 - \delta$.  This implies that the Markov parameter estimation error is monotonically decreasing. It must be noted that for every episode, we have a new lower bound on $T_\text{w}$ since $T_\text{w} \geq T_\text{ac}(t)$. Therefore, we let $T_\text{w} \geq T_\text{ac}(T) = T_\text{ac}$. This implies that we can provide a uniform bound for the system signals for all $t \geq T_\text{w}$, where $\Bar{\mathcal{X}}(T), X_\text{est,ac}(T), U_\text{ac}(T), X_\text{ac}(T), Y_\text{ac}(T) = \Bar{\mathcal{X}}, X_\text{est,ac}, U_\text{ac}, X_\text{ac}, Y_\text{ac}$, respectively, with a slight abuse of notations. This concludes the proof.

\rem This inductive argument was adopted to address a `circular argument' that is evident in \cite{lale2021adaptive}. Instead of episode-wise analysis, the proof for the persistent excitation of all episodes, the one for the signal bound of all episodes, and the one for modeling error over all episodes are conducted separately in \cite{lale2021adaptive}. However, in the proof for the persistence of excitation condition \cite[Lemma 3.1]{lale2021adaptive}, the exact bounds on $||y_t||$ and $||u_t||$ are used when bounding $||\phi_t||$. But, providing the bounds on $||y_t||$ and $||u_t||$ requires a bound on $||\hat{x}_{t|t-1,\Theta} - \hat{x}_{t|t-1,\hat{\Theta}_{t-1}}||$, which requires the guarantee that the model parameter estimation error is monotonically decreasing. Recall that the persistence of excitation condition is required to establish that the model parameter estimation error is monotonically decreasing. Therefore, from a broader perspective, we can see that the proof methodology adopted to establish decreasing parameter estimation error in \cite{lale2021adaptive} inherently requires the assumption that the parameter estimation error is decreasing. This creates what we call a `circular argument'.

\hfill$\blacksquare$

\subsection{Proof of Theorem \ref{Theorem: regret LBC}}

To recall from \eqref{eq: cumulative regret}, we are trying to minimize the following definition of regret:
\vspace{0mm}
\begin{equation*}
    \begin{split}
        \mathrm{Regret}(T) & = \sum_{t = 0}^{T-1} c_t - TJ_* \text{, where}\\
        c_t & = y_t^\top Q y_t + u_t^\top R u_t.
    \end{split}
\end{equation*}
Since the adaptive control policy is deployed in an episodic fashion, as described in Algorithm \ref{algorithm: LQG-NAIVE}, the regret is also analyzed episode-wise, i.e., the cumulative difference between the (sub)optimal cost incurred by the adaptive control policy and the optimal long-term average expected cost $J_*$, is upper bounded for every episode. This bound is then summed over the number of episodes to obtain the final regret upper bound. The relation between the long-term average expected cost and the solution to a Lyapunov equation is given in \eqref{eq: result on LT avg cost - Lyapunov equation}. This relation is critical for establishing the regret upper bound. To recall, the estimated model parameter is maintained during the length of each episode. Therefore, for the sake of brevity, let us denote $\hat{\Theta}_{l_k} = \hat{\Theta}$, $\hat{K}_{l_k} = \hat{K}$, and $\sigma_{\eta_{l_k}}^2 = \sigma_{\eta}^2$, where $l_k$ is the time step at the start of the $k^\text{th}$ episode.

We will first decompose the cost. Consider the following decomposition of the cost at time step $t$:
\vspace{0mm}
\begin{equation}
    \begin{split}
        y_t^\top Q y_t + u_t^\top R u_t & = \left(C x_t + z_t \right)^\top Q \left(C x_t + z_t \right) + u_t^\top R u_t\\
        & = x_t^\top{C}^\top Q C x_t + u_t^\top R u_t+ 2z_t^\top QC x_t + z_t^\top Q z_t\\
        & = \underbrace{x_t^\top{C}^\top Q C x_t + \hat{x}_{t|t,\hat{\Theta}}^\top \hat{K}^\top R \hat{K} \hat{x}_{t|t,\hat{\Theta}}}_{c_{t,1}}  +  \underbrace{\eta_t^\top R \eta_t -2 \eta_t^\top R \hat{K} \hat{x}_{t|t,\hat{\Theta}} + 2z_t^\top QC x_t + z_t^\top Q z_t}_{c_{t,2}}.
    \end{split}
    \label{eq: cost decomposition at time t}
\end{equation}
We will upper bound $\sum_{t = 0}^{t-1}c_{t,1}$ and $\sum_{t=0}^{t-1} c_{t,2}$ separately. To avoid ambiguity in the present analysis, it must be noted that we are not analyzing the $0^\text{th}$ episode: the time step starts at $0$ just for the sake of convenience.

\subsection*{Upper bounding $\sum_{t = 0}^{t-1}c_{t,1}$}

With some abuse of notations in the proof of Lemma \ref{Lemma: PE LBC}, we redefine $\mathbf{\hat{G}_1}, \mathbf{\hat{G}_2}, \text{ and } \mathbf{\hat{G}_3}$ in the following. From \eqref{eq: measurement and time update}, we have
\vspace{0mm}
\begin{equation*}
    \begin{split}
        \underbrace{
        \begin{bmatrix}
            x_t \\ \hat{x}_{t|t,\hat{\Theta}}
        \end{bmatrix}}_{\Bar{x}_t}
        &
        =
        \underbrace{
        \begin{bmatrix}
            A & -B \hat{K}\\
            \hat{L} C A & \left(I - \hat{L} \hat{C} \right) \left( \hat{A} - \hat{B} \hat{K} \right) - \hat{L} C B \hat{K} 
        \end{bmatrix}}_{\mathbf{\hat{G}_1}}
        \begin{bmatrix}
            x_{t-1} \\ \hat{x}_{t-1|t-1,\hat{\Theta}} 
        \end{bmatrix}
        \\
        & \hspace{4mm}
        +
        \underbrace{
        \begin{bmatrix}
            I & 0\\
            \hat{L} C & \hat{L} 
        \end{bmatrix}}_{\mathbf{\hat{G}_2}}
        \underbrace{
        \begin{bmatrix}
            w_{t-1} \\ z_t \\ 
        \end{bmatrix}}_{\Bar{\epsilon}_{t-1}}
        +
        \underbrace{
        \begin{bmatrix}
            B\\
            \left( I - \hat{L} \hat{C} \right)\hat{B} + \hat{L} C B
        \end{bmatrix}}_{\mathbf{\hat{G}_3}}
        \eta_{t-1}\\
        \implies \Bar{x}_t & = \mathbf{\hat{G}_1} \Bar{x}_{t-1} + \mathbf{\hat{G}_2} \Bar{\epsilon}_{t-1} + \mathbf{\hat{G}_3}\eta_{t-1}.
    \end{split}
\end{equation*}
This implies,
\vspace{0mm}
\begin{equation*}
    \begin{split}
        \Bar{x}_t & = \mathbf{\hat{G}_1}^t \Bar{x}_{0} + \sum_{i = 0}^{t-1} \mathbf{\hat{G}_1}^{t-i-1}\mathbf{\hat{G}_2} \Bar{\epsilon}_{i} + \sum_{i = 0}^{t-1} \mathbf{\hat{G}_1}^{t-i-1}\mathbf{\hat{G}_3}\eta_{i}.
    \end{split}
\end{equation*}
For simplicity of exposition, let us define for $l \in \mathbb{N}$ and $j,l \geq i$ \cite{simchowitz2020naive},
\vspace{0mm}
\begin{equation}
    \begin{split}
        & \text{Col}_{i,j}(A) \coloneqq 
        \begin{bmatrix}
            \mathbb{I}_{i \geq 1} A^{i-1}\\ \mathbb{I}_{i \geq 2} A^{i-2}\\.\\.\\.\\ \mathbb{I}_{i \geq j} A^{i-j}
        \end{bmatrix}, \hspace{2mm}
        \text{Toep}_{i,j,l}(A) \coloneqq 
        \begin{bmatrix}
            A^i \mathbb{I}_{i \geq 0} & A^{i+1} \mathbb{I}_{i \geq -1} & ... & A^{i+l} \mathbb{I}_{i \geq -l}\\
            A^{i-1} \mathbb{I}_{i \geq 1} & A^{i} \mathbb{I}_{i \geq 0} & ... & A^{i+l-1} \mathbb{I}_{i \geq 1-l}\\
            .\\
            .\\
            .\\
            A^{i-j} \mathbb{I}_{i \geq j} & A^{i-j+1} \mathbb{I}_{i \geq j-1} & ... & A^{i+l-j} \mathbb{I}_{i \geq j-l}\\
        \end{bmatrix},\\
    \end{split}
    \label{eq: Definition of Toep, Col}
\end{equation}
and $\text{diag}_{t}(A) \coloneqq I_t \otimes A$, where $\mathbb{I}$ is the indicator function and with a slight abuse of notations, we have $I_t$ as the identity matrix with $t$ rows. This implies
\vspace{0mm}
\begin{equation*}
    \begin{split}
        \underbrace{
    \begin{bmatrix}
       \Bar{x}_{t-1} \\ \Bar{x}_{t-2} \\. \\. \\. \\ \Bar{x}_0 
    \end{bmatrix}}_{\Bar{x}_{[t-1:0]}} & = 
    \begin{bmatrix}
        \mathbf{\hat{G}_1}^{t-1} \\ \mathbf{\hat{G}_1}^{t-2} \\. \\. \\. \\ I
    \end{bmatrix} \Bar{x}_{0} + 
    \begin{bmatrix}
        0& I & \mathbf{\hat{G}_1} & \mathbf{\hat{G}_1}^2 & . & . & . & \mathbf{\hat{G}_1}^{t-2}\\
        0 & 0 & I & \mathbf{\hat{G}_1} & . & . & . & \mathbf{\hat{G}_1}^{t-3}\\
        . \\
        . \\
        . \\
        0 & 0 & 0 & 0 & . & . & . & I\\
        0 & 0 & 0 & 0 & . & . & . & 0\\
    \end{bmatrix}
    I_t \otimes \mathbf{\hat{G}_2} 
    \underbrace{
    \begin{bmatrix}
       \Bar{\epsilon}_{t-1} \\ \Bar{\epsilon}_{t-2} \\. \\. \\. \\ \Bar{\epsilon}_0 
    \end{bmatrix}}_{\Bar{\epsilon}_{[t-1:0]}}\\
    & + 
    \begin{bmatrix}
        0 & I & \mathbf{\hat{G}_1} & \mathbf{\hat{G}_1}^2 & . & . & . & \mathbf{\hat{G}_1}^{t-2}\\
        0 & 0 & I & \mathbf{\hat{G}_1} & . & . & . & \mathbf{\hat{G}_1}^{t-3}\\
        . \\
        . \\
        . \\
        0 & 0 & 0 & 0 & . & . & . & I\\
        0 & 0 & 0 & 0 & . & . & . & 0\\
    \end{bmatrix}
    I_t \otimes \mathbf{\hat{G}_3} 
    \underbrace{
    \begin{bmatrix}
       \eta_{t-1} \\ \eta_{t-2} \\. \\. \\. \\ \eta_0 
    \end{bmatrix}}_{\eta_{[t-1:0]}}\\
    \end{split}
\end{equation*}
\begin{equation}
\begin{split}
    \implies \Bar{x}_{[t-1:0]} & = \text{Col}_{t,t}(\mathbf{\hat{G}_1}) \Bar{x}_0 +  \text{Toep}_{-1,t-1,t-1}(\mathbf{\hat{G}_1})\text{diag}_{t}(\mathbf{\hat{G}_2})\Bar{\epsilon}_{[t-1:0]}\\
    & \hspace{4mm} + \text{Toep}_{-1,t-1,t-1}(\mathbf{\hat{G}_1})\text{diag}_{t}(\mathbf{\hat{G}_3})\eta_{[t-1:0]}.
\end{split}
\end{equation}

Finally, $\sum_{t = 0}^{T-1}c_{t,1}$ can be decomposed as:
\vspace{0mm}
\begin{equation*}
    \begin{split}
        \sum_{t = 0}^{t-1}c_{t,1} & = \sum_{t = 0}^{t-1} x_t^\top{C}^\top Q C x_t + \hat{x}_{t|t,\hat{\Theta}}^\top \hat{K}^\top R \hat{K} \hat{x}_{t|t,\hat{\Theta}}\\
        & = \sum_{t = 0}^{t-1} \Bar{x}_t^\top \mathbf{\Bar{W}} \Bar{x}_t\\
        & = \Bar{x}_{[t-1:0]}^\top \text{diag}_t(\mathbf{\Bar{W}}) \Bar{x}_{[t-1:0]}\\
        & = \Bar{x}_0^\top \underbrace{ \text{Col}_{t,t}^\top(\mathbf{\hat{G}_1}) \text{diag}_t(\mathbf{\Bar{W}}) \text{Col}_{t,t}(\mathbf{\hat{G}_1})}_{\Lambda_{\Bar{x}_0}} \Bar{x}_0\\
        & \hspace{4mm} + \Bar{\epsilon}_{[t-1:0]}^\top \underbrace{\text{diag}_{t}^\top(\mathbf{\hat{G}_2}) \text{Toep}_{-1,t-1,t-1}^\top(\mathbf{\hat{G}_1}) \text{diag}_t(\mathbf{\Bar{W}})\text{Toep}_{-1,t-1,t-1}(\mathbf{\hat{G}_1})\text{diag}_{t}(\mathbf{\hat{G}_2})}_{\Lambda_{\Bar{\epsilon}}} \Bar{\epsilon}_{[t-1:0]}\\
        & \hspace{4mm} + \eta_{[t-1:0]}^\top \underbrace{\text{diag}_{t}^\top(\mathbf{\hat{G}_3}) \text{Toep}_{-1,t-1,t-1}^\top(\mathbf{\hat{G}_1}) \text{diag}_t(\mathbf{\Bar{W}})\text{Toep}_{-1,t-1,t-1}(\mathbf{\hat{G}_1})\text{diag}_{t}(\mathbf{\hat{G}_3})}_{\Lambda_{\eta}}\eta_{[t-1:0]}\\
    \end{split}
\end{equation*}
\vspace{0mm}
\begin{equation}
    \begin{split}
        & \hspace{16mm} + 2 \Bar{\epsilon}_{[t-1:0]}^\top \underbrace{\text{diag}_{t}^\top(\mathbf{\hat{G}_2}) \text{Toep}_{-1,t-1,t-1}^\top(\mathbf{\hat{G}_1}) \text{diag}_t(\mathbf{\Bar{W}}) \text{Col}_{t,t}(\mathbf{\hat{G}_1}) }_{\Lambda_{\text{cross,1}}}\Bar{x}_0\\
        & \hspace{16mm} + 2\eta_{[t-1:0]}^\top \underbrace{\text{diag}_{t}^\top(\mathbf{\hat{G}_3}) \text{Toep}_{-1,t-1,t-1}^\top(\mathbf{\hat{G}_1}) \text{diag}_t(\mathbf{\Bar{W}})\text{Col}_{t,t}(\mathbf{\hat{G}_1}) }_{\Lambda_{\text{cross,2}}}\Bar{x}_0\\
        & \hspace{16mm} + 2\Bar{\epsilon}_{[t-1:0]}^\top \underbrace{\text{diag}_{t}^\top(\mathbf{\hat{G}_2}) \text{Toep}_{-1,t-1,t-1}^\top(\mathbf{\hat{G}_1}) \text{diag}_t(\mathbf{\Bar{W}})\text{Toep}_{-1,t-1,t-1}(\mathbf{\hat{G}_1})\text{diag}_{t}(\mathbf{\hat{G}_3})}_{\Lambda_{\text{cross,3}}}\eta_{[t-1:0]},
    \end{split}
    \label{eq: decomposition of the cost - part 1}
\end{equation}
where $\mathbf{\Bar{W}} = \text{diag}\left({C}^\top Q C, \hat{K}^\top R \hat{K} \right)$. We will now upper bound each of the above terms individually.

\subsection*{Bounding $\Bar{\epsilon}_{[t-1:0]}^\top \Lambda_{\Bar{\epsilon}} \Bar{\epsilon}_{[t-1:0]}$:}

This term can be upper-bounded using the Hanson-Wright inequality. To do so, firstly we require the following bounds. From Lemma \ref{Lemma B.12}, we have
\vspace{0mm}
\begin{equation*}
    \begin{split}
        \left| \left|\Lambda_{\Bar{\epsilon}} \right| \right| & \leq\left| \left| \mathbf{\hat{G}_2} \right| \right|^2 \left| \left| \mathbf{\Bar{W}} \right| \right| \left| \left| \text{Toep}_{-1,t-1,t-1}(\mathbf{\hat{G}_1}) \right| \right|^2\\
        & \leq \left| \left| \mathbf{\hat{G}_2} \right| \right|^2 \left| \left| \mathbf{\Bar{W}} \right| \right| \left| \left|\mathbf{\hat{G}_1} \right| \right|^2_{\mathcal{H}_ \infty}.\\
    \end{split}
\end{equation*}
The above bound depends on the estimated system parameter $\hat{\Theta}$. To use the Hanson-Wright inequality, we need a bound on $\left| \left|\Lambda_{\Bar{\epsilon}} \right| \right|$ that is not a random variable. Now,
\vspace{0mm}
\begin{equation*}
    \begin{split}
        \left| \left| \mathbf{\hat{G}_2} \right| \right| & \leq \left| \left| \mathbf{\hat{G}_2} - \mathbf{G_2} \right| \right| + \left| \left| \mathbf{G_2} \right| \right|\\
        & \leq \left| \left| \begin{bmatrix}
            0 & 0\\ (\hat{L} - L)C & (\hat{L} - L)
        \end{bmatrix} \right| \right| + \left| \left| \begin{bmatrix}
            I & 0\\ LC & L
        \end{bmatrix} \right| \right|\\
        & \leq c_{\mathbf{\hat{G}_2}},\\
    \end{split}
\end{equation*}
for some $c_{\mathbf{\hat{G}_2}} > 0$. Based on the proofs on Lemmas \ref{Lemma: bound on the states and outputs during LBC phase} - \ref{Lemma: Markov parameter estimation error during LBC phase}, we will define an event $\mathcal{E}_{\mathbf{M}}$, where
\vspace{0mm}
\begin{equation}
    \mathcal{E}_{\mathbf{M}} \coloneqq \left\{ || \mathbf{\hat{M}}_t - \mathbf{M} || \leq || \mathbf{\hat{M}}_{T_\text{w}} - \mathbf{M} || \leq 1, \text{ for all } t \geq T_\text{w} \geq \Bar{T}_\text{w} \right\},
\end{equation}
where $\mathbb{P}\{\mathcal{E}_{\mathbf{M}} \} \geq 1 - \delta$ and $\Bar{T}_\text{w}$ is as defined in \eqref{eq: definition for LB on T_w}. This event is a direct consequence of Lemma \ref{Lemma: Markov parameter estimation error during LBC phase}. 

Similarly, we can bound $\left| \left|\mathbf{\hat{G}_1} \right| \right|_{\mathcal{H}_ \infty} \leq c_{\mathbf{\hat{G}_1}}$ for some $c_{\mathbf{\hat{G}_1}} > 0$ under the event $\mathcal{E}_{\mathbf{M}}$. For Hanson-Wright inequality, we also require the following bound, which exists under the event $\mathcal{E}_{\mathbf{M}}$:
\vspace{0mm}
\begin{equation*}
    \begin{split}
         \left| \left|\Lambda_{\Bar{\epsilon}} \right| \right|_\mathrm{F} & = \sqrt{\text{Tr}(\Lambda_{\Bar{\epsilon}} \Lambda_{\Bar{\epsilon}}^\top)}\\
         & \lesssim \sqrt{t(n_x + n_y)} \coloneqq \Bar{c}_{\Lambda_{\Bar{\epsilon}}}.
    \end{split}
\end{equation*}
Consider the following:
\vspace{0mm}
\begin{equation*}
\begin{split}
    \left| \left| \mathbf{\Bar{W}} \right| \right| & = \left| \left| \begin{bmatrix} {C}^\top Q C & 0\\ 0 & \hat{K}^\top R \hat{K} \end{bmatrix} \right| \right|\\
    & \leq \left| \left| {C}^\top Q C +  \hat{K}^\top R \hat{K} \right| \right|\\
    & \leq \left| \left| C\right| \right|^2 \left| \left| Q \right| \right| + \Gamma^2 \left| \left| R \right| \right| \coloneqq c_{\mathbf{\Bar{W}}},\\
\end{split} 
\end{equation*}
which holds under the event $\mathcal{E}_{\mathbf{M}}$ for some $c_{\mathbf{\Bar{W}}} > 0$. Finally, from the Hanson-Wright inequality as defined in Theorem \ref{Theorem B.3 Hanson-Wright inequality}, we have
\vspace{0mm}
\begin{equation*}
    \mathbb{P}\left\{ \Bar{\epsilon}_{[t-1:0]}^\top \Lambda_{\Bar{\epsilon}} \Bar{\epsilon}_{[t-1:0]} - \mathbb{E}\left[ \Bar{\epsilon}_{[t-1:0]}^\top \Lambda_{\Bar{\epsilon}} \Bar{\epsilon}_{[t-1:0]} \right] > t \right\} \leq  2\exp \left[ -c \min \left( \frac{t^2}{a^4 \Bar{c}^2_{\Lambda_{\Bar{\epsilon}}}}, \frac{t}{a^2 c^2_{\mathbf{\hat{G}_2}}c^2_{\mathbf{\hat{G}_1}}c_{\mathbf{\Bar{W}}}}\right) \right],
\end{equation*}
where $c$ is an absolute positive constant. Since, $\Bar{\epsilon}_{[t-1:0]}$ consists of Gaussian random variables, the constant $a$ exists. Now we can simplify the above expression as follows. Let,
\vspace{0mm}
\begin{equation*}
    \begin{split}
        \frac{\delta}{2} & = \exp \left[ -c \min \left( \frac{t^2}{a^4 \Bar{c}^2_{\Lambda_{\Bar{\epsilon}}}}, \frac{t}{a^2c^2_{\mathbf{\hat{G}_2}}c^2_{\mathbf{\hat{G}_1}}c_{\mathbf{\Bar{W}}}}\right) \right]\\
        \frac{1}{c}\log\left( \frac{2}{\delta}\right) & = \min \left( \frac{t^2}{a^4 \Bar{c}^2_{\Lambda_{\Bar{\epsilon}}}}, \frac{t}{a^2c^2_{\mathbf{\hat{G}_2}}c^2_{\mathbf{\hat{G}_1}}c_{\mathbf{\Bar{W}}}}\right)\\
        \implies t & = a^2 \Bar{c}_{\Lambda_{\Bar{\epsilon}}} \sqrt{\frac{1}{c}\log\left( \frac{2}{\delta}\right)} \text{ or } t = \frac{a^2 c^2_{\mathbf{\hat{G}_2}}c^2_{\mathbf{\hat{G}_1}}c_{\mathbf{\Bar{W}}}}{c}\log\left( \frac{2}{\delta}\right).\\
    \end{split}
\end{equation*}

This implies with a probability of at least $1 - \delta$,
\vspace{0mm}
\begin{equation}
\begin{split}
    \Bar{\epsilon}_{[t-1:0]}^\top \Lambda_{\Bar{\epsilon}} \Bar{\epsilon}_{[t-1:0]} & \leq \mathbb{E}\left[  \Bar{\epsilon}_{[t-1:0]}^\top \Lambda_{\Bar{\epsilon}} \Bar{\epsilon}_{[t-1:0]}\right] + \mathcal{O}\left(\Bar{c}_{\Lambda_{\Bar{\epsilon}}}\sqrt{\log\left( \frac{2}{\delta}\right)} + c^2_{\mathbf{\hat{G}_2}}c^2_{\mathbf{\hat{G}_1}}c_{\mathbf{\Bar{W}}}\log\left( \frac{2}{\delta}\right)\right)\\
    \implies \Bar{\epsilon}_{[t-1:0]}^\top \Lambda_{\Bar{\epsilon}} \Bar{\epsilon}_{[t-1:0]} & \leq \text{Tr}\left(\Lambda_{\Bar{\epsilon}} \text{diag}_{t}\left( \begin{bmatrix} \sigma_w^2 I & 0 \\ 0 & \sigma_z^2 I \end{bmatrix} \right) \right)\\
    & \hspace{4mm} + \mathcal{O}\left(\Bar{c}_{\Lambda_{\Bar{\epsilon}}} \sqrt{\log\left( \frac{2}{\delta}\right)} + c^2_{\mathbf{\hat{G}_2}}c^2_{\mathbf{\hat{G}_1}}c_{\mathbf{\Bar{W}}}\log\left( \frac{2}{\delta}\right)\right).\\
\end{split}
\label{eq: regret bound - quadratic of noise}
\end{equation}

Now, we will focus on $\text{Tr}\left(\Lambda_{\Bar{\epsilon}} \text{diag}_{t}\left( \begin{bmatrix} \sigma_w^2 I & 0 \\ 0 & \sigma_z^2 I \end{bmatrix} \right) \right)$, where
\begin{equation*}
    \begin{split}
        & \text{Tr}\left(\Lambda_{\Bar{\epsilon}} \text{diag}_{t}\left( \begin{bmatrix} \sigma_w^2 I & 0 \\ 0 & \sigma_z^2 I \end{bmatrix} \right) \right)\\
        & = \text{Tr}\left( \text{diag}_{t}^\top(\mathbf{\hat{G}_2}) \text{Toep}_{-1,t-1,t-1}^\top(\mathbf{\hat{G}_1}) \text{diag}_t(\mathbf{\Bar{W}})\text{Toep}_{-1,t-1,t-1}(\mathbf{\hat{G}_1})\text{diag}_{t}(\mathbf{\hat{G}_2})\text{diag}_{t}\left( \begin{bmatrix} \sigma_w^2 I & 0 \\ 0 & \sigma_z^2 I \end{bmatrix} \right)\right).\\
    \end{split}
\end{equation*}

Now, consider the following:
\vspace{0mm}
\begin{equation*}
    \begin{split}
        & \text{Tr}\left( \text{Toep}_{-1,t-1,t-1}^\top(\mathbf{\hat{G}_1}) \text{diag}_t(\mathbf{\Bar{W}})\text{Toep}_{-1,t-1,t-1} (\mathbf{\hat{G}_1}) \right)\\
        & = \text{Tr}\left( \begin{bmatrix} \text{Col}_{0,t}(\mathbf{\hat{G}_1}) & ... & \text{Col}_{t-1,t}(\mathbf{\hat{G}_1}) \end{bmatrix}^\top  \text{diag}_t(\mathbf{\Bar{W}}) \begin{bmatrix} \text{Col}_{0,t}(\mathbf{\hat{G}_1}) & ... & \text{Col}_{t-1,t}(\mathbf{\hat{G}_1}) \end{bmatrix} \right)\\
        & = \sum_{i = 0}^{t-1} \text{Col}_{i,t}^\top (\mathbf{\hat{G}_1}) \text{diag}_t(\mathbf{\Bar{W}}) \text{Col}_{i,t}(\mathbf{\hat{G}_1})\\
        & \leq t \cdot \text{dlyap}\left( \mathbf{\hat{G}_1}, \mathbf{\Bar{W}}\right),
    \end{split}
\end{equation*}
where the last inequality comes from Corollary \ref{Corollary B.11} and $\text{dlyap}(\cdot)$ is as defined in Definition \ref{Definition B.1}. Let $\hat{S} = \text{dlyap}\left( \mathbf{\hat{G}_1}, \mathbf{\Bar{W}}\right)$. Further, we have the following relation. For any positive semi-definite matrices $X$, $Y$, and any matrix $P$, if $X \leq Y$ then, $P^\top X P \leq P^\top Y P \implies \text{Tr}(P^\top X P) \leq \text{Tr}(P^\top Y P)$. Further, for another diagonal matrix $Z$, $\text{Tr}(P^\top X P Z) = \text{Tr}(Z^{1/2} P^\top X P Z^{1/2}) \leq \text{Tr}(Z^{1/2} P^\top Y P Z^{1/2})$. Considering these relations, we have
\vspace{0mm}
\begin{equation*}
    \begin{split}
        & \text{Tr}\left(\Lambda_{\Bar{\epsilon}} \text{diag}_{t}\left( \begin{bmatrix} \sigma_w^2 I & 0 \\ 0 & \sigma_z^2 I \end{bmatrix} \right) \right)\\
        & = \text{Tr}\left( \text{diag}_{t}^\top(\mathbf{\hat{G}_2}) \text{Toep}_{-1,t-1,t-1}^\top(\mathbf{\hat{G}_1}) \text{diag}_t(\mathbf{\Bar{W}})\text{Toep}_{-1,t-1,t-1}(\mathbf{\hat{G}_1})\text{diag}_{t}(\mathbf{\hat{G}_2})\text{diag}_{t}\left( \begin{bmatrix} \sigma_w^2 I & 0 \\ 0 & \sigma_z^2 I \end{bmatrix} \right)\right)\\
        & \leq t \text{Tr}\left( \mathbf{\hat{G}_2}^\top \text{dlyap}\left( \mathbf{\hat{G}_1}, \mathbf{\Bar{W}}\right)\mathbf{\hat{G}_2}\begin{bmatrix} \sigma_w^2 I & 0 \\ 0 & \sigma_z^2 I \end{bmatrix}\right)\\
        & = t \text{Tr}\left( \mathbf{\hat{G}_2}^\top \hat{S}\mathbf{\hat{G}_2}\begin{bmatrix} \sigma_w^2 I & 0 \\ 0 & \sigma_z^2 I \end{bmatrix}\right)\\
        & = t J_s(\hat{\Theta}),
    \end{split}
\end{equation*}
where the last equality comes from \eqref{eq: result on LT avg cost - Lyapunov equation} and to recall, $J_s(\hat{\Theta})$ is an alternate formulation of the LQG control problem defined in \eqref{eq: LQG control problem with quadratic state}. Finally, from \eqref{eq: regret bound - quadratic of noise}, we have the following under the event $\mathcal{E}_{\mathbf{M}}$:
\vspace{0mm}
\begin{equation}
    \begin{split}
         \Bar{\epsilon}_{[t-1:0]}^\top \Lambda_{\Bar{\epsilon}} \Bar{\epsilon}_{[t-1:0]} & \leq t J_s(\hat{\Theta}) + \mathcal{O}\left( \sqrt{t(n_x + n_y)\log\left( \frac{2}{\delta}\right)} + \left( c^2_{\mathbf{\hat{G}_2}} c^2_{\mathbf{\hat{G}_1}}c_{\mathbf{\Bar{W}}} \right)\log\left( \frac{2}{\delta}\right) \right)\\
         & \leq t J_s(\hat{\Theta}) + \mathcal{O}\left[ \left(\sqrt{t(n_x + n_y)\log\left( \frac{2}{\delta}\right)} + \log\left( \frac{2}{\delta}\right) \right) \left( c^2_{\mathbf{\hat{G}_2}} c^2_{\mathbf{\hat{G}_1}} c_{\mathbf{\Bar{W}}} \right) \right],\\
    \end{split}
    \label{eq: cost part 1 - bound 1}
\end{equation}
which holds with a probability of at least $1 - \delta$.

\subsection*{Bounding $\eta_{[t-1:0]}^\top \Lambda_{\eta} \eta_{[t-1:0]}$}

This term can also be upper-bounded using the Hanson-Wright inequality. Under the event $\mathcal{E}_{\mathbf{M}}$, and from Lemma \ref{Lemma B.12}, we have
\vspace{0mm}
\begin{equation*}
    \begin{split}
        \left| \left|\Lambda_{\eta} \right| \right| & \leq \left| \left| \mathbf{\hat{G}_3} \right| \right|^2 \left| \left| \mathbf{\Bar{W}} \right| \right| \left| \left| \text{Toep}_{-1,t-1,t-1}(\mathbf{\hat{G}_1}) \right| \right|^2\\
        & \leq c^2_{\mathbf{\hat{G}_3}} c^2_{\mathbf{\hat{G}_1}} c_{\mathbf{\Bar{W}}},\\
    \end{split}
\end{equation*}
for some $c^2_{\mathbf{\hat{G}_3}} > 0$. Now from Theorem \ref{Theorem B.3 Hanson-Wright inequality}, we have 
\vspace{0mm}
\begin{equation}
\begin{split}
 \eta_{[t-1:0]}^\top \Lambda_{\eta} \eta_{[t-1:0]} & \leq \text{Tr}\left(\Lambda_{\eta} \text{diag}_{t}\left( \sigma_{\eta}^2 I \right) \right)  + \mathcal{O}\left( \sqrt{tn_u\log\left( \frac{2}{\delta}\right)} + c^2_{\mathbf{\hat{G}_3}} c^2_{\mathbf{\hat{G}_1}} c_{\mathbf{\Bar{W}}}\log\left( \frac{2}{\delta}\right)\right),\\
\end{split}
\label{eq: regret bound - quadratic of external excitation}
\end{equation}
which holds with a probability of at least $1 - \delta$. Now, we focus on $\text{Tr}\left(\Lambda_{\eta} \text{diag}_{t}\left( \sigma_{\eta}^2 I \right) \right)$. It is a standard fact that, for any positive semi-definite matrix $X$ and any matrix $Y$, $\text{Tr}(XY) \leq \text{Tr}(X)||Y||$. Now from Corollary \ref{Corollary B.11}, we have the following under the event $\mathcal{E}_{\mathbf{M}}$:
\vspace{0mm}
\begin{equation*}
\begin{split}
    \text{Tr}\left(\Lambda_{\eta} \text{diag}_{t}\left( \sigma_{\eta}^2 I \right) \right) & \leq 
    \text{Tr}\left(\Lambda_{\eta} \right) \sigma_{\eta}^2\\
    & = \text{Tr}\left( \text{diag}_{t}(\mathbf{\hat{G}_3})^\top \text{Toep}_{-1,t-1,t-1}(\mathbf{\hat{G}_1})^\top \text{diag}_t(\mathbf{\Bar{W}})\text{Toep}_{-1,t-1,t-1}(\mathbf{\hat{G}_1})\text{diag}_{t}(\mathbf{\hat{G}_3}) \right)\sigma_{\eta}^2\\
    & \leq t \text{Tr}\left( \mathbf{\hat{G}_3}^\top \text{dlyap}\left( \mathbf{\hat{G}_1}, \mathbf{\Bar{W}}\right)\mathbf{\hat{G}_3}\right)\sigma_{\eta}^2 \\
    & \leq t n_u\left| \left|  \mathbf{\hat{G}_3}^\top \hat{S}\mathbf{\hat{G}_3} \right| \right|\sigma_{\eta}^2 \\
    & \leq tn_u c^2_{\mathbf{\hat{G}_3}}  \left| \left| \hat{S} \right| \right|\sigma_{\eta}^2\\
    & \lesssim tn_u c^2_{\mathbf{\hat{G}_3}}  \left| \left| S \right| \right|\sigma_{\eta}^2.\\
\end{split}
\end{equation*}
In the above expression, $S$ depends on the true model parameters $\Theta$. In the literature, the perturbation bounds of the solutions to Lyapunov equations have been addressed as a function of $\tau(\hat{\Theta}, \Theta)$ \cite{lale2020regret_prior, mania2019certainty, simchowitz2020naive}. Therefore, a perturbation bound on $||\hat{S} - S||$ exists under the event $\mathcal{E}_{\mathbf{M}}$. This perturbation bound can be exploited to upper bound $||\hat{S}|| \leq c_S||S||$ for some $c_S > 0$. 

Finally, under the event $\mathcal{E}_{\mathbf{M}}$, and from \eqref{eq: regret bound - quadratic of external excitation}, we have
\vspace{0mm}
\begin{equation}
    \begin{split}
        \eta_{[t-1:0]}^\top \Lambda_{\eta} \eta_{[t-1:0]} &  \leq tn_u c^2_{\mathbf{\hat{G}_3}}  \left| \left| S \right| \right|\sigma_{\eta}^2 + \mathcal{O} \left( \sqrt{tn_u\log\left( \frac{2}{\delta}\right)} + \left(c^2_{\mathbf{\hat{G}_3}} c^2_{\mathbf{\hat{G}_1}} c_{\mathbf{\Bar{W}}} \right)\log\left( \frac{2}{\delta}\right)\right)\\
        &  \leq tn_u c^2_{\mathbf{\hat{G}_3}}  \left| \left| S \right| \right|\sigma_{\eta}^2 + \mathcal{O} \left[ \left( \sqrt{tn_u\log\left( \frac{2}{\delta}\right)} + \log\left( \frac{2}{\delta}\right)\right) \left( c^2_{\mathbf{\hat{G}_3}} c^2_{\mathbf{\hat{G}_1}} c_{\mathbf{\Bar{W}}}\right) \right],\\
    \end{split}
    \label{eq: cost part 1 - bound 2}
\end{equation}
which holds with a probability of at least $1 - \delta$.

\subsection*{Bounding $\Bar{x}_0^\top \Lambda_{\Bar{x}_0} \Bar{x}_0$:}

From Corollary \ref{Corollary B.11}, we have
\vspace{0mm}
\begin{equation}
    \begin{split}
        \Bar{x}_0^\top \Lambda_{\Bar{x}_0} \Bar{x}_0 & = \Bar{x}_0^\top \text{Col}_{t,t}^\top(\mathbf{\hat{G}_1}) \text{diag}_t(\mathbf{\Bar{W}}) \text{Col}_{t,t}^\top(\mathbf{\hat{G}_1}) \Bar{x}_0\\
        & \leq \Bar{x}_0^\top \text{dlyap}\left(\mathbf{\hat{G}_1}, \mathbf{\Bar{W}} \right) \Bar{x}_0\\
        & = \Bar{x}_0^\top \hat{S} \Bar{x}_0\\
        & \leq || \Bar{x}_0 ||^2 ||\hat{S}||\\
        & \lesssim || \Bar{x}_0 ||^2 ||S||\\
        & \lesssim \left( X_\text{ac}^2 + \Bar{\mathcal{X}}^2 \right) ||S||,\\
    \end{split}
    \label{eq: cost part 1 - bound 3}
\end{equation}
which holds with a probability of at least $1 - \delta$ under the event $\mathcal{E}_{\mathbf{M}}$. The last inequality comes from Lemma \ref{Lemma: bound on the states and outputs during LBC phase}. It must be noted that although $x_0 \sim \mathcal{N}(0, \Sigma)$ is assumed, we do not consider that for the bound in \eqref{eq: cost part 1 - bound 3} since $x_0$ here represents the state of the system at the start of an episode. For the sake of convenience, define the event $\mathcal{E}_x$ given $\mathcal{E}_{\mathbf{M}}$:
\begin{equation*}
    \mathcal{E}_x \coloneqq \left\{ ||x_t|| \leq X_\text{ac}, ||\hat{x}_{t|t,\hat{\Theta}}|| \leq \Bar{\mathcal{X}} \right\},
\end{equation*}
which holds with a probability of at least $1 - \delta$. The event $\mathcal{E}_x$ is a consequence of Lemma \ref{Lemma: bound on the states and outputs during LBC phase}. Under the event $\mathcal{E}_x$, the bound in \eqref{eq: cost part 1 - bound 3} becomes deterministic.

\subsection*{Bounding $2 \Bar{\epsilon}_{[t-1:0]}^\top \Lambda_{\text{cross,1}}\Bar{x}_0$:}

Firstly, notice that
\vspace{0mm}
\begin{equation*}
    \begin{split}
        \mathbb{E}\left[ 2 \Bar{\epsilon}_{[t-1:0]}^\top \Lambda_{\text{cross,1}}\Bar{x}_0 \right] & = 0,\\
        \text{Var}\left( 2 \Bar{\epsilon}_{[t-1:0]}^\top \Lambda_{\text{cross,1}}\Bar{x}_0 \right) & = 4\Bar{x}_0^\top \Lambda_{\text{cross,1}}^\top \text{diag}_t\left( \begin{bmatrix}
            \sigma^2_w I & 0\\ 0 & \sigma^2_z I
        \end{bmatrix}\right) \Lambda_{\text{cross,1}} \Bar{x}_0.\\
    \end{split}
\end{equation*}
It can be verified that there exists a matrix $X$ such that $XX^\top = \Lambda_{\Bar{\epsilon}}$ and a matrix $Y$ such that $YY^\top = \Lambda_{\Bar{x}_0}$ to obtain $\Lambda_{\text{cross,1}} = XY^\top$ \cite{simchowitz2020naive}. Then,
\vspace{0mm}
\begin{equation*}
\begin{split}
        \left| \left| \Lambda_{\text{cross,1}}\Bar{x}_0 \right| \right| & = \sqrt{\Bar{x}_0^\top YX^\top XY^\top \Bar{x}_0}\\
        & \leq \sqrt{\left| \left| X^\top X \right| \right| \cdot \Bar{x}_0^\top YY^\top \Bar{x}_0}\\
        & \leq \sqrt{\left| \left| \Lambda_{\Bar{\epsilon}} \right| \right| \cdot \Bar{x}_0^\top S \Bar{x}_0},
\end{split}
\end{equation*}
which holds since $\Lambda_{\Bar{\epsilon}}$ is symmetric positive semi-definite. The last inequality comes from \eqref{eq: cost part 1 - bound 3}. Now, from arithmetic mean-geometric mean inequality, we have the following under the event $\mathcal{E}_{\mathbf{M}} \cap \mathcal{E}_x$:
\vspace{0mm}
\begin{equation*}
    \begin{split}
        \left| \left| \Lambda_{\text{cross,1}}\Bar{x}_0 \right| \right| & \lesssim \left| \left| \Lambda_{\Bar{\epsilon}} \right| \right| + \Bar{x}_0^\top S \Bar{x}_0\\
        & \lesssim c^2_{\mathbf{\hat{G}_2}} c^2_{\mathbf{\hat{G}_1}} c_{\mathbf{\Bar{W}}}  + \Bar{x}_0^\top S \Bar{x}_0\\
        & \lesssim c^2_{\mathbf{\hat{G}_2}} c^2_{\mathbf{\hat{G}_1}} c_{\mathbf{\Bar{W}}} + || \Bar{x}_0 ||^2 ||S||\\
        & \lesssim c^2_{\mathbf{\hat{G}_2}} c^2_{\mathbf{\hat{G}_1}} c_{\mathbf{\Bar{W}}} + \left( X_\text{ac}^2 + \Bar{\mathcal{X}}^2 \right) ||S||.
    \end{split}
\end{equation*}
Now observe that $2 \Bar{\epsilon}_{[t-1:0]}^\top \Lambda_{\text{cross,1}}\Bar{x}_0$ is $c_{L_1}c^2_{\mathbf{\hat{G}_2}} c^2_{\mathbf{\hat{G}_1}} c_{\mathbf{\Bar{W}}} + \left( X_\text{ac}^2 + \Bar{\mathcal{X}}^2 \right) ||S||$ - Lipschitz, where $c_{L_1} > 0$ is some constant dependent on the true model parameters $\Theta$. Using Lemma \ref{Lemma B.5}, we have 
\vspace{0mm}
\begin{equation}
    \begin{split}
        & 2 \Bar{\epsilon}_{[t-1:0]}^\top \Lambda_{\text{cross,1}}\Bar{x}_0\\
        & \lesssim 2\sqrt{2\max\{\sigma^2_w,\sigma^2_z\}}\left( c^2_{\mathbf{\hat{G}_2}} c^2_{\mathbf{\hat{G}_1}} c_{\mathbf{\Bar{W}}} \sqrt{\log\left( \frac{2}{\delta} \right)} + \left( X_\text{ac}^2 + \Bar{\mathcal{X}}^2 \right) ||S|| \sqrt{\log\left( \frac{2}{\delta} \right)}\right)\\
        & \lesssim \max\{\sigma_w,\sigma_z\}\left( c^2_{\mathbf{\hat{G}_2}} c^2_{\mathbf{\hat{G}_1}} c_{\mathbf{\Bar{W}}} \sqrt{\log\left( \frac{2}{\delta} \right)} + \left( X_\text{ac}^2 + \Bar{\mathcal{X}}^2 \right) ||S|| \sqrt{\log\left( \frac{2}{\delta} \right)}\right),\\
    \end{split}
\end{equation}
which holds with a probability of at least $1 - \delta$ under the event $\mathcal{E}_{\mathbf{M}} \cap \mathcal{E}_x$.

\subsection*{Bounding $2 \eta_{[t-1:0]}^\top \Lambda_{\text{cross,2}}\Bar{x}_0$:}

In a similar fashion to the previous cross-term, we obtain the following bound under the event $\mathcal{E}_{\mathbf{M}} \cap \mathcal{E}_x$:
\vspace{0mm}
\begin{equation}
    \begin{split}
         & 2 \eta_{[t-1:0]}^\top \Lambda_{\text{cross,2}}\Bar{x}_0\\
         & \lesssim \sigma_\eta c^2_{\mathbf{\hat{G}_3}} c^2_{\mathbf{\hat{G}_1}} c_{\mathbf{\Bar{W}}} \sqrt{\log\left( \frac{2}{\delta} \right)} + \sigma_\eta\left( X_\text{ac}^2 + \Bar{\mathcal{X}}^2 \right) ||S|| \sqrt{\log\left( \frac{2}{\delta} \right)},\\
    \end{split}
    \label{eq: cost part 1 -  bound cross term 2}
\end{equation}
which holds with a probability of at least $1 - \delta$.

\subsection*{Bounding $2\Bar{\epsilon}_{[t-1:0]}^\top \Lambda_{\text{cross,3}} \eta_{[t-1:0]}$:}

Recalling from \eqref{eq: decomposition of the cost - part 1}, we have
\vspace{0mm}
\begin{equation*}
\begin{split}
    & 2\Bar{\epsilon}_{[t-1:0]}^\top \Lambda_{\text{cross,3}} \eta_{[t-1:0]}\\
    & = 2\Bar{\epsilon}_{[t-1:0]}^\top \text{diag}_{t}(\mathbf{\hat{G}_2})^\top \text{Toep}_{-1,t-1,t-1}(\mathbf{\hat{G}_1})^\top \text{diag}_t(\mathbf{\Bar{W}})\text{Toep}_{-1,t-1,t-1}(\mathbf{\hat{G}_1})\text{diag}_{t}(\mathbf{\hat{G}_3})\eta_{[t-1:0]}.
\end{split}
\end{equation*}
Now in a similar fashion as the previous cross terms, we have the following from the arithmetic mean-geometric mean inequality:
\vspace{0mm}
\begin{equation*}
    \begin{split}
        \left| \left| \Lambda_{\text{cross,3}} \eta_{[t-1:0]} \right| \right| & \leq \sqrt{\left| \left| \Lambda_{\Bar{\epsilon}} \right| \right| \cdot  \eta_{[t-1:0]}^\top \Lambda_{\eta} \eta_{[t-1:0]}}\\
        & \lesssim \left| \left| \Lambda_{\Bar{\epsilon}} \right| \right| +  \eta_{[t-1:0]}^\top \Lambda_{\eta} \eta_{[t-1:0]}.
    \end{split}
\end{equation*}
From \eqref{eq: cost part 1 - bound 2}, we have
\vspace{0mm}
\begin{equation*}
    \begin{split}
        & \left| \left| \Lambda_{\text{cross,3}} \eta_{[t-1:0]} \right| \right|\\
        & \lesssim c^2_{\mathbf{\hat{G}_2}} c^2_{\mathbf{\hat{G}_1}} c_{\mathbf{\Bar{W}}}  + tn_u c^2_{\mathbf{\hat{G}_3}} \left| \left| S \right| \right|\sigma_{\eta}^2 + \left( \sqrt{tn_u\log\left( \frac{2}{\delta}\right)} + \log\left( \frac{2}{\delta}\right)\right) \left(c^2_{\mathbf{\hat{G}_3}} c^2_{\mathbf{\hat{G}_1}} c_{\mathbf{\Bar{W}}} \right),
    \end{split}
\end{equation*}
which holds with a probability of at least $1 - \delta$ under the event $\mathcal{E}_{\mathbf{M}} \cap \mathcal{E}_x$. The above bound is indeed deterministic since $t$ in the above bound represents the number of time steps in a particular episode, which is known a priori. Now using Lemma \ref{Lemma B.5} on the entire term, we have
\vspace{0mm}
\begin{equation}
    \begin{split}
        & 2\Bar{\epsilon}_{[t-1:0]}^\top \Lambda_{\text{cross,3}} \eta_{[t-1:0]}\\
        & \lesssim \max\{\sigma_w,\sigma_z\} c^2_{\mathbf{\hat{G}_2}} c^2_{\mathbf{\hat{G}_1}}c_{\mathbf{\Bar{W}}} \sqrt{\log\left( \frac{2}{\delta} \right)} + tn_u c^2_{\mathbf{\hat{G}_3}} \left| \left| S \right| \right|\sigma_{\eta}^2 \max\{\sigma_w,\sigma_z\} \sqrt{\log\left( \frac{2}{\delta} \right)}\\
        & \hspace{4mm} + \left( \sqrt{tn_u}\log\left( \frac{2}{\delta}\right) + \log^2\left( \frac{2}{\delta}\right)\right) \left( c^2_{\mathbf{\hat{G}_3}} c^2_{\mathbf{\hat{G}_1}} c_{\mathbf{\Bar{W}}} \right) \max\{\sigma_w,\sigma_z\},
    \end{split}
\end{equation}
which holds with a probability of at least $1 - 2\delta$ under the event $\mathcal{E}_{\mathbf{M}} \cap \mathcal{E}_x$.

\subsection*{Putting things together:}

Finally, we have the following bound under the event $\mathcal{E}_{\mathbf{M}} \cap \mathcal{E}_x$:
\vspace{0mm}
\begin{equation}
    \begin{split}
         \sum_{t = 0}^{t-1}c_{t,1} & \lesssim  t J_s(\hat{\Theta}) + tn_u c^2_{\mathbf{\hat{G}_3}}  \left| \left| S \right| \right|\sigma_{\eta}^2 \left( 1 + \max\{\sigma_w,\sigma_z\} \sqrt{\log\left( \frac{2}{\delta} \right)} \right)\\
         & \hspace{4mm} + \max\{\sigma_w,\sigma_z, \sigma_\eta \} \left( X_\text{ac}^2 + \Bar{\mathcal{X}}^2 \right) ||S||\left(1 + \sqrt{\log\left( \frac{2}{\delta} \right)}\right)\\
         & \hspace{4mm} + \left(\sqrt{\Bigg( t(n_x + n_y) + \max\{\sigma^2_w,\sigma^2_z\}\Bigg)\log\left( \frac{2}{\delta}\right)} + \log\left( \frac{2}{\delta}\right) \right) \left( c^2_{\mathbf{\hat{G}_2}} c^2_{\mathbf{\hat{G}_1}} c_{\mathbf{\Bar{W}}}\right)\\
         & \hspace{4mm} + \left( \sqrt{tn_u\log\left( \frac{2}{\delta}\right)} + \left( \sqrt{tn_u\max\{\sigma^2_w,\sigma^2_z\}} +1 \right)\log\left( \frac{2}{\delta}\right) \right)\left(c^2_{\mathbf{\hat{G}_3}} c^2_{\mathbf{\hat{G}_1}} c_{\mathbf{\Bar{W}}} \right)\\
         & \hspace{4mm} + \left( \sigma_\eta \sqrt{\log\left( \frac{2}{\delta} \right)} + \max\{\sigma_w,\sigma_z\}\log^2\left( \frac{2}{\delta}\right)\right) \left( c^2_{\mathbf{\hat{G}_3}} c^2_{\mathbf{\hat{G}_1}} c_{\mathbf{\Bar{W}}} \right),\\
    \end{split}
    \label{eq: cost part 1}
\end{equation}
which holds with a probability of at least $1 - 6\delta$.

\subsection*{Upper bounding $\sum_{t = 0}^{t-1}c_{t,2}$}

To recall, from \eqref{eq: cost decomposition at time t} we have
\vspace{0mm}
\begin{equation*}
    \sum_{t = 0}^{t-1}c_{t,2} = \eta_t^\top R \eta_t - 2 \eta_t^\top R \hat{K} \hat{x}_{t|t,\hat{\Theta}} + 2z_t^\top Q C x_t + z_t^\top Q z_t.
\end{equation*}

\subsection*{Upper bounding $\sum_{t = 0}^{t-1} 2z_t^\top QC x_t$:}

Bounding this term follows analogously to the previous cross-terms. Under the event $\mathcal{E}_x$, we have:
\vspace{0mm}
\begin{equation*}
\begin{split}
    \left| \left| \text{diag}_t(QC) x_{[t-1:0]}\right| \right| & \leq \sqrt{||Q|| \cdot x_{[t-1:0]}^\top \text{diag}_t({C}^\top Q C) x_{[t-1:0]}}\\
    & \leq \sqrt{||Q|| \cdot ||\text{diag}_t({C}^\top Q C)|| \cdot ||x_{[t-1:0]}||^2}\\
    & \leq \sqrt{||Q|| \cdot ||{C}^\top Q C||} \cdot \sqrt{t}X_\text{ac}.\\
\end{split}
\end{equation*}
Now from Lemma \ref{Lemma B.5}, we have
\vspace{0mm}
\begin{equation}
    \begin{split}
        \sum_{t = 0}^{t-1} 2z_t^\top QC x_t & \lesssim \sqrt{t}\sigma_z ||Q||\cdot ||C|| X_\text{ac} \sqrt{\log \left( \frac{2}{\delta}\right)},
    \end{split}
    \label{eq: cost part 2 - bound 1}
\end{equation}
which holds with a probability of at least $1 - \delta$ under the event $\mathcal{E}_x$.

\subsection*{Upper bounding $ - \sum_{t=0}^{t-1} 2 \eta_t^\top R \hat{K} \hat{x}_{t|t,\hat{\Theta}}$:}

Let us define, $\hat{x}_{[t-1:0]} \coloneqq \begin{bmatrix} \hat{x}_{t-1|t-1,\hat{\Theta}}^\top &...& \hat{x}_{0|0,\hat{\Theta}}^\top \end{bmatrix}^\top$. Bounding this term is again similar to the previous cross-terms. Under the event $\mathcal{E}_x$, we have:
\vspace{0mm}
\begin{equation*}
    \begin{split}
        \left| \left| \text{diag}_t(R\hat{K}) \hat{x}_{[t-1:0]} \right| \right| & \leq \sqrt{ \left| \left| R \right| \right| \cdot \hat{x}_{[t-1:0]}^\top \text{diag}_t(\hat{K}^\top R\hat{K}) \hat{x}_{[t-1:0]} }\\
        & \leq \sqrt{ \left| \left| R \right| \right| \cdot \left| \left| \text{diag}_t(\hat{K}^\top R\hat{K}) \right| \right| \cdot || \hat{x}_{[t-1:0]} ||^2}\\
        & \leq \left| \left| R \right| \right| \Gamma \sqrt{t}\Bar{\mathcal{X}}.
    \end{split}
\end{equation*}
From Lemma \ref{Lemma B.5}, we have
\vspace{0mm}
\begin{equation}
    \begin{split}
         \sum_{t=0}^{t-1} - 2 \eta_t^\top R \hat{K} \hat{x}_{t|t,\hat{\Theta}} & = 2 (-\eta_{[t-1:0]})^\top \text{diag}_t(R\hat{K}) \hat{x}_{[t-1:0]}\\
         & \lesssim \sigma_\eta \sqrt{t} \left| \left| R \right| \right| \Gamma\Bar{\mathcal{X}} \sqrt{\log \left( \frac{2}{\delta}\right)},
    \end{split}
    \label{eq: cost part 2 - bound 2}
\end{equation}
which holds with a probability of at least $1-\delta$ under the event $\mathcal{E}_{\mathbf{M}} \cap \mathcal{E}_x$.

\subsection*{Upper bounding $\sum_{t=0}^{t-1} \eta_t^\top R \eta_t$:}

From Hanson-Wright inequality (refer to Theorem \ref{Theorem B.3 Hanson-Wright inequality}), we have
\vspace{0mm}
\begin{equation}
    \begin{split}
        & \sum_{t=0}^{t-1} \eta_t^\top R \eta_t = \eta_{[t-1:0]}^\top \text{diag}_t(R) \eta_{[t-1:0]}\\
        & \leq t n_u\sigma^2_\eta \text{Tr} \left( R\right) + \mathcal{O} \left( \sqrt{tn_u\log \left( \frac{2}{\delta} \right)} + ||R|| \log \left( \frac{2}{\delta} \right) \right)\\
        & \leq t n_u\sigma^2_\eta \text{Tr} \left( R\right) + \mathcal{O} \left( \sqrt{tn_u\log \left( \frac{2}{\delta} \right)} + \log \left( \frac{2}{\delta} \right) \right)||R||,\\
    \end{split}
    \label{eq: cost part 2 - bound 3}
\end{equation}
which holds with a probability of at least $1-\delta$.

\subsection*{Upper bounding $\sum_{t=0}^{t-1} z_t^\top Q z_t$:}

From Hanson-Wright inequality, we have
\vspace{0mm}
\begin{equation}
    \begin{split}
        & \sum_{t=0}^{t-1} z_t^\top Q z_t = z_{[t-1:0]}^\top \text{diag}_t(Q) z_{[t-1:0]}\\
        & \leq tn_y\sigma^2_z \text{Tr} \left( Q\right) + \mathcal{O} \left( \sqrt{tn_x\log \left( \frac{2}{\delta} \right)} + ||Q|| \log \left( \frac{2}{\delta} \right) \right)\\
        & \leq t n_y\sigma^2_z \text{Tr} \left( Q\right) + \mathcal{O} \left( \sqrt{tn_x\log \left( \frac{2}{\delta} \right)} + \log \left( \frac{2}{\delta} \right) \right)||Q||,\\
    \end{split}
    \label{eq: cost part 2 - bound 4}
\end{equation}
which holds with a probability of at least $1-\delta$.

\subsection*{Putting things together:}

We have the following bound under the event $\mathcal{E}_x \cap \mathcal{E}_{\mathbf{M}}$:
\vspace{0mm}
\begin{equation}
    \begin{split}
        \sum_{t = 0}^{t-1}c_{t,2} & \lesssim \sqrt{t}\sigma_z ||Q||\cdot ||C|| X_\text{ac} \sqrt{\log \left( \frac{2}{\delta}\right)} + \sigma_\eta \sqrt{t} \left| \left| R \right| \right| \Gamma\Bar{\mathcal{X}} \sqrt{\log \left( \frac{2}{\delta}\right)}\\
        & \hspace{4mm} + t n_u\sigma^2_\eta \text{Tr} \left( R\right) +  \left( \sqrt{tn_u\log \left( \frac{2}{\delta} \right)} + \log \left( \frac{2}{\delta} \right) \right) ||R||\\
        & \hspace{4mm} + t n_y\sigma^2_z \text{Tr} \left( Q\right) + \left( \sqrt{tn_x\log \left( \frac{2}{\delta} \right)} + \log \left( \frac{2}{\delta} \right) \right)||Q||,
    \end{split}
    \label{eq: cost part 2}
\end{equation}
which holds with a probability of at least $1-4\delta$.

\subsection*{Final upper-bound on the cumulative cost:}

Combining \eqref{eq: cost part 1} and \eqref{eq: cost part 2}, we have
\vspace{0mm}
\begin{equation}
    \begin{split}
        & \sum_{t=0}^{t-1}  y_t^\top Q y_t + u_t^\top R u_t\\
        & \lesssim  t J_s(\hat{\Theta}) + tn_u c^2_{\mathbf{\hat{G}_3}}  \left| \left| S \right| \right|\sigma_{\eta}^2 \left( 1 + \max\{\sigma_w,\sigma_z\} \sqrt{\log\left( \frac{2}{\delta} \right)} \right)\\
         & \hspace{4mm} + \max\{\sigma_w,\sigma_z, \sigma_\eta \} \left( X_\text{ac}^2 + \Bar{\mathcal{X}}^2 \right) ||S||\left(1 + \sqrt{\log\left( \frac{2}{\delta} \right)}\right)\\
         & \hspace{4mm} + \left(\sqrt{\Bigg( t(n_x + n_y) + \max\{\sigma^2_w,\sigma^2_z\}\Bigg)\log\left( \frac{2}{\delta}\right)} + \log\left( \frac{2}{\delta}\right) \right) \left( c^2_{\mathbf{\hat{G}_2}} c^2_{\mathbf{\hat{G}_1}} c_{\mathbf{\Bar{W}}}\right)\\
         & \hspace{4mm} + \left( \sqrt{tn_u\log\left( \frac{2}{\delta}\right)} + \left( \sqrt{tn_u\max\{\sigma^2_w,\sigma^2_z\}} +1 \right)\log\left( \frac{2}{\delta}\right) \right)\left(c^2_{\mathbf{\hat{G}_3}} c^2_{\mathbf{\hat{G}_1}} c_{\mathbf{\Bar{W}}} \right)\\
         & \hspace{4mm} + \left( \sigma_\eta \sqrt{\log\left( \frac{2}{\delta} \right)} + \max\{\sigma_w,\sigma_z\}\log^2\left( \frac{2}{\delta}\right)\right) \left( c^2_{\mathbf{\hat{G}_3}} c^2_{\mathbf{\hat{G}_1}} c_{\mathbf{\Bar{W}}} \right)\\
         & \hspace{4mm} + \sqrt{t}\sigma_z ||Q||\cdot ||C|| X_\text{ac} \sqrt{\log \left( \frac{2}{\delta}\right)} + \sigma_\eta \sqrt{t} \left| \left| R \right| \right| \Gamma\Bar{\mathcal{X}} \sqrt{\log \left( \frac{2}{\delta}\right)}\\
        & \hspace{4mm} + t n_u\sigma^2_\eta \text{Tr} \left( R\right) +  \left( \sqrt{tn_u\log \left( \frac{2}{\delta} \right)} + \log \left( \frac{2}{\delta} \right) \right) ||R||\\
        & \hspace{4mm} + t n_y\sigma^2_z \text{Tr} \left( Q\right) + \left( \sqrt{tn_x\log \left( \frac{2}{\delta} \right)} + \log \left( \frac{2}{\delta} \right) \right)||Q|| \coloneqq c_{\text{cost},k},
    \end{split}
    \label{eq: upper bound on the cumulative cost over one episode}
\end{equation}
which holds with a probability of at least $1-10\delta$ under the event $\mathcal{E}_{\mathbf{M}} \cap \mathcal{E}_x$. Essentially, $c_{\text{cost},k}$ is the upper bound on the cumulative cost incurred during the $k^\text{th}$ episode. Further, there exists an event $\mathcal{E}_\text{cost}$, which holds with a probability of at least $1 - 10\delta$ such that, on $\mathcal{E}_{\mathbf{M}} \cap \mathcal{E}_x \cap \mathcal{E}_\text{cost}$ the following bound holds:
\vspace{0mm}
\begin{equation*}
     \sum_{t=0}^{t-1}  y_t^\top Q y_t + u_t^\top R u_t \lesssim c_{\text{cost},k}.
\end{equation*}
Now, $\mathcal{E}_{\mathbf{M}} \cap \mathcal{E}_x \cap \mathcal{E}_\text{cost}$ holds with a probability of at least $1 - 12\delta$. We can re-parametrise $\delta \rightarrow \frac{\delta}{12\log_2(T / T_\text{w})}$. The reason for doing so will become apparent in the following section. The bound in \eqref{eq: upper bound on the cumulative cost over one episode} holds for one episode. This bound can be used to determine the upper bound on the cumulative cost for the entire horizon by summing the established bound in \eqref{eq: upper bound on the cumulative cost over one episode} over the number of episodes, i.e., by taking the union bound. This will be addressed in the following section. 

\subsection*{Regret upper bound}

To recall, the regret is defined as such:
\vspace{0mm}
\begin{equation*}
    \mathrm{Regret}(T) = \sum_{t=0}^{T-1}  (y_t^\top Q y_t + u_t^\top R u_t - J_*).
\end{equation*}
To recall, the system parameters are estimated at the start of each episode. Hence, the system parameter being used during the $k^\text{th}$ episode is denoted as $\Hat{\Theta}_{l_k}$, where $l_k$ is the time step at the start of the $k^{\text{th}}$ episode. Further, as a reminder, the number of time steps in each episode is double the previous episode. Hence, we can approximate the number of episodes to be $\lfloor \log_2(T/T_\text{w}) \rfloor$. From \eqref{eq: upper bound on the cumulative cost over one episode}, we have
\begin{equation*}
    \begin{split}
        \mathrm{Regret}(T) & \lesssim \sum_{k=0}^{\lfloor \log_2(T/T_\text{w}) \rfloor-1} l_k \left(J_s(\hat{\Theta}_{l_k}) -  J_*\right) + l_k n_u c^2_{\mathbf{\hat{G}_3}}  \left| \left| S \right| \right|\sigma_{\eta_{l_k}}^2 \left( 1 + \max\{\sigma_w,\sigma_z\} \sqrt{\log\left( \frac{2}{\delta} \right)} \right)\\
         & \hspace{4mm} + \max\{\sigma_w,\sigma_z, \sigma_{\eta_{l_k}} \} \left( X_\text{ac}^2 + \Bar{\mathcal{X}}^2 \right) ||S||\left(1 + \sqrt{\log\left( \frac{2}{\delta} \right)}\right)\\
         & \hspace{4mm} + \left(\sqrt{\Bigg( l_k(n_x + n_y) + \max\{\sigma^2_w,\sigma^2_z\}\Bigg)\log\left( \frac{2}{\delta}\right)} + \log\left( \frac{2}{\delta}\right) \right) \left( c^2_{\mathbf{\hat{G}_2}} c^2_{\mathbf{\hat{G}_1}} c_{\mathbf{\Bar{W}}}\right)\\
         & \hspace{4mm} + \left( \sqrt{l_k n_u\log\left( \frac{2}{\delta}\right)} + \left( \sqrt{l_k n_u\max\{\sigma^2_w,\sigma^2_z\}} +1 \right)\log\left( \frac{2}{\delta}\right) \right)\left(c^2_{\mathbf{\hat{G}_3}} c^2_{\mathbf{\hat{G}_1}} c_{\mathbf{\Bar{W}}} \right)\\
    \end{split}
\end{equation*}
\vspace{0mm}
\begin{equation}
    \begin{split}
         & \hspace{4mm} + \left( \sigma_{\eta_{l_k}} \sqrt{\log\left( \frac{2}{\delta} \right)} + \max\{\sigma_w,\sigma_z\}\log^2\left( \frac{2}{\delta}\right)\right) \left( c^2_{\mathbf{\hat{G}_3}} c^2_{\mathbf{\hat{G}_1}} c_{\mathbf{\Bar{W}}} \right)\\
         & \hspace{4mm} + \sqrt{l_k}\sigma_z ||Q||\cdot ||C|| X_\text{ac} \sqrt{\log \left( \frac{2}{\delta}\right)} + \sigma_{\eta_{l_k}} \sqrt{l_k} \left| \left| R \right| \right| \Gamma\Bar{\mathcal{X}} \sqrt{\log \left( \frac{2}{\delta}\right)}\\
        & \hspace{4mm} + l_k n_u\sigma^2_{\eta_{l_k}} \text{Tr} \left( R\right) +  \left( \sqrt{l_k n_u\log \left( \frac{2}{\delta} \right)} + \log \left( \frac{2}{\delta} \right) \right) ||R||\\
        & \hspace{4mm} + l_k n_y\sigma^2_z \text{Tr} \left( Q\right) + \left( \sqrt{l_k n_x\log \left( \frac{2}{\delta} \right)} + \log \left( \frac{2}{\delta} \right) \right)||Q||,
    \end{split}
    \label{eq: penultimate regret upper bound}
\end{equation}
which holds under the event $\mathcal{E}_{\mathbf{M}} \cap \mathcal{E}_x \cap \mathcal{E}_\text{cost}$. Now, we will refine the above bound. To recall, $\sigma^2_{\eta_{l_k}} = \frac{\gamma}{\sqrt{l_k}}$. Furthermore, based on the result in Lemma \ref{Lemma: Markov parameter estimation error during LBC phase} and Lemma \ref{Lemma: confidence bound on the system parameters}, we have the following result \cite[Th. 4]{mania2019certainty}:
\vspace{0mm}
\begin{equation*}
    \begin{split}
        J(\hat{\Theta}_k) -  J_* & = J_s(\hat{\Theta}_k) - J_s(\Theta)\\
        & \leq c_{\Theta} \tau(\hat{\Theta}_t, \Theta)^2,\\
    \end{split}
\end{equation*}
where $c_{\Theta} > 0$ is some constant dependent on the true model parameter $\Theta$. The above bound holds when $\tau(\hat{\Theta}_t, \Theta) \leq \epsilon_J$, which is satisfied when $t \geq T_\text{w} \geq T_J$ for some $T_J > H$. Now, we have
\vspace{0mm}
\begin{equation*}
    \begin{split}
        J_s(\hat{\Theta}_k) -  J_* & = J_s(\hat{\Theta}_k) - J_s(\Theta) - \sigma^2_z n_y \text{Tr}(Q)\\
        & \lesssim c_{\Theta}\left(\frac{1}{\sqrt{l_k}} \right)^2 -  \sigma^2_z n_y \text{Tr}(Q),\\
    \end{split}
\end{equation*}
where the last inequality comes from Lemma \ref{Lemma: Markov parameter estimation error during LBC phase}. Combining the above result with the bound in \eqref{eq: penultimate regret upper bound}, we obtain the final regret upper bound:
\begin{equation*}
    \begin{split}
        \mathrm{Regret}(T) & \lesssim \lfloor \log_2(T/T_\text{w}) \rfloor c_{\Theta} + \sqrt{T} \gamma n_u c^2_{\mathbf{\hat{G}_3}}  \left| \left| S \right| \right|\left( 1 + \max\{\sigma_w,\sigma_z\} \sqrt{\log\left( \frac{2}{\delta} \right)} \right)\\
         & \hspace{4mm} + \lfloor \log_2(T/T_\text{w}) \rfloor \max\{\sigma_w,\sigma_z, T^{-1/4}\}\left( X_\text{ac}^2 + \Bar{\mathcal{X}}^2 \right) ||S||\left(1 + \sqrt{\log\left( \frac{2}{\delta} \right)}\right)\\
         & \hspace{4mm} + \left(\sqrt{\Bigg( T(n_x + n_y) + \max\{\sigma^2_w,\sigma^2_z\}\Bigg)\log\left( \frac{2}{\delta}\right)} + \log\left( \frac{2}{\delta}\right) \right) \left( c^2_{\mathbf{\hat{G}_2}} c^2_{\mathbf{\hat{G}_1}} c_{\mathbf{\Bar{W}}}\right)\\
         & \hspace{4mm} + \left( \sqrt{T n_u\log\left( \frac{2}{\delta}\right)} + \left( \sqrt{T n_u\max\{\sigma^2_w,\sigma^2_z\}} +1 \right)\log\left( \frac{2}{\delta}\right) \right)\left(c^2_{\mathbf{\hat{G}_3}} c^2_{\mathbf{\hat{G}_1}} c_{\mathbf{\Bar{W}}} \right)\\
        & \hspace{4mm} + \lfloor \log_2(T/T_\text{w}) \rfloor \left( \sqrt{n_u \gamma} T^{-1/4} \sqrt{\log\left( \frac{2}{\delta} \right)} + \max\{\sigma_w,\sigma_z\}\log^2\left( \frac{2}{\delta}\right)\right) \left( c^2_{\mathbf{\hat{G}_3}} c^2_{\mathbf{\hat{G}_1}} c_{\mathbf{\Bar{W}}} \right)\\
    \end{split}
\end{equation*}
\vspace{0mm}
\begin{equation}
    \begin{split}
         & + \sqrt{T}\sigma_z ||Q||\cdot ||C|| X_\text{ac} \sqrt{\log \left( \frac{2}{\delta}\right)} + \sqrt{T \gamma} \left| \left| R \right| \right| \Gamma\Bar{\mathcal{X}} \sqrt{\log \left( \frac{2}{\delta}\right)}\\
        & + \sqrt{T}\gamma n_u \text{Tr} \left( R\right) +  \left( \sqrt{T n_u\log \left( \frac{2}{\delta} \right)} + \log \left( \frac{2}{\delta} \right) \right) ||R||\\
        & + \left( \sqrt{T n_x\log \left( \frac{2}{\delta} \right)} + \log \left( \frac{2}{\delta} \right) \right)||Q||,
    \end{split}
    \label{eq: final regret upper bound}
\end{equation}
which holds under the event $\mathcal{E}_{\mathbf{M}} \cap \mathcal{E}_x \cap \mathcal{E}_\text{cost}$. The regret bound in \eqref{eq: final regret upper bound} suggests that $\mathrm{Regret}(T) = \Tilde{\mathcal{O}}(\sqrt{T})$ with a probability of at least $1 - \delta$. This concludes the proof. \hfill$\blacksquare$

\subsection{Technical background related to the warm-up phase} \label{appendix: technical background warm-up phase}

\lem[\cite{lale2020regret_prior}] \label{Lemma: bound on state and input during warm-up} Let $\Xi(A)$ be as defined in Definition \ref{Definition: Phi(A)}. For any $\delta \in (0,1/6)$, with a probability of at least $1 - \delta/6$, the following bounds hold when controlling the system as defined in \eqref{eq: LQG state space equation} with $u_t \sim \mathcal{N}(0, \sigma^2_uI)$ for all $t \in [0,T_\text{w}-1]$:
\vspace{0mm}
\begin{equation}
    \begin{split}
        ||x_t|| \leq X_\text{w},\hspace{1mm} ||u_t|| \leq U_\text{w}, \hspace{1mm} ||z_t|| \leq Z,
    \end{split}
\end{equation}
\vspace{0mm}
where
\vspace{0mm}
\begin{equation}
    \begin{split}
        X_\text{w} & \coloneqq (\sigma_w + \sigma_u ||B||)\frac{\Xi(A)\rho({A})}{\sqrt{1 - \rho(A)^2}}\sqrt{2n_x\text{ log}(12n_xT_\text{w}/\delta)},\\
        U_\text{w} & \coloneqq \sigma_u \sqrt{2n_u\text{ log}(12n_uT_\text{w}/\delta)},\\
        Z & \coloneqq \sigma_z \sqrt{2n_y\text{ log}(12n_yT_\text{w}/\delta)}.
    \end{split}
    \label{eq: bound on state and input during warm-up}
\end{equation}
As a consequence of the above bounds, we have 
\vspace{0mm}
\begin{equation}
    \begin{split}
        ||\phi_t|| & \leq \underbrace{(||C|| X_\text{w} + Z + U_\text{w})}_{\Upsilon_\text{w}}\sqrt{H},
    \end{split}
    \label{eq: Definition of Upsilon_w}
\end{equation}
which holds with a probability of at least $1 - \delta/2$ with $\delta \in (0,1/2)$, for all $t \in [H, T_\text{w} - 1]$.

\lem[\cite{lale2020regret}] \label{Lemma: PE warm-up}
For some $\sigma_\text{o} > 0$ and for some $T_\text{o} > H$, if the warm-up duration $T_\text{w} \geq T_\text{o}$, then for all $t \in [T_\text{o}, T_\text{w} - 1]$, and for any $\delta \in (0,1)$, with a probability of at least $1 - \delta$, we have
\vspace{0mm}
\begin{equation}
    \sigma_{\text{min}}\left(\sum_{i = H}^{T_\text{w} - 1} \phi_i\phi_i^\top \right) \geq (T_\text{w} - H) \frac{\sigma_\text{o}^2\text{min}\{\sigma_w^2, \sigma_z^2, \sigma_u^2\}}{2},
    \label{eq: PE warm-up}
\end{equation}
where
\vspace{0mm}
\begin{equation}
    T_\text{o} \coloneqq \frac{ 32 \Upsilon_w^4 H \text{log}\left(\frac{2H(n_y + n_u)}{\delta}\right)}{\sigma_\text{o}^4\text{min}\{\sigma_w^4,\sigma_z^4,\sigma_u^4\}}.
    \label{eq: To for warm-up}
\end{equation}

\lem[\cite{lale2021adaptive}] \label{Lemma: Markov parameter estimation error after warm-up} The initial estimate of the Markov parameters after the warm-up period of $T_\text{w} \geq \max\{H, T_\text{o}\}$ time steps, $\mathbf{\hat{M}}_{T_\text{w}}$, obeys the following bound with a probability of at least $1 - \delta$ for $\delta \in (0,1)$:
\vspace{0mm}
\begin{equation}
    ||\mathbf{\hat{M}}_{T_\text{w}} - \mathbf{M}|| \leq \frac{\mathrm{poly}(n_y, n_u, H, \Sigma_e, \Upsilon_\text{w})}{\sqrt{\sigma_\text{min}\left(V_{T_\text{w}} \right)}} = \tilde{\mathcal{O}}\left( \frac{1}{\sqrt{T_\text{w}}} \right),
\end{equation}
where $\Upsilon_\text{w}$ is as defined in \eqref{eq: Definition of Upsilon_w}. Moreover, if $T_\text{w} \geq T_{\mathbf{M}}$, then 
\vspace{0mm}
\begin{equation}
    ||\mathbf{\hat{M}}_{T_\text{w}} - \mathbf{M}|| \leq 1,
\end{equation}
where $T_\mathbf{M} = R_\text{warm}^2$,
\begin{equation*}
    R_\text{warm} \coloneqq \frac{\sqrt{2n_y||\Sigma_e||\log\left( \frac{\text{det}(V_{T_\text{w}})^{1/2}}{\delta \text{det}(\lambda I)^{1/2}} \right)} + \Bar{\mathbf{m}} \sqrt{2\lambda} + \frac{\sqrt{2H}}{T_\text{w}}}{\sigma_\text{o}\min\{\sigma_w, \sigma_z, \sigma_u\}},
\end{equation*}
and $\Bar{\mathbf{m}}$ is a positive constant satisfying $||\mathbf{M}||_\mathrm{F} \leq \Bar{\mathbf{m}}$.

\lem[\cite{lale2021adaptive}] \label{Lemma: confidence bound on the system parameters} Let $\mathcal{H} = \begin{bmatrix} \mathcal{H}_\mathbf{F} & \mathcal{H}_\mathbf{G} \end{bmatrix}$ be the concatenation of two Hankel matrices obtained from $\mathbf{M}$. The notations $\mathcal{H}_\mathbf{F}$ and $\mathcal{H}_\mathbf{G}$ have analogous expressions to the definition in \eqref{eq: Hankel matrix definition} but with the true model parameter $\Theta$. Let Assumption \ref{assumption:setS} hold, then we have $\mathcal{H}$ to be a rank-$n_x$ matrix. Let $\tilde{A},\tilde{B},\tilde{C},\tilde{L}$ be the similarity transformed model parameters obtained from $\mathbf{M}$ by using Algorithm \ref{algorithm: SYSID}. At time step $t$, let $\hat{A}_t, \hat{B}_t, \hat{C}_t, \hat{L}_t$ be the estimated model parameters obtained from $\mathbf{\hat{M}}_t$ via Algorithm \ref{algorithm: SYSID}. Then, for a given choice of $H \geq \Bar{H}$, there exists a unitary matrix $\mathbf{T} \in \mathbb{R}^{n_x \times n_x}$ such that, $\Tilde{\Theta} = (\Tilde{A},\Tilde{B},\Tilde{C},\Tilde{L}) \in (\mathcal{C}_A(t) \times \mathcal{C}_B(t) \times \mathcal{C}_C(t) \times \mathcal{C}_L(t))$, where
\vspace{0mm}
\begin{equation}
    \begin{split}
        & \mathcal{C}_A(t) = \left\{\ A'\in \mathbb{R}^{n_x \times n_x}: || \hat{A}_t - \mathbf{T}^\top A'\mathbf{T}|| \leq \beta_A(t)  \right\},\\
        & \mathcal{C}_B(t) = \left\{\ B'\in \mathbb{R}^{n_x \times n_u}: || \hat{B}_t - \mathbf{T}^\top B'|| \leq \beta_B(t)  \right\},\\
        & \mathcal{C}_C(t) = \left\{\ C'\in \mathbb{R}^{n_y \times n_x}: || \hat{C}_t - C'\mathbf{T}|| \leq \beta_C(t)  \right\},\\
        & \mathcal{C}_L(t) = \left\{\ L'\in \mathbb{R}^{n_u \times n_y}: || \hat{L}_t - \mathbf{T}^\top L'|| \leq \beta_L(t)  \right\},\\
    \end{split}
\end{equation}
where for some positive constants $c_A, c_{L,1}$ and $c_{L,2}$ that depend on the true model parameter $\Theta$, we have
\vspace{0mm}
\begin{equation}
    \begin{split}
        & \beta_A(t) = c_A \left( \frac{\sqrt{n_xH}(||\mathcal{H}|| + \sigma_{n_x}(\mathcal{H}))}{\sigma^2_{n_x}(\mathcal{H})}\right) || \hat{\mathbf{M}}_t - \mathbf{M}||,\\
        & \beta_B(t) = \beta_C(t) = \sqrt{\frac{20n_xH}{\sigma_{n_x}(\mathcal{H})}} || \hat{\mathbf{M}}_t - \mathbf{M}||,\\
        & \beta_L(t) = \frac{c_{L,1} ||\mathcal{H}||}{\sqrt{\sigma_{n_x}(\mathcal{H})}} \beta_A(t) + c_{L,2} \frac{\sqrt{n_xH}(||\mathcal{H}|| + \sigma_{n_x}(\mathcal{H}))}{\sigma^{3/2}_{n_x}(\mathcal{H})}|| \hat{\mathbf{M}}_t - \mathbf{M}||.\\
    \end{split}
\end{equation}

The following result can be found in the proof of \cite[Th. 3.4]{lale2021adaptive} and holds when the model parameter estimation error is monotonically decreasing.

\coro \label{Corollary: confidence bound on the system parameters after warm-up} For some $T_A, T_B > H$ and for a given choice of $H \geq \Bar{H}$, there exists a unitary matrix $\mathbf{T} \in \mathbb{R}^{n_x \times n_x}$ such that, with a probability of at least $1 - \delta$ for $\delta \in (0,1)$, we have
\vspace{0mm}
\begin{equation}
    \begin{split}
        & \mathcal{C}_A(T_\text{w}) = \left\{\ A'\in \mathbb{R}^{n_x \times n_x}: || \hat{A}_t - \mathbf{T}^\top A'\mathbf{T}|| \leq \beta_A(T_\text{w})  \right\},\\
        & \mathcal{C}_B(T_\text{w}) = \left\{\ B'\in \mathbb{R}^{n_x \times n_u}: || \hat{B}_t - \mathbf{T}^\top B'|| \leq \beta_B(T_\text{w})  \right\},\\
        & \mathcal{C}_C(T_\text{w}) = \left\{\ C'\in \mathbb{R}^{n_y \times n_x}: || \hat{C}_t - C'\mathbf{T}|| \leq \beta_C(T_\text{w})  \right\},\\
        & \mathcal{C}_L(T_\text{w}) = \left\{\ L'\in \mathbb{R}^{n_u \times n_y}: || \hat{L}_t - \mathbf{T}^\top L'|| \leq \beta_L(T_\text{w})  \right\},\\
    \end{split}
\end{equation}
\vspace{0mm}
where 
\vspace{0mm}
\begin{equation}
    \begin{split}
        & \beta_A(T_\text{w}) = \frac{\sigma_{n_x}(A)}{2} \text{ if }T_\text{w} \geq T_A,\\
        & \beta_B(T_\text{w}) = \beta_C(T_\text{w}) = 1 \text{ if }T_\text{w} \geq T_B,\\
        & \beta_L(T_\text{w}) = \frac{c_{L,1} ||\mathcal{H}||}{\sqrt{\sigma_{n_x}(\mathcal{H})}} \beta_A(T_\text{w}) + c_{L,2} \frac{\sqrt{n_xH}(||\mathcal{H}|| + \sigma_{n_x}(\mathcal{H}))}{\sigma^{3/2}_{n_x}(\mathcal{H})} \text{ if }T_\text{w} \geq T_A.\\
    \end{split}
\end{equation}

\subsection{Technical background related to LQG-IF2E} \label{appendix: technical background LQG-IF2E}

\lem[\cite{ziemann2021uninformative}]{(Chain rule for Fisher information)} \label{Lemma: Chain rule for Fisher information} For  random variables $x \in \mathbb{R}^n$ and $y \in \mathbb{R}^m$, and a density $p$, consider the following FIM for a bivariate density $p_\theta (x,y)$ parameterized by $\theta$:
\vspace{0mm}
\begin{equation*}
    \Bar{I}_{p(x,y)}(\theta) = \int_{\mathbb{R}^m} \int_{\mathbb{R}^n} \nabla_\theta \log p_\theta(x,y) \left( \nabla_\theta \log p_\theta(x,y) \right)^\top p_\theta(x,y) dx dy.
\end{equation*}
Define the conditional FIM as 
\vspace{0mm}
\begin{equation*}
    \Bar{I}_{p(x|y)}(\theta) = \int_{\mathbb{R}^m} \int_{\mathbb{R}^n} \nabla_\theta \left( \log p_\theta(x|y) \right) \left[ \nabla_\theta \left( \log p_\theta(x|y) \right) \right]^\top p_\theta(x|y)dx\hspace{1mm}p_\theta(y)dy.
\end{equation*}
Then
\vspace{0mm}
\begin{equation*}
    \Bar{I}_{p(x,y)}(\theta) = \Bar{I}_{p(x|y)}(\theta) + \Bar{I}_{p(y)}(\theta),
\end{equation*}
assuming that $\nabla_\theta \log p_\theta(x|y)$ and $\nabla_\theta \log p_\theta(y)$ have mean zero.

\lem[\cite{ziemann2022regret}] \label{Lemma: FIM of a Gaussian random variable} Let $\mu: \mathbb{R}^{d_\theta} \rightarrow \mathbb{R}^d$ and $V: \mathbb{R}^{d_\theta} \rightarrow \mathbb{R}^{d \times d}$, with $V > 0$  for all $\theta \in \mathbb{R}^{d_\theta}$, and define 
\vspace{0mm}
\begin{equation*}
    \gamma_\theta(x) = \frac{1}{\sqrt{(2\pi)^d \text{det}(V(\theta))}}\exp \left( -\frac{1}{2} (x - \mu(\theta))^\top {V(\theta)}^{-1} (x - \mu(\theta)) \right).
\end{equation*}
Then
\vspace{0mm}
\begin{equation*}
    \Bar{I}_\gamma(\theta) = \left( \verb|D|_\theta \mu(\theta) \right)^\top V(\theta)^{-1} \left( \verb|D|_\theta \mu(\theta) \right) + \frac{1}{2} \left( \verb|D|_\theta \verb|vec| \left( V(\theta) \right) \right)^\top \left(I \otimes {V(\theta)}^{-2} \right) \verb|D|_\theta \verb|vec| \left(V(\theta)\right).
\end{equation*}

\lem \label{Lemma: FIM} For the approximate model \eqref{eq: input-output trajectory representation}, the FIM under any policy $\pi$ after collecting the observations $\{y_i\}_{i = 0}^{t}$ and $\{u_i\}_{i = 0}^{t-1}$ for $t \geq H$, is given by 
\vspace{0mm}
\begin{equation*}
    I_{H,t} = \sum_{i = H}^{t} \mathbb{E}\left[ \phi_i {\phi_i}^\top \otimes \Sigma_e^{-1} \right].
\end{equation*}
\begin{proof}
This proof is an extension from the state measurement case addressed in \cite{ziemann2022regret} to the partial observability case. From Lemma \ref{Lemma: Chain rule for Fisher information}, we have for the observations $\{y_i\}_{i = 0}^{t}$ and $\{u_i\}_{i = 0}^{t-1}$
\vspace{0mm}
\begin{equation*}
    I_{H,t} = \Bar{I}_{p\left( \{y_i\}_{i=H}^{t}, \{\phi_i\}_{i=H}^{t}\right)} = \sum_{i = H}^{t} \mathbb{E}\left[ \mathcal{L}_i(\mathbf{M})\right],
\end{equation*}
where $p\left( \{y_i\}_{i=H}^{t}, \{\phi_i\}_{i=H}^{t}\right)$ is a multivariate density function for the sequences $\{ y_i \}_{i = H}^{t}$ and $\{ \phi_i \}_{t = H}^{t}$. Further,
\vspace{0mm}
\begin{equation*}
\begin{split}
    \mathcal{L}_t(\mathbf{M}) = \int_{\mathbb{R}^{n_y}} \nabla_{\mathbf{M}} \log p\left( \Bar{y} - \mathbf{M}\phi_t - e_t\right) \left( \nabla_{\mathbf{M}} \log p\left( \Bar{y} - \mathbf{M}\phi_t - e_t\right) \right)^\top\\
    \cdot p\left( \Bar{y} - \mathbf{M}\phi_t - e_t\right)d\Bar{y},
\end{split}
\end{equation*}
where $\Bar{y}$ is a dummy variable for integration. Notice that $y_t|\{(y_i, u_i)\}_{i = 0}^{t-1} \sim \mathcal{N}(\mathbf{M} \phi_t, \Sigma_e)$, where $\Sigma_e = C\Sigma C^\top + \sigma^2_z I$. Therefore, from Lemma \ref{Lemma: FIM of a Gaussian random variable}, with $y_t = \mathbf{M} \phi_t + e_t$, we have
\vspace{0mm}
\begin{equation*}
    \begin{split}
        \mathcal{L}_t(\mathbf{M}) & =  \mathbb{E} \left[\left( \Big(\texttt{D}_{\mathbf{M}}\mathbf{M}\right)\phi_t \Big)^\top \Sigma_e^{-1} \left(\texttt{D}_{\mathbf{M}}\mathbf{M}\right)\phi_t \right]\\
         & = \mathbb{E} \left[ \left( \nabla_{\mathbf{M}}\texttt{vec}(\mathbf{M}) \right)^\top \left( \phi_t {\phi_t}^\top \otimes \Sigma_e^{-1} \right) \nabla_{\mathbf{M}}\texttt{vec}(\mathbf{M}) \right]\\
        & = \mathbb{E} \left[ \phi_t {\phi_t}^\top \otimes \Sigma_e^{-1} \right].\\
    \end{split}
\end{equation*}
\vspace{0mm}
Then
\vspace{0mm}
\begin{equation*}
     I_{H,t} = \sum_{i = H}^{t} \mathbb{E} \left[ \phi_i {\phi_i}^\top \otimes \Sigma_e^{-1}\right].
\end{equation*}
This concludes the proof.  
\end{proof}

\subsection{Technical Tools} \label{appendix B: technical lemmas}

\subsection*{Definitions:}

\defn[\cite{simchowitz2020naive}]{(Discrete Lyapunov equation)} \label{Definition B.1} Let $X,Y \in \mathbb{R}^{m \times m}$ with $Y =Y^\top$ and $\rho(X) < 1$. We let $\mathcal{T}_X[P] \coloneqq X^\top P X + Y$, and let $\text{dlyap}(X,Y)$ denote the unique positive semi-definite solution $\mathcal{T}_X[P] = P$.

\defn[\cite{mania2019certainty}] \label{Definition: Phi(A)}  For any square matrix $M$, define the following:
\begin{equation*}
   \Xi(M) \coloneqq \text{sup}_{\tau \geq 0} \frac{||{M}^\tau||}{\rho_M^\tau},
\end{equation*}
for some $\rho_M \geq \rho(M) $. Moreover, if $M$ is Schur stable, a $\rho_M \in (0,1)$ can be chosen such that $ \Xi(M) \geq 1$ and $\Xi(M)$ is finite.

\subsection*{Probabilistic bounds:}

\lem[\cite{abbasi2011regret}] \label{Lemma B.1}
Let $v \in \mathbb{R}^d$ be an entry-wise $M$-sub-Gaussian random variable. Then with probability of at least $1-\delta$, $||v|| \leq M\sqrt{2d\text{ log}(2d/\delta)}$.

\lem{(Gaussian concentration inequality)} \label{Lemma B.5} Let $X = \begin{bmatrix} X_1,...,X_n \end{bmatrix}^\top$ be a vector with i.i.d. standard Gaussian entries and $G: \mathbb{R}^n \rightarrow \mathbb{R}$ a $\mathcal{L}$-Lipschitz function ($|G(x) - G(y)| \leq \mathcal{L} ||x-y||$, for all $x,y \in \mathbb{R}^n$). Then, for every $t \geq 0$ 
\vspace{0mm}
\begin{equation*}
    \mathbb{P}\left\{ |G(X) - \mathbb{E}[G(X)]| \geq t\right\} \leq 2\exp\left( \frac{-t^2}{2\mathcal{L}^2} \right).
\end{equation*}

\lem[\cite{tropp2012user}] \label{Lemma B.2}
 Consider a self-adjoint matrix martingale $\{\mathbf{Y}_k: k = 1,..,n\}$ in dimension $d$, and let $\{\mathbf{X}_k\}$ be the associated difference equation. Consider also a fixed sequence $\{\mathbf{A}_k\}$ of self-adjoint matrices that satisfy
\vspace{0mm}
\begin{equation*}
    \mathbb{E}_{k-1}\mathbf{X}_k = 0 \text{ and } \mathbf{X}_k^2 \leq \mathbf{A}_k^2 \text{ almost surely.}
\end{equation*}
Compute the variance parameter
\vspace{0mm}
\begin{equation*}
    \sigma^2 := \mid\mid \sum_k \mathbf{A}_k^2 \mid\mid.
\end{equation*}
Then for all $t\geq 0$,
\vspace{0mm}
\begin{equation*}
    \mathbb{P}\left\{\lambda_\text{max}(\mathbf{Y}_n - \mathbb{E}\mathbf{Y}_n)\geq t\right\} \leq d\hspace{1mm}e^{\frac{-t^2}{8\sigma^2}}.
\end{equation*}

\thm[\cite{abbasi2011regret}] \label{Theorem B.1}
Let $(\mathcal{F}_t;k \geq 0)$ be a filtration, $(m_k; k \geq 0)$ be an $\mathbb{R}^d$ - valued stochastic process adapted to $(\mathcal{F}_k)$, $(f_k; k \geq 1)$ be a real-valued martingale difference process adapted to $(\mathcal{F}_k)$. Assume that $f_k$ is conditionally sub-Gaussian with constant $M$. Consider the martingale
\vspace{0mm}
\begin{equation*}
    \mathcal{T}_t = \sum_{k=1}^t f_k m_{k-1}
\end{equation*}
and the matrix-valued processes for $V > 0$
\vspace{0mm}
\begin{equation*}
    V_t = \sum_{k=1}^t m_{k-1}m_{k-1}^T, \hspace{4mm} \Bar{V}_t = V + V_t, \hspace{2mm} t \geq 0.
\end{equation*}
Then for any $0 < \delta < 1$, with probability $1-\delta$,
\vspace{0mm}
\begin{equation*}
    \mathcal{T}_t^\top \Bar{V}_t^{-1} \mathcal{T}_t \leq 2M^2 \log \left( \frac{\text{det}(\Bar{V}_t)^{1/2}\text{det}(V)^{-1/2}}{\delta}\right) \hspace{2mm} \forall t \geq 0.
\end{equation*}

\thm[\cite{rudelson2013hanson}]{(Hanson-Wright inequality)} \label{Theorem B.3 Hanson-Wright inequality} Let $X = (X_1,...,X_n) \in \mathbb{R}^n$ be a random vector with independent components $X_i$ which satisfy $\mathbb{E}[x_i] = 0$ and $||X_i||_{\psi_2} \leq k$ for all $i = 1,..,n$, where $||.||_{\psi_2} = \sup_{p \geq 1} p^{-1/2}(\mathbb{E}[.]^p)^{1/p}$ is the sub-Gaussian norm. Let $A$ be an $n \times n$ matrix. Then, for every $t \geq 0$,
\vspace{0mm}
\begin{equation*}
    \mathbb{P}\left\{ |X^\top AX - \mathbb{E}{X^\top AX}| > t \right\} \leq 2 \exp \left[ -c \min \left( \frac{t^2}{k^4 ||A||^2_\mathrm{F}}, \frac{t}{k^2||A||}\right) \right],
\end{equation*}
where $c$ is a positive absolute constant.

\subsection*{Perturbation bounds:}

\begin{lem} \label{Lemma B.13}
For $i \in \{1,\dots, n\}$ and $\hat{W}_i$, $W_i \in \mathbb{R}^{m_i \times m_{i+1}}$, if $\|W_i\|, \|\hat{W}_i\|$ are upper bounded by a constant, and if $\|W_i - \hat{W}_i\| = \mathcal{O}(f(\cdot))$, where $f(\cdot)$ is any arbitrary function, then it holds that $\| W_1 \cdots W_n -\hat{W}_1 \cdots \hat{W}_n \| = \mathcal{O}(f(\cdot))$.
\end{lem}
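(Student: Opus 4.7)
My plan is to establish Lemma B.13 by a standard telescoping argument on the product. The key algebraic identity I will exploit is
\begin{equation*}
W_1 \cdots W_n - \hat{W}_1 \cdots \hat{W}_n = \sum_{i=1}^{n} \hat{W}_1 \cdots \hat{W}_{i-1} \bigl(W_i - \hat{W}_i\bigr) W_{i+1} \cdots W_n,
\end{equation*}
with the usual conventions that the empty product equals the identity. This identity is verified directly by noting that consecutive terms in the sum cancel except for the first and last.

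From this decomposition, I would apply the triangle inequality and submultiplicativity of the spectral norm to obtain
\begin{equation*}
\bigl\| W_1 \cdots W_n - \hat{W}_1 \cdots \hat{W}_n \bigr\| \le \sum_{i=1}^{n} \Bigl(\prod_{j<i} \|\hat{W}_j\|\Bigr) \,\|W_i - \hat{W}_i\|\, \Bigl(\prod_{j>i} \|W_j\|\Bigr).
\end{equation*}
Let $C > 0$ be a uniform upper bound on $\|W_i\|$ and $\|\hat{W}_i\|$ (which exists by hypothesis). Then each product of norms above is at most $C^{n-1}$, so the right-hand side is upper bounded by $n\, C^{n-1} \max_{1 \le i \le n} \|W_i - \hat{W}_i\|$.

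Finally, since $n$ is a fixed finite integer and $C$ does not depend on the quantity controlling $f(\cdot)$, the prefactor $n C^{n-1}$ is an absolute constant. Combined with $\|W_i - \hat{W}_i\| = \mathcal{O}(f(\cdot))$ for each $i$, this immediately yields $\bigl\| W_1 \cdots W_n - \hat{W}_1 \cdots \hat{W}_n \bigr\| = \mathcal{O}(f(\cdot))$, as required. There is no genuine obstacle here; the only thing to be careful about is that the lemma is applied in the paper with $n$ bounded (e.g.\ at most $H$ or a small constant), so that the constant $n C^{n-1}$ can indeed be absorbed into the $\mathcal{O}$-notation.
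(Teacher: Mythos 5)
Your proof is correct and is essentially the same argument as the paper's: the paper proves the lemma by induction on $n$ using the two-term split $(W_1\cdots W_k - \hat{W}_1\cdots\hat{W}_k)W_{k+1} + \hat{W}_1\cdots\hat{W}_k(W_{k+1}-\hat{W}_{k+1})$, which when unrolled is exactly your telescoping identity. Your version is slightly more explicit about the final constant $nC^{n-1}$ and the need for $n$ to be bounded, but the underlying decomposition is identical.
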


\begin{proof}
The above result can be proved by induction. When $n=1$, it trivially holds. Assume that when $n=k$, we have $\| W_1 \cdots W_k -\hat{W}_1 \cdots \hat{W}_k \| = \mathcal{O}(f(\cdot))$. Then $\| W_1 \cdots W_k W_{k+1} -\hat{W}_1 \cdots \hat{W}_k \hat{W}_{k+1} \|  \leq \| (W_1 \cdots W_k - \hat{W}_1 \cdots \hat{W}_k) W_{k+1} +\hat{W}_1 \cdots \hat{W}_k (W_{k+1} -\hat{W}_{k+1}) \| = \mathcal{O}(f(\cdot))$, which proves the property for $n=k+1$ and thus proves the result by induction.
    
\end{proof}

\lem[\cite{mania2019certainty}] \label{Lemma: general matrix perturbation bound} Let $M$ be an arbitrary matrix in $\mathbb{R}^{n \times n}$ and let $\Xi(M)$ be as defined in Definition \ref{Definition: Phi(A)}. Then, for all $\tau \geq 1$ and real matrices $\Delta$ of appropriate dimensions we have
\vspace{0mm}
\begin{equation*}
    || (M + \Delta)^\tau || \leq \Xi(M) \left( \Xi(M) ||\Delta|| + \rho_M \right)^k,
\end{equation*}
where $\rho_M \geq \rho(M)$.

\thm[\cite{simchowitz2020naive}] \label{Theorem: perturbation bound on K} Let $(A,B)$ be a stabilizable system and $P$ be the solution of the DARE \eqref{eq: DARE to compute K} formulated on $(A,B)$. Given an alternate pair of matrices $(\hat{A}, \hat{B})$, let $\epsilon = \max\{||\hat{A} - A||, ||\hat{B} - B||\}$. Then if $\epsilon \leq \frac{1}{54||P||^5}$,
\begin{enumerate}
    \item $||P(\hat{A}, \hat{B})|| \leq 1.0835||P||$ and
    \item $||B\left(K(\hat{A}, \hat{B}) - K\right)|| < \frac{1}{5||P||^{3/2}}$.
\end{enumerate}

\subsection*{Bounds related to matrix functions:}

\lem[\cite{abbasi2011improved}]{(Regularised design matrix Lemma)} \label{Lemma B.4}
When the covariates $z_t z_t^\top$ satisfies $||z_t|| \leq D$, with some $D > 0$ with probability 1 then,
\vspace{0mm}
\begin{equation*}
    \log \frac{\text{det}(V_t)}{\text{det}(\lambda I)} \leq d \log \left( \frac{\lambda d + t D^2}{\lambda d} \right),
\end{equation*}
where $V_t = \lambda I + \sum_{i = 1}^t z_i z_i^\top$ for $z_i \in \mathbb{R}^d$.

\lem[\cite{simchowitz2020naive}] \label{Lemma B.11}
Let $\text{Toep}_{i,j,l}(X)$ and $\text{Col}_{i,j}(X)$ be as defined in \eqref{eq: Definition of Toep, Col} for $X \in \mathbb{R}^{m \times m}$. For any $i \leq j,l$, and for $Y \in \mathbb{R}^{m \times m}$, and $\text{diag}_{j-i}(Y)$ denoting a $j-i$ block matrix with blocks $Y$ on the diagonal, we have the bound
\vspace{0mm}
\begin{equation*}
    \text{Tr}\left(\text{Col}_{i,j}(X)^\top \text{diag}_{j-i}(Y)\text{Col}_{i,j}(X) \right) \leq \text{Tr}(\text{dlyap}(X,Y)).
\end{equation*}

The following result is a straightforward extension of Lemma \ref{Lemma B.11}.

\coro \label{Corollary B.11} Let $\text{Toep}_{i,j,l}(X)$ and $\text{Col}_{i,j}(X)$ be as defined in \eqref{eq: Definition of Toep, Col} for $X \in \mathbb{R}^{m \times m}$. For any $i \leq j,l$, and for $Y \in \mathbb{R}^{m \times m}$, and $\text{diag}_{j-i}(Y)$ denoting a $j-i$ block matrix with blocks $Y$ on the diagonal, we have the bound
\vspace{0mm}
\begin{equation*}
    \text{Col}_{i,j}(X)^\top \text{diag}_{j-i}(Y)\text{Col}_{i,j}(X)  \leq \text{dlyap}(X,Y).
\end{equation*}

\lem[\cite{simchowitz2020naive}] \label{Lemma B.12}  Let $\text{Toep}_{i,j,l}(X)$ and $\text{Col}_{i,j}(X)$ be as defined in \eqref{eq: Definition of Toep, Col} for $X \in \mathbb{R}^{m \times m}$. For any $i \leq j,l$, we have $||\text{Col}_{i,j}(X)|| \leq ||\text{Toep}_{i,j,l}(X)|| \leq ||X||_{\mathcal{H}_\infty}$.

\lem[\cite{shi2023suboptimality}] \label{Lemma: decay of closed loop matrix} Given $(A,B)$ a stabilizable pair, an alternative pair $(\hat{A}, \hat{B})$ satisfying $\max\{||\hat{A} - A||, ||\hat{B} - B||\} \leq \epsilon$, where $\epsilon < 1/(8||P||^2)$, and any positive integer $k$, we have 
\begin{equation*}
    ||(\hat{A} - K(\hat{A}, \hat{B})\hat{B})^k|| \leq \sqrt{\alpha_\epsilon} \sqrt{\beta_*}(\gamma(\epsilon))^k,
\end{equation*}
where $\gamma(\epsilon) \triangleq \sqrt{1 - \alpha_\epsilon^{-1}\beta_*^{-1}}$, $\beta_* = \sigma_\text{max}(P) / \sigma_\text{min}(Q)$ and $\alpha_\epsilon \triangleq (1 - 8||P||^2 \epsilon)^{-1/2}$, $P$ is the solution of the DARE \eqref{eq: DARE to compute K} formulated on $(A,B)$.

\end{document}